\documentclass[aps,showpacs,amsmath,amssymb,twocolumn,prx,longbibliography,superscriptaddress,notitlepage,floatfix,10pt]{revtex4-2}

\usepackage[dvips]{graphicx}
\usepackage{bm, bbm,ascmac,amsmath,amssymb,amsthm,mathrsfs,amsfonts,dsfont}
\usepackage{epsfig}
\usepackage{braket}
\usepackage{enumerate}
\usepackage[dvipsnames]{xcolor}
\usepackage{float}
\usepackage{algorithm}
\usepackage{algorithmic}
\usepackage{comment}
\usepackage{appendix}
\usepackage{here}
\usepackage{tabularx}
\usepackage{dcolumn}
\usepackage[caption=false]{subfig}
\usepackage{adjustbox}
\usepackage[hidelinks]{hyperref}
\usepackage{bookmark}
\usepackage{physics}
\usepackage{enumitem}
\usepackage{makecell}

\usepackage{scalerel}

\iftrue
  \usepackage{marginnote}
\fi

\theoremstyle{plain}
\newtheorem{theorem}{Theorem}
\newtheorem{lemma}{Lemma}
\newtheorem{corollary}{Corollary}

\theoremstyle{definition}
\newtheorem{definition}{Definition}

\begin{document}

\newfloat{protocol}{htbp}{idf}
\floatname{protocol}{Protocol~}

\newfloat{resource}{htbp}{idf}
\floatname{resource}{Resource~}

\newfloat{simulator}{htbp}{idf}
\floatname{simulator}{Simulator~}

\newfloat{problem}{htbp}{idf}
\floatname{problem}{Problem~}

\title{Verifiable blind probabilistic error cancellation}

\author{Bo Yang}
\email{Bo.Yang@lip6.fr}
\affiliation{LIP6, Sorbonne Université, CNRS, 4 place Jussieu, 75005 Paris, France}

\author{Elham Kashefi}
\email{Elham.Kashefi@lip6.fr}
\affiliation{LIP6, Sorbonne Université, CNRS, 4 place Jussieu, 75005 Paris, France}
\affiliation{School of Informatics, University of Edinburgh, 10 Crichton Street, EH8 9AB Edinburgh, United Kingdom}

\author{Harold Ollivier}
\email{harold.ollivier@inria.fr}
\affiliation{QAT, DIENS, École Normale Supérieure, PSL University, CNRS, INRIA, 45 rue d'Ulm, Paris 75005, France}

\begin{abstract} 
  Quantum error mitigation (QEM) is an essential tool for mitigating hardware noise without incurring space overhead.
  Yet, its reliability depends on modeling, calibration, and implementation, leaving end-to-end security on untrusted quantum hardware unresolved.
  We address this problem by introducing verifiable blind probabilistic error cancellation (VBPEC), the first secure verification protocol against a fully malicious adversary that integrates QEM.
  VBPEC brings probabilistic error cancellation (PEC), a widely studied QEM technique, within the scope of composable security by formalizing delegated mitigation as a cryptographic resource in the abstract cryptography framework.
  The protocol performs PEC with perfect blindness and an exponentially small security error.
  VBPEC retains the absence of quantum-space overhead from recent statistically-secure verified quantum computation protocols and from PEC.
  The only overhead takes the form of additional repetitions due to the QEM procedure.
  To achieve this, we extend trap-based verification from deterministic pass/fail checks to statistical tests that benefit from QEM and develop a new proof technique that integrates the corresponding additional deviation sources.
  Rather than merely tolerating honest noise below a fixed threshold, VBPEC actively cancels it, enabling correctly mitigated estimates to be accepted with high probability without compromising security.
  Our framework thus establishes an essential route towards secure, reliable, and practical delegated quantum computation on near-future quantum hardware: VBPEC fundamentally improves the practicality of verification.
\end{abstract}

\maketitle

\section{Introduction\label{sec:Introduction}}

Quantum error mitigation (QEM)~\cite{Endo2021Hybrid, Cai2023Quantum, Li2017Efficient, Temme2017Error, Endo2018Practical, Koczor2021Exponential, Huggins2021Virtual, Bravyi2021Mitigating, Yoshioka2022Generalized, Yang2022Efficient} has emerged as a foundational approach for handling the impact of hardware noise while retaining all or almost all of the available physical qubits for computation, thereby enabling useful computations before the fully fault-tolerant regime.
It has become central to recent real-device experiments~\cite{Kandala2019Error, Kim2023Evidence, Yoshioka2025Krylov, Abanin2025Observation} and is expected to remain an integral component of practical quantum computation as the devices scale up.
However, the reliability of such error-mitigated computations typically rests on hardware-specific assumptions, leaving their end-to-end correctness without a rigorous operational guarantee.

A prominent example is probabilistic error cancellation (PEC)~\cite{Temme2017Error, Endo2018Practical}, which suppresses estimation bias by applying the inverse of a noise channel by decomposing it using quasi-probabilities.
Even if PEC underlies several major QEM techniques~\cite{Piveteau2021Error, Suzuki2022Quantum, Mari2021Extending, Kim2023Evidence}, it requires accurate noise characterization and faithful implementation of the error cancellation operations sampled via the quasi-probability distribution.
While recent works have improved noise modeling and error bounding~\cite{VandenBerg2023Probabilistic, VandenBerg2024Techniques, Gupta2023Probabilistic, Govia2025Bounding, Chen2025Disambiguating}, stabilized the device noise through hardware-level control~\cite{Kim2025Error}, and developed model-agnostic execution strategies~\cite{Xie2026Noise}, these advances do not by themselves provide an end-to-end performance guarantee.
This limitation becomes particularly acute for cloud-based access where users may delegate their computation to an untrusted and actively malicious remote server.

To this end, cryptographic protocols for delegated ``Verifiable Blind Quantum Computation (VBQC)''~\cite{Fitzsimons2017Unconditionally, Leichtle2021Verifying, Kapourniotis2024Unifying, Kapourniotis2025Plugging, Yang2025Verifiable} offer a principled route to such guarantees without relying on computational hardness assumptions or on the server's trusted behavior and noise assumptions.
Built on measurement-based quantum computation (MBQC)~\cite{Gottesman1999Demonstrating, Raussendorf2001A, Knill2001A, Danos2007The, Briegel2009Measurement}, these protocols allow a client with single-qubit prepare-and-send capabilities to delegate a computation while preserving perfect blindness and achieving verifiability with exponentially small security error.
Moreover, within the abstract cryptography framework~\cite{Maurer2011Abstract}, they are composably secure: the security guarantees are preserved when the protocol is used as a component of a larger cryptographic system~\cite{Dunjko2014Composable, Kapourniotis2024Unifying}.
The recently proposed ``Verifiable Blind Observable Estimation (VBOE)'' protocol~\cite{Yang2025Verifiable} further extends this paradigm to expectation-value estimation, a computational primitive underlying a wide range of key quantum algorithms.

However, these guarantees do not yet extend to the error-mitigated execution pipelines required by noisy hardware---that is with gate-level noise.
Existing verification protocols can tolerate bounded deviations or reject unreliable executions, but they do not actively mitigate physical noise to improve the acceptance probability beyond some maximum tolerable circuit-level noise~\cite{Leichtle2021Verifying, Kapourniotis2024Unifying}.
On the other hand, QEM can improve the accuracy of estimates, but it is not formulated as a cryptographic resource and does not by itself protect the execution pipeline against adversarial behavior.
Indeed, QEM generally increases the attack surface available to a malicious player as it can not only deviate while executing rounds but also inaccurately report the noise map used by the client to devise its mitigation strategy.
This leaves a fundamental gap between practical noise-resilient quantum computation and its composably secure delegation.

We close this gap by introducing ``Verifiable Blind Probabilistic Error Cancellation (VBPEC)'', obtained by integrating PEC into VBOE.
Concretely, VBPEC enables a client to perform PEC blindly and verifiably on an untrusted quantum server, without relying on trusted hardware behavior or faithful implementation of the mitigation operations.

VBPEC can simultaneously achieve a small security error and a high probability of accepting correctly mitigated estimates.
When the actual deviation matches the reference PEC noise model, the acceptance probability converges exponentially to $1$ as the number of rounds increases. 
The same exponential convergence also persists under moderate model mismatch, demonstrating robustness against inaccuracies in the reference noise model.

Integrating PEC into VBOE is not a trivial extension.
We formalize the corresponding ideal functionality as the ``Secure Delegated Probabilistic Error Cancellation (SDPEC)'' resource, which captures the desired behavior of PEC under untrusted execution: the server learns only public information, and every accepted output is guaranteed to lie within the prescribed error of the ideal expectation value; and whenever the noise map used by the client to perform the error mitigation corresponds to that of the device, it is guaranteed to accept with high probability. 
We prove that VBPEC constructs this resource in the abstract cryptography framework, achieving composable security with perfect blindness and exponentially small security error.
This gives the first end-to-end cryptographic reliability guarantee for a QEM procedure, making PEC a composably secure primitive for delegated quantum computation.

These security guarantees and robustness improvements are enabled by three technical contributions.

First, we show that PEC can be implemented blindly within MBQC.
The key observation is that the quantum one-time pad (QOTP)~\cite{Knill2004Fault, Kern2005Quantum, Dankert2009Exact, Geller2013Efficient, Wallman2016Noise} used for blindness twirls any server deviation into an effective stochastic Pauli channel.
More precisely, the action of the Server can always be represented as a correct implementation of the unitary part of the protocol (i.e., performing the $CZ$ gates creating $G$ and the single-qubit rotations required to implement the measurements), the effective Pauli channel, and then the $\mathsf{X}$ basis used to drive the MBQC computations.
This implies that the harmful part of an effective Pauli deviation is its $\mathsf{Z}$ component, which acts precisely as a flip of the corresponding measurement outcome.
This allows the Pauli cancellation operation defined for PEC to be absorbed entirely into the client’s classical post-processing, thereby hiding the selected cancellation operation from the server along with the target computation.

Second, we extend the use of general trap patterns~\cite{Kapourniotis2024Unifying} from detecting malicious deviations to effectively verifying the noise model required for PEC.
We show that the probability distribution of the server’s effective stochastic Pauli deviation can be reconstructed from trap-output statistics via the Walsh--Hadamard transform.
By incorporating the Walsh--Hadamard transform into the client's post-processing of the test-round data, the client obtains unbiased estimators of the Pauli-deviation probabilities appearing in the PEC noise model.
These estimators allow VBPEC to quantify the mismatch between the server’s actual deviation and the reference PEC noise model, and henceforth to translate this mismatch into a bound on the residual estimation bias.
This extends conventional deterministic verification into a form of statistical conditioning, broadening the role of traps from deviation detection to end-to-end deviation benchmarking.

Third, we develop a proof technique for verification protocols with non-deterministic, estimator-valued test outcomes.
In VBPEC, due to the above test-round construction and the noise that is applied to it even in the honest scenario, they do not yield deterministic round-wise outcomes anymore.
Hence, proof techniques based on conventional round-wise pass/fail criteria~\cite{Leichtle2021Verifying} are no longer directly applicable.
We therefore prove security by simultaneously controlling the finite-shot fluctuations of the mitigated estimator for the target computation and of the trap-based noise-model estimators.
This yields a flexible security analysis for verification protocols whose correctness is formulated at the level of expectation values rather than deterministic classical outcomes.

\section{Preliminaries\label{sec:Preliminaries}}

\subsection{Observable estimation problems}

A generic observable estimation problem $\mathsf{C}$, in a class of computation $\mathfrak{C}$, consists of estimating $\operatorname{Tr}[\rho_{\mathsf{C}}O_{\mathsf{C}}]$, where $\rho_{\mathsf{C}}$ is a state on a Hilbert space $\mathcal{H}_{\mathsf{C}}$ and $O_{\mathsf{C}}\in\mathcal{L}(\mathcal{H}_{\mathsf{C}})$ is a Hermitian observable whose associated measurement is efficiently implementable within the computation class $\mathfrak{C}$.
Without loss of generality, we assume $-\mathbb{I}_{\mathsf{C}} \leq O_{\mathsf{C}} \leq \mathbb{I}_{\mathsf{C}}$, where $\mathbb{I}_{\mathsf{C}}$ is the identity operator on $\mathcal{H}_{\mathsf{C}}$.
Moreover, without loss of generality, we reduce the task to the estimation of a binary observable, whose implementation may require adding one ancillary qubit.
Then the two-outcome POVM $\{M_{+},M_{-}\}$, with $M_{\pm}=(\mathbb{I}_{\mathsf{C}}\pm O_{\mathsf{C}})/2$, defines a binary random variable with outcomes $\pm1$ whose expectation value is $\operatorname{Tr}[\rho_{\mathsf{C}}O_{\mathsf{C}}]$.
By Naimark dilation, this POVM can be realized as a projective binary measurement on an enlarged Hilbert space.
Absorbing this dilation and the final basis change into the computation, we denote the resulting output state by $\rho$ and the corresponding binary observable by $O = |+_{O}\rangle\langle+_{O}| - |-_{O}\rangle\langle-_{O}|$.
Later on, we adopt this observable estimation task to estimate $\operatorname{Tr}[\rho O] = \operatorname{Tr}[\rho_{\mathsf{C}}O_{\mathsf{C}}]$.

Based on this simplification, we first recall the general procedure for observable estimation.
For a binary observable $O$ with outcomes $\pm 1$, measuring $\rho$ in the eigenbasis of $O$ gives the $+1$ outcome with probability
\begin{equation}
  \begin{split}
    p
    = \operatorname{Tr}\left[\rho |+_{O}\rangle\langle+_{O}|\right]
    = \frac{1+\operatorname{Tr}\left[\rho O\right]}{2}.
  \end{split}
\end{equation}
We denote by $y_{i}\in\{-1,1\}$ the outcome of the $i$-th measurement, so that $\operatorname{Pr}[y_{i}=1]=p$ and $\operatorname{Pr}[y_{i}=-1]=1-p$, and hence $\mathbb{E}[y_{i}]=\operatorname{Tr}\left[\rho O\right]$.
The empirical estimator from $N_{c}$ samples is
\begin{equation}
  \mu = \frac{1}{N_{c}}\sum_{i=1}^{N_{c}} y_{i}.
\end{equation}
By Hoeffding's inequality,
\begin{equation}\label{eq:hoeffding_observable_estimation}
  \operatorname{Pr}\left[|\mu-\operatorname{Tr}\left[\rho O\right]|\geq\epsilon\right]
  \leq 2\exp\left(-\frac{\epsilon^{2}}{2}
  N_{c}\right).
\end{equation}
It states that for a fixed $\epsilon$, the probability of the estimator being further away than $\epsilon$ from the true value $\operatorname{Tr}\left[\rho O\right]$ is negligible in $N_{c}$, the number of collected samples.
This motivates the following definition:
\begin{definition}[$\left(\epsilon, \delta\right)$-Observable Estimation]
  Given an observable $O$, a reference state $\rho$, a protocol $(\epsilon, \delta)$-estimates $\operatorname{Tr}\left[\rho O\right]$ if the protocol outputs an estimator $o$ that satisfies
  \begin{equation}\label{eq:estimation_epsilon_delta}
    \operatorname{Pr}\left[|o - \operatorname{Tr}\left[\rho O\right]| \geq \epsilon\right]\leq \delta.
  \end{equation}
  Here, $\epsilon>0$ is the allowed estimation error and $\delta>0$ is an upper bound on the probability of obtaining an estimate outside this error tolerance $\epsilon$.
\end{definition}

\subsection{Probabilistic error cancellation (PEC)\label{sec:pec}}

Probabilistic error cancellation (PEC)~\cite{Temme2017Error, Endo2018Practical} is a QEM technique that estimates the ideal expectation value of an observable by virtually implementing the inverse of a reference noise channel via a quantum-classical hybrid procedure.
Its performance critically relies on the accuracy of the reference noise model, which has motivated the development of various noise-learning methods for characterizing device noise~\cite{VandenBerg2023Probabilistic, VandenBerg2024Techniques, Gupta2023Probabilistic, Govia2025Bounding, Chen2025Disambiguating, Kim2025Error}.
Since our aim here is to securely perform PEC, we focus on the noise-inversion step and treat the reference CPTP map $\mathcal{E}_{\mathrm{PEC}}$ as given, independently of how it is obtained.
In fact, our protocol will eliminate the need to trust the noise model and instead assess whether deviations occurring during the delegated computation correspond to that of the given $\mathcal{E}_{\mathrm{PEC}}$ or to a possibly malicious behavior.

\begin{resource*}[htbp]
  \caption{\raggedright Blind Delegated Probabilistic Error Cancellation (BDPEC)}
  \label{resource:bdpec}
  \begin{algorithmic}[0]
    \STATE \textbf{Public Information:}
    Public information $(\mathfrak{C}, G, f_{G}, N_{c}, \mathcal{E}_{\mathrm{PEC}})$, where $\mathcal{E}_{\mathrm{PEC}}$ is a classical description of the PEC noise model.

    \STATE \textbf{Client's interface:}
    The target computation $\mathsf{C}\in \mathfrak{C}$ and its associated measurement angles $\{\phi_{v}\}_{v\in V}$ to produce the state $\rho$ to be measured by $O$.

    \STATE \textbf{Server's interface:}
    \begin{enumerate}
      \item The interface is filtered so that when $e=0$, the interface does not send any information nor take inputs.
      \item For $e=1$, the Resource receives a quantum state $\sigma$ and $F$, a list of instructions so that the resource produces $s \in \mathbb R$.
    \end{enumerate}

    \STATE \textbf{Processing by the Resource:}
    \begin{enumerate}
      \item If $e=0$, it sets $\displaystyle o = \frac{1}{N_{c}} \sum_{i = 1}^{N_{c}}\left\|\vec{q}\right\|_{1}\operatorname{sign}\left(\vec{q}(\mathsf{P}_{i})\right)y_{i}$ with $y_{i}=2Y_{i}-1$ where $Y_{i} \sim \mathcal{B}(p)$, sampled from a Bernoulli distribution $\mathcal{B}(p)$ with $p = \operatorname{Tr}\left[\left(\mathsf{P}_{i}\circ\mathcal{E}_{\mathrm{PEC}}\right)\left(\rho\right) |+_{O}\rangle\langle+_{O}|\right]$, where $\mathsf{P}_{i}\in\{\mathsf{I},\mathsf{Z}\}^{|V|}$ is sampled with probability $|\vec{q}(\mathsf{P}_{i})|/\left\|\vec{q}\right\|_{1}$.
      \item If $e=1$, it computes $o$ using the transmitted state $\sigma$ and $F$.
      \item It forwards $o$ to the Client.
    \end{enumerate}
  \end{algorithmic}
\end{resource*}

Since the inverse map $\mathcal{E}_{\mathrm{PEC}}^{-1}$ is generally not CPTP, it cannot be implemented directly as a physical quantum operation in general.
PEC circumvents this obstruction by expressing $\mathcal{E}_{\mathrm{PEC}}^{-1}$ as a linear combination of implementable CPTP maps.
Namely, one chooses a set of CPTP maps $\{\mathcal{F}_{l}\}_{l=1}^{L}$ and real coefficients $\{q_{l}\}_{l=1}^{L}$ as a quasi-probability distribution, such that
\begin{equation}
  \begin{split}
    \mathcal{E}_{\mathrm{PEC}}^{-1}
    = \sum_{l=1}^{L} q_{l}\mathcal{F}_{l}.
  \end{split}
\end{equation}
This converts the expectation value of an observable $O$ for a state $\rho$ to the following form,
\begin{equation}
  \begin{split}
    \operatorname{Tr}\left[\rho O\right]
     & = \operatorname{Tr}\left[\left(\mathcal{E}_{\mathrm{PEC}}^{-1} \circ \mathcal{E}_{\mathrm{PEC}}\right) \left(\rho\right) O\right]                                                                                                            \\
     & = \sum_{l=1}^{L} q_{l} \operatorname{Tr}\left[\left(\mathcal{F}_{l} \circ \mathcal{E}_{\mathrm{PEC}}\right) \left(\rho\right) O\right]                                                                                                       \\
     & = \sum_{l=1}^{L} \left\|\vec{q}\right\|_{1}\operatorname{sign}\left(q_{l}\right) \frac{|q_{l}|}{\left\|\vec{q}\right\|_{1}} \operatorname{Tr}\left[\left(\mathcal{F}_{l} \circ \mathcal{E}_{\mathrm{PEC}}\right) \left(\rho\right) O\right],
  \end{split} \label{eq:pec}
\end{equation}
where $\displaystyle \left\|\vec{q}\right\|_{1} = \sum_{l=1}^{L} \left|q_{l}\right|$ is the scaling factor so that $\{|q_{l}|/\left\|\vec{q}\right\|_{1}\}_{l=1}^{L}$ becomes a valid probability distribution.

Suppose that we run $N$ samples during PEC.
For the $i$-th circuit execution, the PEC procedure follows first by choosing $\mathcal{F}_{l_{i}}$ from the Monte-Carlo sampling according to the probability $|q_{l_{i}}|/\left\|\vec{q}\right\|_{1}$.
The cancellation operation $\mathcal{F}_{l_{i}}$ is then inserted into the original circuit, yielding a measurement outcome $\tilde{y}_{i}\in\{-1,1\}$.
One estimates $\operatorname{Tr}\left[\left(\mathcal{F}_{l_{i}} \circ \mathcal{E}_{\mathrm{PEC}}\left(\rho\right)\right) O\right]$ by averaging over the outcomes of the rounds satisfying $l_i = l$.
The PEC-ed estimator of $\operatorname{Tr}\left[\rho O\right]$ is then obtained by averaging the whole measurement results with the weight $\left\|\vec{q}\right\|_{1} \operatorname{sign}\left(q_{l_{i}}\right)$ through classical post-processing.
Thus, the PEC-ed estimator for an $N$-round protocol is obtained by
\begin{equation}
  \begin{split}
    \frac{1}{N}\sum_{i=1}^{N} \left\|\vec{q}\right\|_{1}\operatorname{sign}\left(q_{l_{i}}\right) \tilde{y}_{i}. \label{eq:pec_classical_processing}
  \end{split}
\end{equation}
The schematic illustration of PEC is depicted in Fig.~\ref{fig:pec}.
\begin{figure}[htbp]
  \centering
  \includegraphics[width=\linewidth]{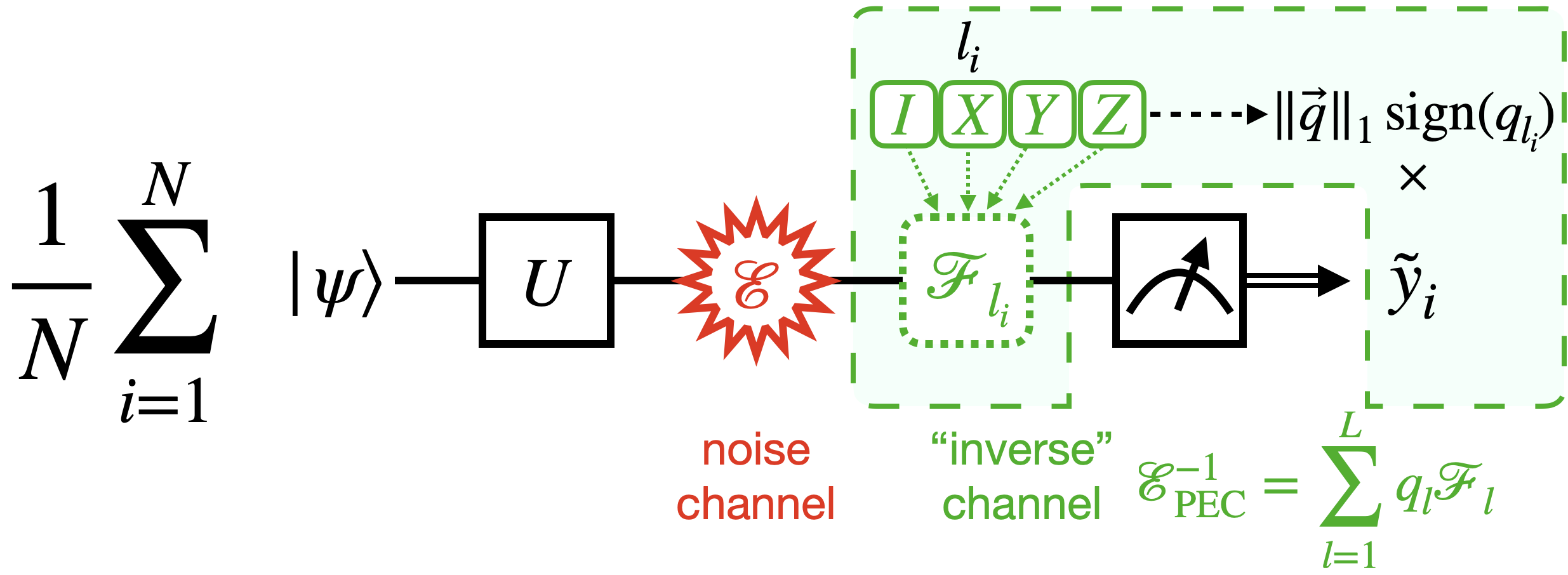}
  \caption{
    Schematic illustration of the PEC process for a single-qubit operation affected by a stochastic Pauli noise $\mathcal{E}_{\mathrm{PEC}}$.
    For each sample $i\in[N]$, the cancellation operation is chosen from $\mathcal{F}_{l_{i}}\in\{\mathsf{I},\mathsf{X},\mathsf{Y},\mathsf{Z}\}$ via the Monte-Carlo sampling.
    More generally, PEC can be implemented locally by inserting noise-cancellation operators at selected locations, based on the corresponding local noise models.
    After completing the total $N$ rounds, the classical post-processing following Eq.~\eqref{eq:pec_classical_processing} yields the mitigated estimate by PEC.
  }
  \label{fig:pec}
\end{figure}

For ease of presentation, we assume throughout the paper that the reference $\mathcal{E}_{\mathrm{PEC}}$ is a stochastic Pauli channel and is invertible in the Pauli transfer representation.
This ensures that the quasi-probability vector $\vec{q}$ is uniquely determined by the noise model.
Note that in the protocols introduced below, any deviation from the server is effectively transformed by QOTP into a stochastic Pauli channel, i.e., a convex combination of Pauli operators, and the assumption of an invertible Pauli transfer matrix can be easily removed by specifying $\vec{q}$ directly, together with $\mathcal{E}_{\mathrm{PEC}}$.

\section{Universal blind probabilistic error cancellation (UBPEC)\label{sec:ubpec}}

We first make PEC blind by incorporating it with the ``Universal Blind Quantum Computation (UBQC)'' protocol~\cite{Broadbent2009Universal}, which relies on the measurement-based quantum computation (MBQC) model~\cite{Gottesman1999Demonstrating, Raussendorf2001A, Knill2001A, Danos2007The, Briegel2009Measurement}.
In UBQC, the Client can prepare and sequentially send single qubits to the Server.
In a delegation scenario, all operations performed by the Server are untrusted: in particular, the Server may introduce arbitrary deviations.
On the Client side, we assume only that possible preparation imperfections can be modeled as a single deviation independent of secret parameters used by the protocol.
This allows us, without loss of generality, to model the Client as ideal and to absorb the imperfections into the deviation acting outside the Client.

\begin{protocol*}[htbp]
  \caption{\raggedright Universal Blind Probabilistic Error Cancellation (UBPEC) --- every single round}
  \label{protocol:ubpec_single}
  \begin{algorithmic}[0]
    \STATE \textbf{Public Information:}
    Public information $(\mathfrak{C}, G, f_{G}, \mathcal{E}_{\mathrm{PEC}})$.
    \STATE \textbf{Inputs from Client:}
    The target computation $\mathsf{C}\in \mathfrak{C}$ that estimates $\operatorname{Tr}\left[\rho O\right]$ and the error cancellation operation $\mathsf{P}\in\{\mathsf{I},\mathsf{Z}\}^{|V|}$.
    
    \STATE \textbf{Protocol:}
    \begin{enumerate}

      \item The Client generates secret parameters:
            \begin{enumerate}
              \item ($\mathsf{X}$ randomization) The Client chooses a random bit $\mathrm{a}_{v}^{\mathrm{init}}\in\{0,1\}$ for $v\in \mathrm{I}$, the set of input qubits, and sets $\mathrm{a}_{v}^{\mathrm{init}}=0$ for $v\in V\setminus \mathrm{I}$.
                    The Client also computes $\displaystyle \mathrm{a}_{v}^{\mathrm{prop}} = \bigoplus_{j\in N_{G}(v)}\mathrm{a}_{j}^{\mathrm{init}} \in\{0,1\}$ for all $v\in V$.
              \item ($\mathsf{Z}$ randomization) The Client chooses a random bit $\mathrm{r}_{v}\in\{0,1\}$ for all $v\in V$.
              \item (randomization for blindness) The Client chooses a random $\theta_{v}\in\Theta$ for all $v \in V$.
              \item ($\mathsf{Z}$-flip for PEC) For every $v\in V$, the Client sets $\mathrm{z}_{v}=1$ if $\mathsf{P}(v) = \mathsf{Z}$, or $\mathrm{z}_{v}=0$ otherwise.
            \end{enumerate}

      \item The Client prepares and sends to the Server all single qubits corresponding to $v\in V$.
            For $v\in \mathrm{I}$, the Client sequentially sends each qubit in $\displaystyle \left(\bigotimes_{v\in\mathrm{I}}\mathsf{Rz}_{v}(\theta_{v})\mathsf{X}_{v}^{\mathrm{a}_{v}^{\mathrm{init}}}\right)\left[\rho_{\mathrm{init}}\right]$.
            For $v\in V\setminus\mathrm{I}$, the Client sends $\displaystyle |+_{\theta_{v}}\rangle$.

      \item The Server applies the entangling operation $\displaystyle \mathsf{G}=\prod_{(u,v)\in E}\mathsf{CZ}_{u,v}$ to prepare the resource state according to the graph $G$.

      \item For each $v\in V$, the Client and Server interactively perform the MBQC process by exchanging classical messages specifying measurement angles and reporting measurement outcomes.
            Once the Client receives the measurement outcome $\mathrm{b}_{j}\in\{0,1\}$ for all $j\in S_{X,v}\cup S_{Z,v}$, where $S_{X,v} = f_{G}^{-1}\left(v\right), S_{Z,v} = \left\{j~|~v\in N_{G}\left(f_{G}\left(j\right)\right)\right\}$, the Client computes the adaptive angle update $\phi_{v}^{\prime}$.
            The Client then computes the measurement angle $\delta_{v}$:
            \begin{equation}\label{eq:delta_UBPEC}
              \begin{split}
                s_{X,v} = \bigoplus_{j \in S_{X,v}} \mathrm{b}_{j} \oplus \mathrm{r}_{j} \oplus \mathrm{z}_{j}, &
                \quad s_{Z,v} = \bigoplus_{j \in S_{Z,v}} \mathrm{b}_{j} \oplus \mathrm{r}_{j} \oplus \mathrm{z}_{j}, \\
                \phi_{v}^{\prime} = (-1)^{s_{X,v}} \phi_{v}+s_{Z,v} \pi,                                        &
                \quad \delta_{v} = \left(-1\right)^{\mathrm{a}_{v}^{\mathrm{init}}} \phi_{v}^{\prime} + \theta_{v}+\left(\mathrm{r}_{v} + \mathrm{a}_{v}^{\mathrm{prop}}\right) \pi.
              \end{split}
            \end{equation}
            Note that $s_{X,v} = s_{Z,v} = 0$ for $v\in \mathrm{I}$.
            The Client sends to the Server the angle $\delta_{v}$ and the Server returns to the Client a bit $\mathrm{b}_{v}\in\{0,1\}$ as a measurement result of qubit $v$ with basis $\{|+_{\delta_{v}}\rangle, |-_{\delta_{v}}\rangle\}$.

      \item The Client then outputs $\tilde{y}\in\{-1,1\}$ corresponding to the decoded measurement outcome for the output qubit.
    \end{enumerate}
  \end{algorithmic}
\end{protocol*}

\begin{protocol*}[htbp]
  \caption{\raggedright Universal Blind Probabilistic Error Cancellation (UBPEC) --- the whole protocol}
  \label{protocol:ubpec_whole}
  \begin{algorithmic}[0]
    \STATE \textbf{Public Information:}
    Public information $(\mathfrak{C}, G, f_{G}, N_{c}, \mathcal{E}_{\mathrm{PEC}})$.
    \STATE \textbf{Inputs from Client:}
    The target computation $\mathsf{C}\in \mathfrak{C}$ that estimates $\operatorname{Tr}\left[\rho O\right]$.
    \STATE \textbf{Protocol:}
    \begin{enumerate}
      \item The Client chooses the total number of rounds $N_{c}$.
      \item For every round indexed by $i \in [N_{c}]$, the Client chooses $\mathsf{P}_{i}\in\{\mathsf{I},\mathsf{Z}\}^{|V|}$ with probability $|\vec{q}(\mathsf{P}_{i})|/\left\|\vec{q}\right\|_{1}$, and delegates the target computation via the UBPEC protocol (Protocol~\ref{protocol:ubpec_single}), with inputs $\mathsf{C}$ and $\mathsf{P}_{i}$.
      \item Upon receiving and decoding the result of the computation round $i \in [N_{c}]$, the Client assigns the result to $\tilde{y}_{i}$.
            The Client then sets $\displaystyle \tilde{o} = \frac{1}{N_{c}}\sum_{i\in [N_{c}]} \left\|\vec{q}\right\|_{1}\operatorname{sign}\left(\vec{q}(\mathsf{P}_{i})\right)\tilde{y}_{i}$ and returns $\tilde{o}$ as the final result.
    \end{enumerate}
  \end{algorithmic}
\end{protocol*}

Building on UBQC, we introduce the ``Universal Blind Observable Estimation (UBOE)'' protocol by adding classical post-processing and using UBQC as a subroutine.
Namely, we let the UBOE protocol perform multiple rounds of UBQC and average the results to estimate the observable $O$ for a reference state $\rho$.
Since this averaging process is performed on the Client's side and UBQC is perfectly blind and composably secure, the UBOE protocol is also perfectly blind and composably secure.
A detailed explanation of MBQC, UBOE, and its security analysis within abstract cryptography can be found in Appendix~\ref{appendix:bdqc}.
Below, we integrate PEC with the UBOE protocol and show that it is also composably secure with respect to blindness.

We first specify the behavior of the ideal ``Blind Delegated Probabilistic Error Cancellation (BDPEC)'' resource which the protocol aims to construct.
With an honest Server accessing the filtered interface (i.e., $e=0$), the ideal resource should on the client's interface the unbiased, error-mitigated estimator of the ideal expectation value $\operatorname{Tr}[\rho O]$ as one would obtain running PEC using the noise map $\mathcal{E}_{\mathrm{PEC}}$
Otherwise, for a malicious Server accessing its unfiltered interface (i.e., $e=1$), it can provide an arbitrary deviation to the ideal functionality which outputs the correspondingly affected biased estimator.
The full behavior is formally described in Resource~\ref{resource:bdpec}.

In order to construct the BDPEC resource, note that PEC can be naturally applied to UBQC via the Client's classical processing, in a manner similar to the Pauli frame~\cite{Fowler2012Surface, Suzuki2022Quantum} and Clifford propagation~\cite{Scheiber2025Reduced}.
Indeed, the quantum one-time pad (QOTP) used in UBQC twirls any Server's deviation into a stochastic Pauli channel. More precisely, the action of the Server can always be represented as a correct implementation of the unitary part of the protocol (i.e., performing the $CZ$ gates creating $G$ and the single-qubit rotations required to implement the measurements), the effective Pauli channel, and then the $\mathsf{X}$ basis used to drive the MBQC computations.
As such, the effective noise model $\mathcal{E}_{\mathrm{PEC}}$ can be reduced to having only $\mathsf{I}$ and $\mathsf{Z}$ terms. This is because $\mathsf{X}$ errors are harmless when they occur just before $\mathsf{X}$ measurements. As a consequence, the inversion operators $\mathcal{F}_l$ of Eq.~\ref{eq:pec} can be chosen as tensor products of $\mathsf{I}$ and $\mathsf{Z}$ only. In turn, this can be implemented either by adding a $\pi$ rotation to the measurement angle or, equivalently, by performing a classical bit flip on the measurement result received from the Server.

To describe the Server's deviation and the PEC noise model, we introduce some notations that will be used throughout the analysis that follows.
Let $\Omega_{V} = \{0,1\}^{|V|}$.
We then let $\vec{p}\in\mathbb{R}_{\geq0}^{|\Omega_{V}|}$ denote the coefficients of the convex combination of $\mathsf{Z}$-Pauli strings describing the Server's deviation empirically averaged over the $N_{c}$ rounds of the protocol, and let $\vec{q}\in\mathbb{R}^{|\Omega_{V}|}$ denote the quasi-probability distribution of the inverse of the noise model for PEC, that is, 
\begin{equation}
  \begin{split}
    \mathcal{E}_{\mathrm{PEC}}^{-1} = \sum_{\mathsf{P}\in\{\mathsf{I}, \mathsf{Z}\}^{|V|}}\vec{q}(\mathsf{P})\mathsf{P}(\cdot)\mathsf{P},
  \end{split}
\end{equation}
where $\vec{q}(\mathsf{P})$ are the elements of $\vec{q}$ indexed by $\mathsf{P}\in\{\mathsf{I}, \mathsf{Z}\}^{|V|}$.
We also let $\mathsf{P}(v)\in\{\mathsf{I}, \mathsf{Z}\}$ denote the Pauli operation of $\mathsf{P}$ at qubit $v$.

Without loss of generality, we identify $\{\mathsf{I}, \mathsf{Z}\}^{|V|}$ and $\Omega_{V}$ by matching $\mathsf{I}$ with $0$ and $\mathsf{Z}$ with $1$.
We also define the probability distribution of the noise model for PEC as $\vec{p}_{\mathrm{PEC}}\in\mathbb{R}_{\geq0}^{|\Omega_{V}|}$, and its support as
\begin{equation}
  \begin{split}
    \Lambda:=\mathrm{supp}(\vec{p}_{\mathrm{PEC}})
    = \left\{\mathsf{P} \in\{\mathsf{I}, \mathsf{Z}\}^{|V|}~\Big|~\vec{p}_{\mathrm{PEC}}(\mathsf{P}) > 0\right\}.
  \end{split}
\end{equation}

Based on the above, we design the ``Universal Blind Probabilistic Error Cancellation (UBPEC)'' protocol that takes $N_{c}$ rounds in total.

  The whole protocol follows the procedure in Protocol~\ref{protocol:ubpec_whole}, where the Client additionally chooses $\mathsf{P}_{i}\in\{\mathsf{I},\mathsf{Z}\}^{|V|}$ with probability $|\vec{q}(\mathsf{P})|/\left\|\vec{q}\right\|_{1}$ for the error cancellation operation of each single round $i\in[N_{c}]$.
  Taking this error cancellation operation, each single round of UBPEC then follows the procedure in Protocol~\ref{protocol:ubpec_single}, which applies $\mathsf{Z}$-flip for PEC via Client's classical processing.

With $\tilde{y}_{i}\in\{-1,1\}$ being the output of the $i$-th round, the final estimator is computed by

  \begin{equation}
    \begin{split}
      \tilde{o} = \frac{1}{N_{c}}\sum_{i=1}^{N_{c}} \left\|\vec{q}\right\|_{1}\operatorname{sign}\left(\vec{q}(\mathsf{P}_{i})\right) \tilde{y}_{i},
    \end{split}
  \end{equation}

where $\operatorname{sign}\left(\vec{q}(\mathsf{P}_{i})\right)$ is the sign of the quasi-probability $\vec{q}(\mathsf{P}_{i})$ for the $i$-th round.
The schematic illustration of a single round of UBPEC is depicted in Fig.~\ref{fig:ubpec}.

\begin{figure}
  \centering
  \includegraphics[width=0.9\linewidth]{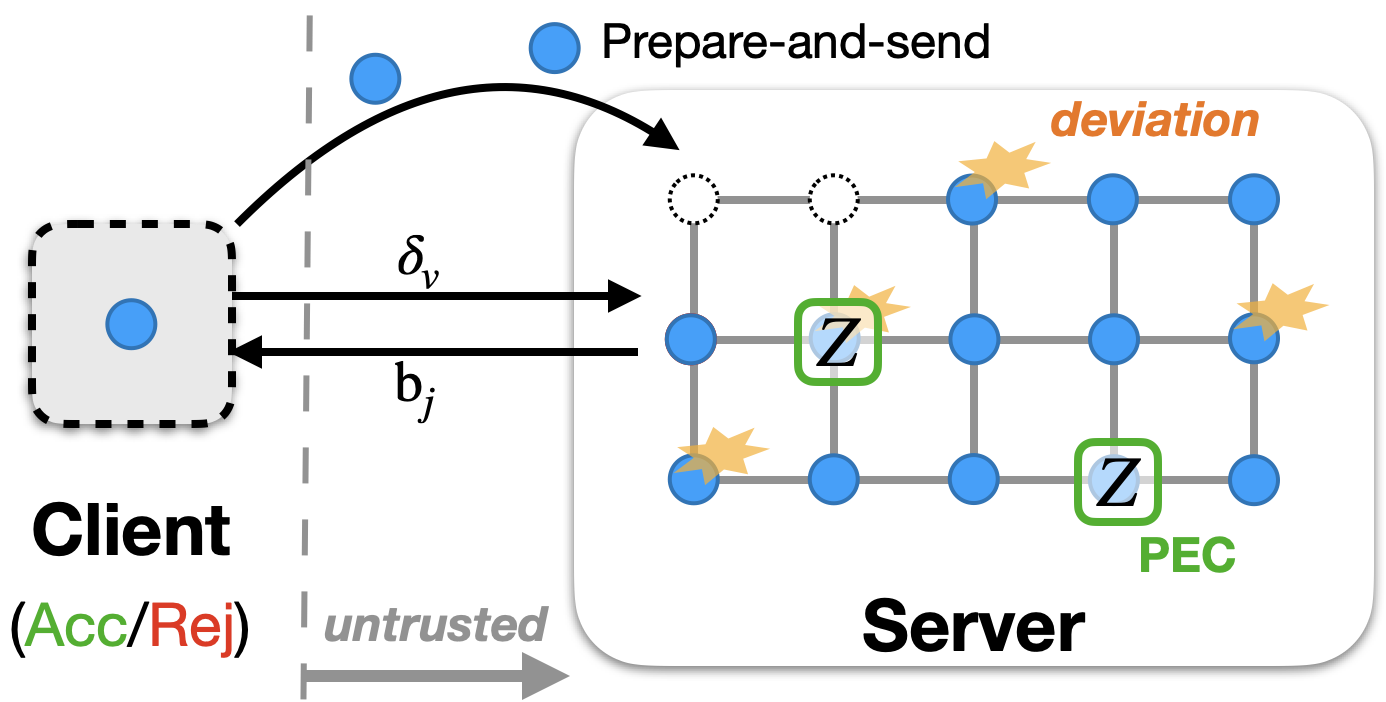}
  \caption{
    Schematic illustration of a single round of the UBPEC protocol.
    All quantum systems outside the client's trusted domain are considered untrusted, meaning deviations in this region are not assumed to follow any particular structure or noise model.
  }
  \label{fig:ubpec}
\end{figure}

We prove the security of UBPEC within the abstract cryptography framework by establishing indistinguishability between the behavior of the ideal resource attached to a simulator on its malicious interfaces and the actual protocol.
Namely, we show that for any distinguisher, UBPEC is perfectly indistinguishable from the BDPEC resource augmented by a simulator that has only access to the prescribed public leakage and the server's interface of the BDPEC.
The formal conditions and procedures for proving security are given in Definitions~\ref{definition:Statistical_Indistinguishability_of_Resources} and~\ref{definition:Construction_of_resources}, respectively, in Appendix~\ref{appendix:ac}.

To understand intuitively the security proof, one can rely on that of UBOE (Protocol~\ref{protocol:uboe_whole}).
Indeed, reordering and grouping the rounds of UBPEC for each choice of $\mathsf{P}$, one realizes that it amounts to applying successively UBOE for a computation that is altered by the corresponding $\mathsf{P}$, and combining the results using the quasi-probability distribution $\vec{q}(\mathsf{P})$.
The composability of Protocol~\ref{protocol:uboe_whole} then gives the result.
The above reordering is made possible only because all altered computations belong to the same class that is protected by blindness, i.e., the bit flips are masked perfectly by $\mathrm{r}_{v}\in\{0,1\}$ for $\mathsf{Z}$ randomization of QOTP.
This implies that the distinguisher cannot tell apart the transcripts of BDPEC together with a simulator and UBPEC.
\begin{theorem}[Security of UBPEC]\label{theorem:UBPEC}
  The UBPEC protocol perfectly constructs the BDPEC resource, leaking only public information $\left(\mathfrak{C}, G, f_{G}, N_{c}, \mathcal{E}_{\mathrm{PEC}}\right)$.
\end{theorem}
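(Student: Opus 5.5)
We follow the abstract cryptography construction (Definition~\ref{definition:Construction_of_resources}), which requires two things. The first is correctness: with an honest Server and the filtered interface, the protocol's output distribution equals that of BDPEC. The second is security: for a malicious Server, there is a simulator $\sigma$ acting only on the Server's interface of BDPEC with $\pi_{C}\,\mathcal{R}_{\mathrm{UBPEC}} = \mathrm{BDPEC}\,\sigma$ exactly, i.e.\ with distinguishing advantage zero. We take the security of UBOE (Protocol~\ref{protocol:uboe_whole}, Appendix~\ref{appendix:bdqc}) as given and reduce UBPEC to it.

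\textbf{Correctness.} Fix a round $i$ with cancellation string $\mathsf{P}_i$. Setting $\mathrm{z}_v=1$ enters Eq.~\eqref{eq:delta_UBPEC} exactly as a flip of the decoded outcome $\mathrm{b}_v$. In MBQC, a flip of the decoded outcome of qubit $v$ is equivalent to applying $\mathsf{Z}_v$ just before the $\mathsf{X}$-basis measurement of $v$, with the flip propagated through $S_{X},S_{Z}$ to all dependent angles. So the round is an honest UBQC execution of the computation $\mathsf{C}$ altered by $\mathsf{P}_i$. The honest, possibly noisy, Server's action is QOTP-twirled into $\mathcal{E}_{\mathrm{PEC}}$; this holds in the honest model by definition of the reference channel, since $\mathsf{X}$ components act trivially before $\mathsf{X}$ measurements. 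Hence $\tilde y_i$ is distributed as $2\,\mathcal{B}(p)-1$ with $p=\operatorname{Tr}[(\mathsf{P}_i\circ\mathcal{E}_{\mathrm{PEC}})(\rho)|+_O\rangle\langle+_O|]$. Because the $\mathsf{P}_i$ are sampled i.i.d.\ from $|\vec q|/\|\vec q\|_1$ and the post-processing is identical, $\tilde o$ has exactly the law of $o$ in BDPEC with $e=0$.

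\textbf{Blindness / simulator.} The simulator runs, for each of the $N_c$ rounds, the UBQC simulator for the public graph $G$ and flow $f_G$. It sends qubits $|+_{\theta_v}\rangle$ and, on the input, half of EPR-like pairs as in the UBOE proof. Its angles $\delta_v$ are uniformly random in $\Theta$. It then forwards the Server's resulting state and instructions $(\sigma,F)$ to the BDPEC interface with $e=1$, so that the resource outputs the estimator the deviated protocol would produce. To show the Server's view is identical, we note that for every $v$ the parameter $\mathrm{z}_v$ appears in $\delta_v$ only through $s_{X,v},s_{Z,v}$ in the combination $\mathrm{b}_j\oplus\mathrm{r}_j\oplus\mathrm{z}_j$.

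Substituting $\mathrm{r}'_j=\mathrm{r}_j\oplus\mathrm{z}_j$, which is uniform and independent of everything the Server sees, leaves the joint distribution of the $\delta_v$'s and the qubit states invariant. The only exception is the term $\mathrm{r}_v\pi$ in $\delta_v$, which is itself absorbed by $\theta_v$ uniform in $\Theta$, exactly as in UBQC. Thus, conditioned on $\mathsf{P}_i$, each round's transcript equals that of UBQC for a computation in $\mathfrak{C}$, and that transcript is independent of both $\mathsf{C}$ and $\mathsf{P}_i$. The rounds use independent secrets, so the joint view over all $N_c$ rounds is the product of independent uniform transcripts. This matches the simulator, which knows only $(\mathfrak{C},G,f_G,N_c,\mathcal{E}_{\mathrm{PEC}})$. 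The Client's output is a deterministic classical function, namely the weighted average, of per-round decoded outputs. By the same regrouping-by-$\mathsf{P}$ argument used for UBOE, the resource reproduces that output from $(\sigma,F)$, so the distinguisher's full view, including the Client's output, is identical in both worlds.

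\textbf{Main obstacle.} The delicate step is the precise re-randomization argument $\mathrm{r}_j\mapsto\mathrm{r}_j\oplus\mathrm{z}_j$. We must check that $\mathrm{r}_j$ enters both $\delta_j$ (via $\mathrm{r}_j\pi$) and the corrections of later qubits. The substitution must therefore be accompanied by a compensating relabeling of $\theta_j$ by $\pi\mathrm{z}_j$, or equivalently of the reported bit. This is what makes the effect of $\mathsf{P}$ equal to a physical $\mathsf{Z}$ on qubit $j$. We would first verify the substitution on a single qubit $j$ with its dependents, and then argue by induction along the flow order. Once this is done, the claim follows from the perfect composable blindness of UBOE.
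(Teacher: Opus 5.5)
Your reduction is the paper's: the $\mathrm{z}_v$ flips are masked by the one-time-pad bits $\mathrm{r}_v$, so each round is a UBQC execution of a $\mathsf{P}_i$-altered computation in the same class $\mathfrak{C}$. The theorem then follows from the perfect composable security of UBOE plus purely classical Client-side post-processing, and your correctness paragraph is fine.

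\textbf{The gap is in the explicit simulator.} Sending prepared states $|+_{\theta_v}\rangle$, with $\delta_v$ drawn independently of $\theta_v$, does reproduce the Server's marginal view. But Definition~\ref{definition:Construction_of_resources} requires the \emph{joint} law of the Server-side view and the Client's output to coincide, and your simulator leaves BDPEC nothing from which to build a correlated output. Generically no $\mathrm{r}_v\in\{0,1\}$ satisfies $\delta_v=(-1)^{\mathrm{a}_v^{\mathrm{init}}}\phi'_v+\theta_v+(\mathrm{r}_v+\mathrm{a}_v^{\mathrm{prop}})\pi$, so the simulated transcript is not that of any real execution. Moreover, the simulator holds no ``resulting state'' to forward as $\sigma$; the distinguisher holds it.

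A one-vertex example shows the failure. Take $\phi=0$, $O=\mathsf{X}$ and trivial $\mathcal{E}_{\mathrm{PEC}}$. In the real protocol an honest measurement returns $\mathrm{b}=\mathrm{r}$, so a distinguisher that secretly flips the returned bit flips $\tilde y$ with certainty. Under your simulator, the honestly measured bit is random with a law depending on $\delta-\theta$, so the resource cannot in general detect the flip. Perfect indistinguishability therefore fails.

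\textbf{How to repair it.} Use the standard UBQC simulator, which the paper also uses for VBPEC. Send halves of EPR pairs for \emph{every} qubit together with uniformly random $\delta_v$. Pass the retained halves, the angles and the returned bits to BDPEC. BDPEC samples $\mathsf{P}_i$ and measures the retained halves in bases fixed by $\delta_v$ and $\mathsf{C}_{\mathsf{P}_i}$; this retroactively prepares $|+_{\theta_v}\rangle$ and defines a consistent $\mathrm{r}_v$. Alternatively, drop the explicit simulator and compose directly: replace each round by the ideal blind resource for $\mathsf{C}_{\mathsf{P}_i}$, which is what the paper's regrouping argument amounts to.

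\textbf{A simpler substitution.} Your ``main obstacle'' also disappears with a cleaner change of variables. Keep $\theta_v$, set $\mathrm{r}''_v=\mathrm{r}_v\oplus\mathrm{z}_v$, and read the leftover $\mathrm{z}_v\pi$ as the angle shift $\phi_v\mapsto\phi_v+\mathrm{z}_v\pi$, using that $\pm\mathrm{z}_v\pi\equiv\mathrm{z}_v\pi$ modulo $2\pi$. This maps a UBPEC round for $(\mathsf{C},\mathsf{P})$ with secrets $(\mathrm{a},\mathrm{r},\theta)$ bijectively onto a UBQC round for $\mathsf{C}_{\mathsf{P}}$ with secrets $(\mathrm{a},\mathrm{r}'',\theta)$. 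The correspondence includes the decoded output and needs no induction along the flow.
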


We remark that the blind insertion of Pauli operations via the Client's measurement instructions is applicable to other QEM methods, such as probabilistic error amplification (PEA)~\cite{Mari2021Extending, Kim2023Evidence}, which is widely adopted as an essential practical QEM building block on real quantum hardware nowadays~\cite{Yoshioka2025Krylov}.

\begin{resource*}[htbp]
  \caption{\raggedright Secure Delegated Probabilistic Error Cancellation (SDPEC)}
  \label{resource:sdpec}
  \begin{algorithmic}[0]
    \STATE \textbf{Public Information:}
    Public information $(\mathfrak{C}, G, f_{G}, N_{c}, N_{t}, \mathcal{E}_{\mathrm{PEC}})$; a security parameter $\epsilon_{t}>0$; and the allowed bias $\epsilon>0$.

    \STATE \textbf{Client's interface:}
    The target computation $\mathsf{C}\in \mathfrak{C}$ to produce the state $\rho$ to be measured by $O$.

    \STATE \textbf{Server's interface:}
    \begin{enumerate}
      \item The interface is filtered so that when $e=0$, the interface does not send any information nor take inputs.
      \item For $e=1$, the Resource receives a quantum state $\sigma$ and $F$, a list of instructions so that the resource produces $s \in \mathbb R \cup \{\mathsf{Abort}\}$.
    \end{enumerate}

    \STATE \textbf{Processing by the Resource:}
    \begin{enumerate}
      \item If $e=0$, it sets $\displaystyle s = \frac{1}{N_{c}} \sum_{i = 1}^{N_{c}}\left\|\vec{q}\right\|_{1}\operatorname{sign}\left(\vec{q}(\mathsf{P}_{i})\right)y_{i}$ with $y_{i}=2Y_{i}-1$ where $Y_{i} \sim \mathcal{B}(p)$, sampled from a Bernoulli distribution $\mathcal{B}(p)$ with $p = \operatorname{Tr}\left[\left(\mathsf{P}_{i}\circ\mathcal{E}_{\mathrm{PEC}}\right)\left(\rho\right) |+_{O}\rangle\langle+_{O}|\right]$, where $\mathsf{P}_{i}\in\{\mathsf{I}, \mathsf{Z}\}^{|V|}$ is sampled with probability $|\vec{q}(\mathsf{P}_{i})|/\left\|\vec{q}\right\|_{1}$.
      \item If $e=1$, it computes $s$ using the transmitted state $\sigma$ and $F$.
      \item If $s =  \mathsf{Abort}$ it forwards $o=\mathsf{Abort}$ to the Client.
      \item If $|s - \operatorname{Tr}\left[\rho O\right]| \geq \epsilon$ it sets $o = \mathsf{Abort}$ and forwards it to the Client.
      \item Otherwise it directly forwards $s$ to the Client.
    \end{enumerate}
  \end{algorithmic}
\end{resource*}

\section{Verifiable blind probabilistic error cancellation (VBPEC)\label{sec:vbpec}}

We next endow UBPEC with verifiability, that is we ensure that the Server's deviation follows closely the noise model $\mathcal{E}_{\mathrm{PEC}}$ which in turn guarantees that the PEC-ed target computation estimates $\operatorname{Tr}[\rho O]$ with little bias.

Below, we first define the ``Secure Delegated Probabilistic Error Cancellation (SDPEC)'' resource (Resource~\ref{resource:sdpec}) to formalize the ideal behavior of verifiability and blindness of PEC, and then show that it can be constructed by the proposed ``Verifiable Blind Probabilistic Error Cancellation (VBPEC)'' protocol (Protocol~\ref{protocol:vbpec}) to negligible error within the abstract cryptography framework.

\begin{figure}[htbp]
  \centering
  \includegraphics[width=\linewidth]{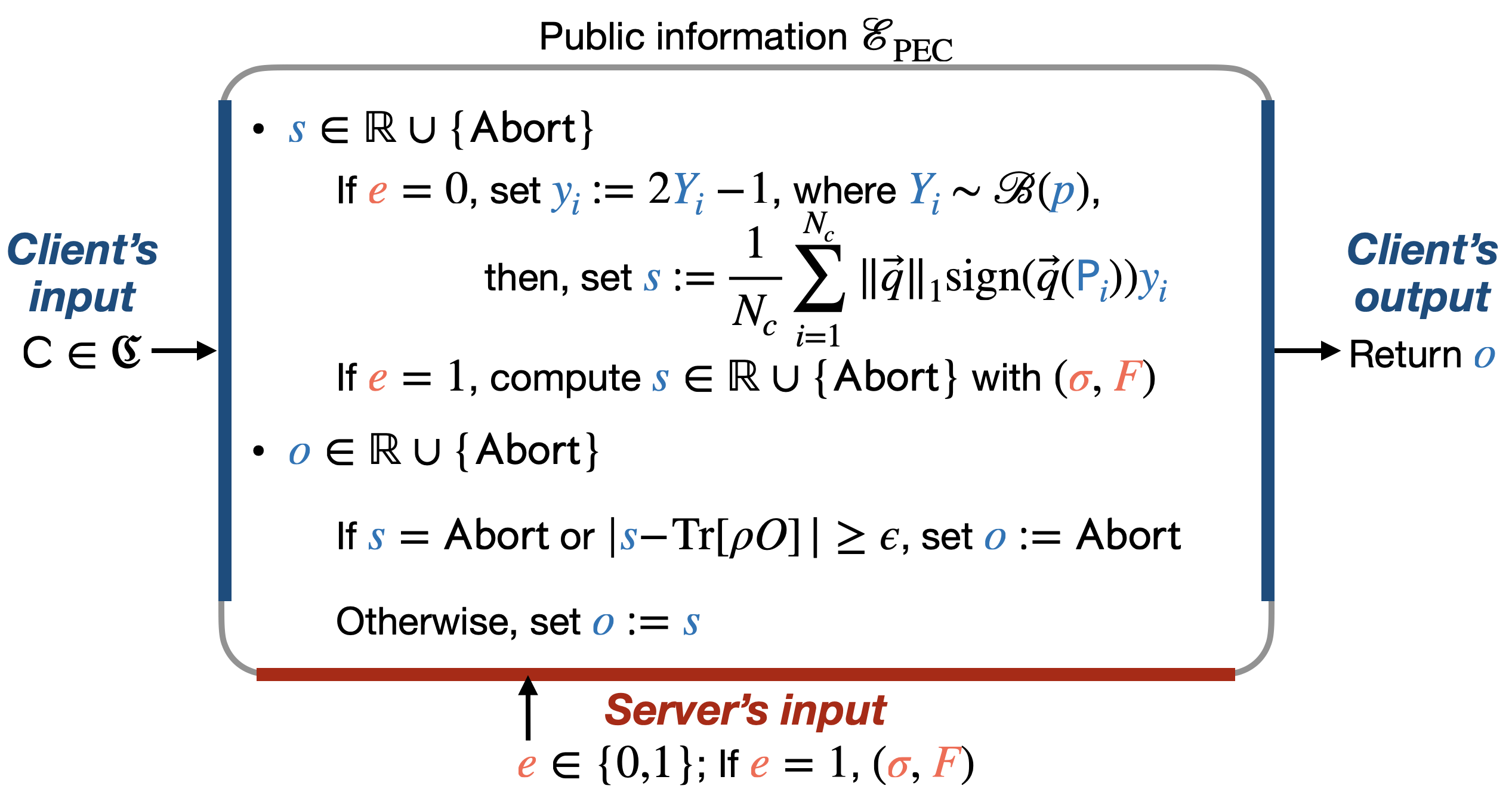}
  \caption{
    Schematic illustration of the SDPEC resource, where $p = \operatorname{Tr}\left[\left(\mathsf{P}_{i}\circ\mathcal{E}_{\mathrm{PEC}}\right)\left(\rho\right) |+_{O}\rangle\langle+_{O}|\right]$ and $\mathsf{P}_{i}\in\{\mathsf{I}, \mathsf{Z}\}^{|V|}$ is sampled with probability $|\vec{q}(\mathsf{P}_{i})|/\left\|\vec{q}\right\|_{1}$.
  }
  \label{fig:sdpec}
\end{figure}

\subsection{Secure delegated probabilistic error cancellation (SDPEC) resource\label{sec:Secure_delegated_probabilistic_error_cancellation}}

Resource~\ref{resource:sdpec} provides perfect blindness and either aborts or returns a real-valued output within error $\epsilon$ of $\operatorname{Tr}\left[\rho O\right]$.
To model this, $e\in\{0,1\}$ is a flag controlling whether the Server's interface filter is activated or not.
When $e=0$, Resource~\ref{resource:sdpec} performs the same procedure as Resource~\ref{resource:bdpec}, sampling for each round from the distribution under the deviation $\mathcal{E}_{\mathrm{PEC}}$ together with the error cancellation operation $P_{i}$.
It then takes the empirical average of the returned $\tilde y_{i})$ to produce the unbiased estimator of $\operatorname{Tr}[\rho O]$.
When the Server asks for full access ($e=1$), it receives the permitted leakage, i.e., $\mathfrak{C}$ corresponding to all the observable estimation problems that the resource can handle.
The Server also receives the parameters $N_{c}$, $N_{t}$, $\epsilon_{t}$ and $\epsilon$.
The Server is then allowed to send a deviation to be applied by the resource, which takes the form of a quantum state $\sigma$ and a classical list of quantum and classical instructions that produce either a scalar or $\mathsf{Abort}$.
If the produced scalar is within $\epsilon$ of $\operatorname{Tr}\left[\rho O\right]$, then the resource sends the scalar to the Client.
Otherwise, it sends $\mathsf{Abort}$.

\begin{protocol*}[htbp]
  \caption{\raggedright Single-round general trap pattern (Kapourniotis et al.~\cite{Kapourniotis2024Unifying})}
  \label{protocol:trap_general_single}
  \begin{algorithmic}[0]
    \STATE \textbf{Public Information:}
    Public information $(\mathfrak{C}, G, f_{G})$.

    \STATE \textbf{Protocol:}
    \begin{enumerate}
      \item The Client chooses uniformly at random a subset $S\subseteq V$ to specify the trap pattern.

      \item The Client sends qubits to the Server.
            \begin{itemize}
              \item If $v\in S$, the Client chooses $\theta_{v}\in\Theta$ at random and sends the state $|+_{\theta_{v}}\rangle$.
              \item If $v\in V\setminus S$, the Client chooses a bit $\mathrm{d}_{v} \in \{0,1\}$ uniformly at random and sends the state $|d_{v}\rangle$.
            \end{itemize}

      \item The Server applies the entangling operation $\displaystyle \mathsf{G}=\prod_{(u,v)\in E}\mathsf{CZ}_{u,v}$ to prepare the resource state.

      \item For $v \in V$, the Client sends a measurement angle $\delta_{v}\in\Theta$, the Server measures the appropriate corresponding qubit in the basis $\{|+_{\delta_{v}}\rangle, |-_{\delta_{v}}\rangle\}$, returning outcome $\mathrm{b}_{v}$ to the Client.
            The angle $\delta_{v}$ is defined as follows:
            \begin{itemize}
              \item If $v\in T_{\mathrm{even}}$: the Client chooses $\mathrm{r}_{v} \in\{0,1\}$ uniformly at random and sets $\delta_{v} = \theta_{v} + \mathrm{r}_{v} \pi$.
              \item If $v\in T_{\mathrm{odd}}$: the Client chooses $\mathrm{r}_{v} \in\{0,1\}$ uniformly at random and sets $\delta_{v} = \theta_{v} + (\mathrm{r}_{v} + 1/2) \pi$.
              \item If $v\in V\setminus S$: the Client chooses $\delta_{v}$ from $\Theta$ uniformly at random.
            \end{itemize}

      \item The Client returns $\tilde{y}=1$ if $\displaystyle \left(\bigoplus_{v \in N_{G}^{\mathrm{odd}}(S)} d_{v}\right)\oplus\left(\bigoplus_{v \in S}\mathrm{b}_{v}\oplus\mathrm{r}_{v}\right) = 0$.
            Otherwise, it returns $\tilde{y}=-1$.

    \end{enumerate}
  \end{algorithmic}
\end{protocol*}

\begin{protocol*}[htbp]
  \caption{\raggedright Verifiable Blind Probabilistic Error Cancellation (VBPEC)}
  \label{protocol:vbpec}
  \begin{algorithmic}[0]
    \STATE \textbf{Public Information:}
    Public information $(\mathfrak{C}, G, f_{G}, N_{c}, N_{t}, \mathcal{E}_{\mathrm{PEC}})$; a security parameter $\epsilon_{t}>0$; and the allowed bias $\epsilon>0$.

    \STATE \textbf{Inputs from Client:}
    The target computation $\mathsf{C}\in \mathfrak{C}$ that estimates $\operatorname{Tr}\left[\rho O\right]$.

    \STATE \textbf{Protocol:}
    \begin{enumerate}
      \item The Client randomly samples indices in $[N]$ for $N = N_{c} + N_{t}$ to indicate the locations of $N_{t}$ test rounds and $N_{c}$ computation rounds.
            Let $\mathrm{S}_{\mathsf{C}}$ be the index set of computation rounds.
            Let $\mathrm{S}_{\mathsf{T}}$ be the index set of test rounds with general trap patterns $\{S_{i}\}_{i\in\mathrm{S}_{\mathsf{T}}}$ over $V$.

      \item Each round is then sequentially delegated to the Server.
            For $i \in \mathrm{S}_{\mathsf{C}}$, the Client chooses $\mathsf{P}_{i}\in\{\mathsf{I},\mathsf{Z}\}^{|V|}$ with probability $|\vec{q}(\mathsf{P}_{i})|/\left\|\vec{q}\right\|_{1}$, and executes the single-round UBPEC protocol (Protocol~\ref{protocol:ubpec_single}).
            For $i \in \mathrm{S}_{\mathsf{T}}$, the Client executes the single-round general trap protocol (Protocol~\ref{protocol:trap_general_single}).

      \item Upon receiving and decoding the result of each round $i \in [N]$, the Client computes
            \begin{equation}\label{eq:vbpec_process_classical}
              \begin{split}
                \tilde{o}_{\mathsf{C}}
                 & = \frac{1}{N_{c}}\sum_{i\in \mathrm{S}_{\mathsf{C}}} \left\|\vec{q}\right\|_{1}\operatorname{sign}\left(\vec{q}(\mathsf{P}_{i})\right)\tilde{y}_{i}, \\
                \tilde{\Delta}
                 & = \sum_{\mathsf{P}\in\Lambda} \left|\vec{p}_{\mathrm{PEC}}(\mathsf{P}) - \tilde{p}_{\mathsf{T}}(\mathsf{P})\right|
                + \left|1 - \sum_{\mathsf{P}\in\Lambda} \tilde{p}_{\mathsf{T}}(\mathsf{P})\right|,
                \quad \text{where}\quad \tilde{p}_{\mathsf{T}}(\mathsf{P}) = \frac{1}{N_{t}}\sum_{i\in \mathrm{S}_{\mathsf{T}}} (-1)^{\langle \mathrm{set}(\mathsf{P}), S_{i}\rangle} \tilde{y}_{i},
              \end{split}
            \end{equation}
            where $\mathrm{set}(\mathsf{P}) = \{v\in V~\mid~\mathsf{P}(v)=\mathsf{Z}\}$, and $\langle S, T\rangle = |S\cap T| \pmod{2}$ for $S, T\subseteq V$.

      \item
            The Client checks $\tilde{\Delta}\leq \epsilon_{t}$.
            If satisfied, the Client returns $\tilde{o} = \tilde{o}_{\mathsf{C}}$, otherwise the Client returns $\tilde{o} = \mathsf{Abort}$.
    \end{enumerate}
  \end{algorithmic}
\end{protocol*}

This definition captures the intuitive notion of secure delegated probabilistic error cancellation: a malicious Server learns only the public information, namely the class of target computations supported by the resource.
Moreover, whenever the resource returns an accepted real-valued output, the Server can influence this output only within $\epsilon$ of the true expectation value; otherwise, the Server inevitably forces the resource to return $\mathsf{Abort}$.
The schematic illustration of the SDPEC resource is depicted in Fig.~\ref{fig:sdpec}.

\subsection{Verifiable blind probabilistic error cancellation (VBPEC) protocol}

Considering the SDPEC resource defined above, we propose the following VBPEC protocol.
Our design principle is to use test rounds to benchmark the deviation of the server and compare it to the expected noise channel $\mathcal{E}_{\mathrm{PEC}}$ used to perform the PEC.
To this end, we use the general trap patterns shown as Protocol~\ref{protocol:trap_general_single} introduced by~\cite{Kapourniotis2024Unifying}.
We then perform a classical post-processing of the general trap results to evaluate how close the Server's deviation is to the noise model $\vec{p}_{\mathrm{PEC}}$.

Let the empirically averaged server's deviation on the test rounds be $\vec{p}_{\mathsf{T}}(\mathsf{P})$.
  Indeed, $\vec{p}_{\mathsf{T}}(\mathsf{P})$ can be estimated through the inverse Walsh--Hadamard transform by post-processing the outputs from randomly sampled general trap patterns for each $\mathsf{P}\in\{\mathsf{I},\mathsf{Z}\}^{|V|}$.
More precisely, let $\{\tilde{y}_{i}\}_{i\in\mathrm{S}_{\mathsf{T}}}$ be the outputs of such random general traps.
Then, for each $\mathsf{P}\in\{\mathsf{I},\mathsf{Z}\}^{|V|}$, an unbiased estimator of $\vec{p}_{\mathsf{T}}(\mathsf{P})$ is given by (Lemma~\ref{lemma:Unbiased_estimation_of_Pauli_deviation})
\begin{equation}
  \begin{split}
    \tilde{p}_{\mathsf{T}}(\mathsf{P}) = \frac{1}{N_{t}}\sum_{i\in \mathrm{S}_{\mathsf{T}}} (-1)^{\langle \mathrm{set}(\mathsf{P}), S_{i}\rangle} \tilde{y}_{i},
  \end{split}
\end{equation}
where $\mathrm{set}(\mathsf{P}) = \{v\in V~\mid~\mathsf{P}(v)=\mathsf{Z}\}$, and $\langle S, T\rangle = |S\cap T| \pmod{2}$ for $S, T\subseteq V$.

Based on this idea, we design VBPEC as Protocol~\ref{protocol:vbpec}, which randomly interleaves $N_{c}$ computation rounds and $N_{t}$ test rounds.
In each test round, the Client samples a general trap pattern at random and records the corresponding decoded trap output.
The collection of $N_{t}$ trap outputs is then used as a common dataset for estimating $\vec{p}_{\mathsf{T}}(\mathsf{P})$ where $\mathsf{P}\in\Lambda$ is the relevant support of the expected deviation $\mathcal{E}_{\mathrm{PEC}}$.
\footnote{By construction $\Lambda$ is polynomial in $|V|$ so that the evaluation can be performed efficiently in spite of the exponential length of $\vec{p}$.}
The Client then classically post-processes this dataset by weighting the decoded trap outputs with the Walsh--Hadamard signs associated with $\mathsf{P}$.
Averaging the resulting signed outputs gives the estimator $\tilde{p}_{\mathsf{T}}(\mathsf{P})$.
Applying this post-processing for all relevant $\mathsf{P}$ yields an estimated deviation distribution, which is compared with the PEC noise model through the statistic $\tilde{\Delta}$.
In this sense, the test rounds are not accepted or rejected individually; rather, they provide statistical samples from which the model mismatch is estimated.

Reflecting the above, a schematic illustration of the VBPEC protocol is shown in Fig.~\ref{fig:vbpec}.
\begin{figure*}[htbp]
  \centering
  \includegraphics[width=0.7\linewidth]{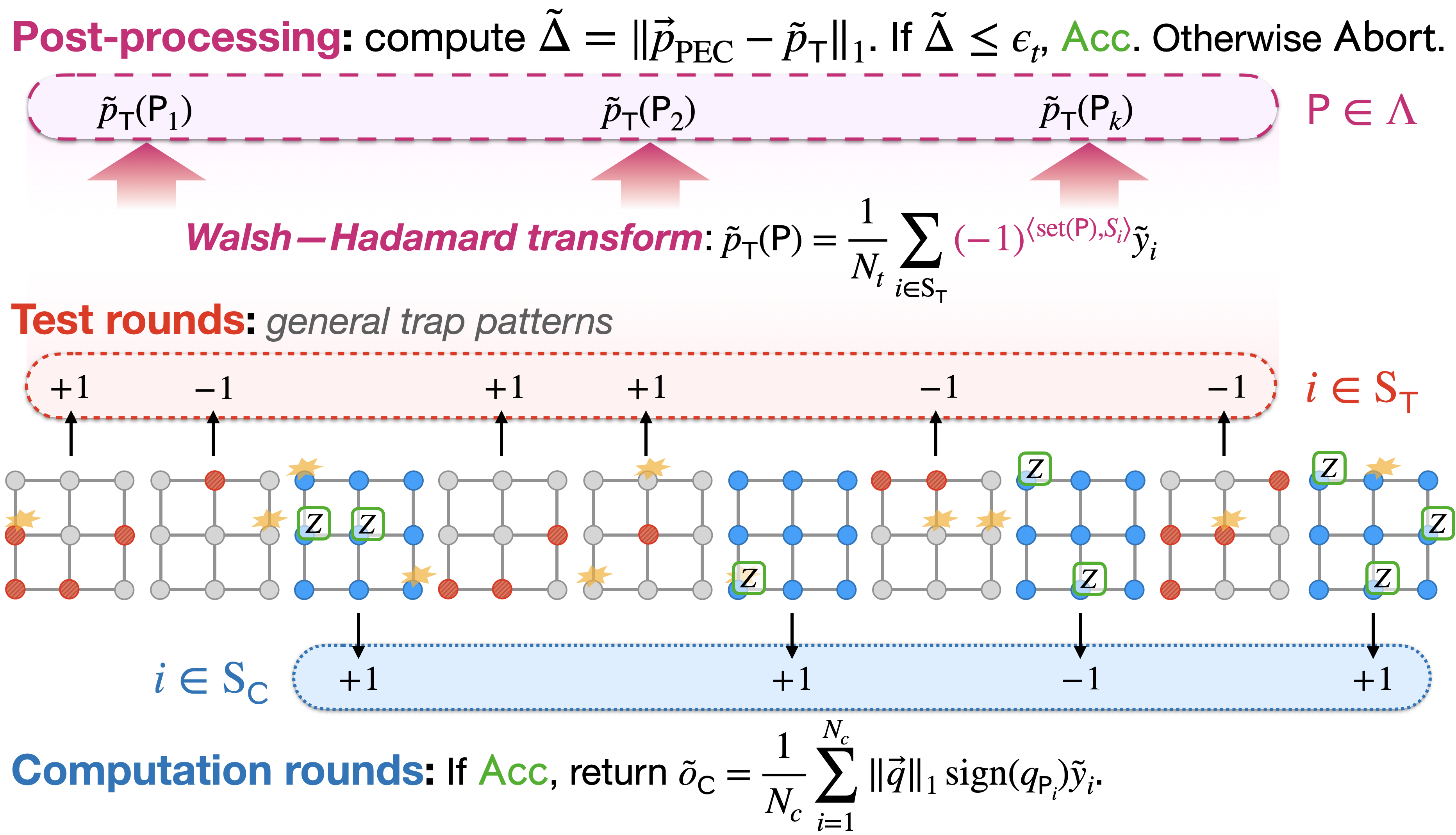}
  \caption{
    Schematic illustration of the VBPEC protocol.
    Each small graph represents one delegated MBQC instance.
    Blue vertices denote computation vertices used in computation rounds $i\in\mathsf{S}_{\mathsf{C}}$, while red vertices denote the vertices selected by the test setting $S_{i}$ in test rounds $i\in\mathsf{S}_{\mathsf{T}}$, and the grey vertices are qubits outside $S_{i}$.
    Yellow stars represent the noise or deviation occurring on the graph.
    The PEC cancellation operation is applied only on the computation rounds, shown as green boxes labelled $Z$, according to the probability decomposition $\vec{q}$.
  }
  \label{fig:vbpec}
\end{figure*}

\subsection{Constructing SDPEC with VBPEC\label{sec:Proof_of_composability}}

We can now state the main result of this paper:
\begin{theorem}[Composable Security of VBPEC]
  \label{theorem:vbpec}
  Let $\mathfrak{C}$ be a class of observable estimation problems that can be estimated using an MBQC pattern on a fixed graph $G$ with a given flow $f_{G}$.
  Let $N_{c}, N_{t}\in \mathbb{N}$, let $\epsilon, \epsilon_{t}$ be constants such that $\displaystyle 0 \leq \epsilon_{t} < \epsilon / \left\|\vec{q}\right\|_{1}$.
  Let $\mathcal{E}_{\mathrm{PEC}}$ be the noise model for PEC with $\Lambda$ whose size is polynomial in $|V|$.
  Then, the VBPEC protocol (Protocol~\ref{protocol:vbpec}), with $N_{c}$ computation rounds and $N_{t}$ test rounds, $\delta$-constructs the SDPEC resource.
  For $N_{t}\geq O(1/\epsilon_{t}^{2})$, $\delta$ is negligible in $N_{c}$ and $N_{t}$.
\end{theorem}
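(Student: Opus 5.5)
We prove the statement in the abstract cryptography framework: we exhibit a simulator acting on the Server's unfiltered interface of the SDPEC resource and bound the distinguishing advantage between the real VBPEC protocol and SDPEC plus simulator. The argument has three parts: blindness, correctness with an honest Server, and verifiability with a malicious Server.

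\textbf{Blindness and the simulator.} We reuse the simulator from Theorem~\ref{theorem:UBPEC} and its UBOE ancestor. The simulator sends uniformly random angles $\delta_v\in\Theta$ and halves of EPR pairs (or $|+\rangle$ states) for every round. It then forwards to the resource, as $(\sigma,F)$, the Server's deviation together with the instructions that complete the teleported preparation. These instructions reproduce the real Client's decoding, including the random choice of test versus computation rounds, the trap patterns $S_i$, and the PEC strings $\mathsf{P}_i$, and finally compute $\tilde\Delta$ and $\tilde o_{\mathsf{C}}$. Test and computation rounds are perfectly indistinguishable to the Server: the trap pattern, the $\mathsf{Z}$-flips $\mathrm{z}_v$, and the computation angles are all one-time padded by $\theta_v$, $\mathrm{r}_v$ and $\mathrm{d}_v$. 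Hence the transcript is perfectly simulated. The only remaining gap between the two worlds is the resource's additional filter: it aborts whenever $|s-\operatorname{Tr}[\rho O]|\ge\epsilon$, whereas the real protocol accepts whenever $\tilde\Delta\le\epsilon_t$. The distinguishing advantage $\delta$ is therefore bounded by
\[
\Pr\bigl[\tilde\Delta\le\epsilon_t \text{ and } |\tilde o_{\mathsf{C}}-\operatorname{Tr}[\rho O]|\ge\epsilon\bigr],
\]
together with the honest-case error, which is handled next.

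\textbf{Honest case.} With $e=0$ the deviation is exactly $\mathcal{E}_{\mathrm{PEC}}$. The computation estimator is then distributed exactly as in Resource~\ref{resource:sdpec}, so the only mismatch is an acceptance failure, $\tilde\Delta>\epsilon_t$. By Lemma~\ref{lemma:Unbiased_estimation_of_Pauli_deviation}, each $\tilde p_{\mathsf T}(\mathsf P)$ is an average of $\pm1$ variables with mean $\vec p_{\mathrm{PEC}}(\mathsf P)$. Hoeffding's inequality and a union bound over the $|\Lambda|=\mathrm{poly}(|V|)$ terms bound this failure probability by
\[
2|\Lambda|\exp\!\bigl(-N_t\epsilon_t^2/(2(|\Lambda|+1)^2)\bigr),
\]
which is negligible once $N_t\ge O(1/\epsilon_t^2)$, treating $|\Lambda|$ as part of the constant.

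\textbf{Malicious case.} By QOTP twirling, an arbitrary deviation becomes, round by round, a stochastic $\mathsf{Z}$-Pauli channel $\vec p_i$, possibly adaptive. Because the round types are hidden, the test rounds form a uniformly random subsample of the $N$ rounds. Let $\vec p_{\mathsf C}$ and $\vec p_{\mathsf T}$ denote the empirical averages of $\vec p_i$ over computation and test rounds, and let
\[
\Delta(\vec p)=\sum_{\mathsf P\in\Lambda}|\vec p_{\mathrm{PEC}}(\mathsf P)-\vec p(\mathsf P)|+\Bigl|1-\sum_{\mathsf P\in\Lambda}\vec p(\mathsf P)\Bigr|.
\]
We establish three bounds.
\begin{enumerate}[label=(\roman*)]
\item \textbf{Bias bound.} Expanding $\operatorname{Tr}[(\mathcal{E}_{\mathrm{PEC}}^{-1}\circ\mathcal E_{\vec p_{\mathsf C}})(\rho)O]$ gives $|\mathbb E\tilde o_{\mathsf C}-\operatorname{Tr}[\rho O]|\le\|\vec q\|_1\,\Delta(\vec p_{\mathsf C})$. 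The mass outside $\Lambda$ is controlled by the second term of $\Delta$, since all the relevant quantities have magnitude at most one.
\item \textbf{Shot noise.} An Azuma martingale bound gives $|\tilde o_{\mathsf C}-\mathbb E\tilde o_{\mathsf C}|\le\eta$ except with probability $2e^{-N_c\eta^2/(2\|\vec q\|_1^2)}$.
\item \textbf{Trap estimation.} A martingale Hoeffding bound on the trap outputs gives $|\tilde\Delta-\Delta(\vec p_{\mathsf T})|\le\eta'$ with exponentially small failure probability, and random sampling without replacement (Serfling) gives $|\Delta(\vec p_{\mathsf T})-\Delta(\vec p_{\mathsf C})|\le\eta''$.
\end{enumerate}
Choose $\eta,\eta',\eta''>0$ so that
\[
\|\vec q\|_1(\epsilon_t+\eta'+\eta'')+\eta<\epsilon,
\]
which is possible precisely because $\epsilon_t<\epsilon/\|\vec q\|_1$. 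Then the bad event $\{\tilde\Delta\le\epsilon_t,\ |\tilde o_{\mathsf C}-\operatorname{Tr}[\rho O]|\ge\epsilon\}$ forces one of the three concentration bounds to fail. Its probability is therefore negligible in $N_c$ and $N_t$.

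\textbf{Main obstacle.} The difficult step is (iii). The Server's channels may be adaptive and correlated across rounds, so $\vec p_{\mathsf T}$ and $\vec p_{\mathsf C}$ are random, history-dependent quantities rather than fixed ones. The plan is to fix the Server's strategy and write each decoded output as its conditional mean plus a martingale difference, conditioned on the past. The key input is that, conditioned on the history, the round type and $S_i$ are still uniform and independent of the deviation. We then apply Azuma's inequality to the sums of differences, and a hypergeometric-type bound to the split between test and computation rounds. This last step requires care: $\Delta$ is nonlinear, so we bound each of its $|\Lambda|+1$ linear terms separately and combine them with a union bound.
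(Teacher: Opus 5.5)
Your architecture matches the paper's. You use the same UBQC-style simulator and the same reduction of the malicious-case advantage to $\Pr[\mathsf{Acc}\wedge\mathsf{Bad}]$. You use the same bias bound $\|\vec q\|_1\,\mathrm{D}_\Lambda(\vec p_{\mathsf C},\vec p_{\mathrm{PEC}})$. You use the same three concentration events (computation shot noise, trap estimation, test/computation split via Serfling), glued together by $\epsilon_t<\epsilon/\|\vec q\|_1$. Some of your variants are legitimate and slightly cleaner:
\begin{itemize}
\item comparing $\tilde\Delta$ directly with $\mathrm{D}_\Lambda(\vec p_{\mathsf T},\vec p_{\mathrm{PEC}})$ via per-$\mathsf P$ Hoeffding, rather than through $\Delta_{\mathsf T}=\mathbb E[\tilde\Delta\mid\vec p_{\mathsf T}]$ with Jensen and McDiarmid;
\item a direct union bound in the honest case;
\item keeping $\epsilon_t$ general rather than fixing $\epsilon_t=\epsilon/(4\|\vec q\|_1)$.
\end{itemize}

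\textbf{The gap} is the step you yourself flag as the main obstacle.
\begin{itemize}
\item \emph{A round-by-round twirl is not enough.} It only gives, for each round, a Pauli distribution $\vec p_i$ conditioned on the past. For your martingales, that past is the natural filtration: everything in rounds $<i$, including the round types and the Client's secrets. Conditioned on this, $\vec p_i$ can depend on earlier rounds' types and secrets. For example, the Server keeps the qubit sent in round $j$, reports a random bit for it, and later uses it as the control of a $\mathsf{CZ}$ onto a round-$i$ qubit. If that qubit was a dummy $|d_j\rangle$, the round-$i$ Pauli is deterministically $\mathsf Z^{d_j}$; if it was some $|+_{\theta_j}\rangle$, it is a fair $\mathsf I/\mathsf Z$ mixture. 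So $(\vec p_1,\dots,\vec p_N)$ is not a fixed population independent of the partition, and Serfling does not apply.
\item \emph{Your stated ``key input'' does not repair this.} With exactly $N_t$ test rounds, the type of round $i$ is not uniform given the earlier types. Hence $\mathbf{1}[i\in\mathrm S_{\mathsf T}]/N_t-\mathbf{1}[i\in\mathrm S_{\mathsf C}]/N_c$ has a nonzero, history-dependent conditional mean, and $\sum_i(\cdot)\,\vec p_i(\mathsf P)$ is not a martingale. A ``hypergeometric-type'' bound for a predictable, type-dependent population would be a new result you would have to prove.
\end{itemize}

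\textbf{The missing idea} is the global twirl the paper invokes (as in \cite{Leichtle2021Verifying, Kapourniotis2024Unifying}). Twirling with the one-time pads of all $N$ rounds at once shows that the Server's entire attack is equivalent to two steps:
\begin{enumerate}
\item draw one global string $(\mathsf P_1,\dots,\mathsf P_N)$ from a distribution independent of every Client secret (partition, trap patterns, PEC samples, pads);
\item run the honest protocol with that string applied.
\end{enumerate}
Conditioning on that string makes $\vec p_{\mathsf C},\vec p_{\mathsf T}$ a fixed population split by a uniformly random partition. It also makes all round outputs independent. Ordinary Hoeffding, McDiarmid (or per-$\mathsf P$ Hoeffding) and Serfling then apply, and no martingale is needed.

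Two smaller points in your honest case:
\begin{itemize}
\item The acceptance failure $\tilde\Delta>\epsilon_t$ is not the only mismatch. The resource itself aborts when $|s-\operatorname{Tr}[\rho O]|\geq\epsilon$ while the protocol would accept. You must add $2\exp(-\epsilon^2N_c/(2\|\vec q\|_1^2))$ to the correctness error.
\item Your constant ignores the coarse-grained term $|1-\sum_{\mathsf P\in\Lambda}\tilde p_{\mathsf T}(\mathsf P)|$. Using $\tilde\Delta\le 2\sum_{\mathsf P\in\Lambda}|\tilde p_{\mathsf T}(\mathsf P)-\vec p_{\mathrm{PEC}}(\mathsf P)|$ gives the bound $2|\Lambda|\exp(-N_t\epsilon_t^2/(8|\Lambda|^2))$ instead.
\end{itemize}
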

If the tested deviation statistics are close to the PEC reference model, then every accepted PEC estimate is close to the ideal expectation value, except with exponentially small probability.

Following abstract cryptography, to prove this theorem, we need to upper-bound the distinguishing advantage between the VBPEC protocol and the SDPEC resource in the honest (Correctness proof) and malicious (Security proof) settings (see Definition~\ref{definition:Construction_of_resources}).
To proceed with the proofs, let us first define the following conditions,
\begin{equation}
  \begin{split}
    \mathsf{Good}
     & \iff \left|\tilde{o}_{\mathsf{C}}-\operatorname{Tr}[\rho O]\right|\leq \epsilon, \\
    \mathsf{Bad}
     & \iff \neg\mathsf{Good}
    \iff \left|\tilde{o}_{\mathsf{C}}-\operatorname{Tr}[\rho O]\right|> \epsilon,       \\
    \mathsf{Acc}
     & \iff \tilde{\Delta} \leq \epsilon_{t},                                           \\
    \mathsf{Rej}
     & \iff \neg\mathsf{Acc}
    \iff \tilde{\Delta} > \epsilon_{t}.
  \end{split}
\end{equation}

We also introduce a deviation measure for two vectors $\vec{a},\vec{b}\in\mathbb{R}^{|\Omega_{V}|}$, defined by
\begin{equation}\label{eq:definition_D_Lambda}
  \begin{split}
    \mathrm{D}_{\Lambda}(\vec{a},\vec{b})
     & := \sum_{\mathsf{P}\in\Lambda}\left|\vec{a}(\mathsf{P})-\vec{b}(\mathsf{P})\right|    \\
     & \quad\quad\quad + \left|\left(1-\sum_{\mathsf{P}\in\Lambda}\vec{a}(\mathsf{P})\right)
    - \left(1-\sum_{\mathsf{P}\in\Lambda}\vec{b}(\mathsf{P})\right)\right|                   \\
     & = \sum_{\mathsf{P}\in\Lambda}\left|\vec{a}(\mathsf{P})-\vec{b}(\mathsf{P})\right|
    + \left|\sum_{\mathsf{P}\in\Lambda}(\vec{a}(\mathsf{P}) - \vec{b}(\mathsf{P}))\right|.
  \end{split}
\end{equation}
Operationally, $\mathrm{D}_{\Lambda}$ compares the probabilities of the Pauli deviations in $\Lambda$ and treats all deviations outside $\Lambda$ as one coarse-grained event.
We here remark that applying the triangle inequality on the second term of the right-hand side of Eq.~\eqref{eq:definition_D_Lambda} immediately implies
\begin{equation}\label{eq:inequality_triangle_D_Lambda}
  \begin{split}
    \mathrm{D}_{\Lambda}(\vec{a},\vec{b})
     & \leq \sum_{\mathsf{P}\in\Lambda}\left|\vec{a}(\mathsf{P})-\vec{b}(\mathsf{P})\right|
    + \sum_{\mathsf{P}\in\Lambda}\left|\vec{a}(\mathsf{P}) - \vec{b}(\mathsf{P})\right|     \\
     & = 2\sum_{\mathsf{P}\in\Lambda}\left|\vec{a}(\mathsf{P})-\vec{b}(\mathsf{P})\right|.
  \end{split}
\end{equation}

Besides, since $\vec{p}_{\mathrm{PEC}}$ is supported on $\Lambda$, the deviation of a probability distribution $\vec{a}$ from $\vec{p}_{\mathrm{PEC}}$ is equivalent to the 1-norm between $\vec{a}$ and $\vec{p}_{\mathrm{PEC}}$,
\begin{equation}\label{eq:definition_D_Lambda_against_pec}
  \begin{split}
    \mathrm{D}_{\Lambda}\left(\vec{a},\vec{p}_{\mathrm{PEC}}\right)
     & = \sum_{\mathsf{P}\in\Lambda}\left|\vec{a}(\mathsf{P})-\vec{p}_{\mathrm{PEC}}(\mathsf{P})\right|
    + \left|1-\sum_{\mathsf{P}\in\Lambda}\vec{a}(\mathsf{P})\right|                                     \\
     & = \left\|\vec{a} - \vec{p}_{\mathrm{PEC}}\right\|_{1}.
  \end{split}
\end{equation}
We also note that, comparing Eq.~\eqref{eq:vbpec_process_classical} and Eq.~\eqref{eq:definition_D_Lambda}, the test statistic can be equivalently written as $\tilde{\Delta} = \mathrm{D}_{\Lambda}\left(\tilde{p}_{\mathsf{T}},\vec{p}_{\mathrm{PEC}}\right)$.

\begin{proof}[\textbf{Proof of correctness}]

  The proof of correctness relies on the composability of the UBOE protocol.
  As apparent in Protocol~\ref{protocol:vbpec}, each round is delegated to the server using UBOE.
  Because UBOE perfectly constructs the BDOE resource, we can instead prove correctness using a hybrid protocol in which each UBOE instantiation is replaced by a call to BDOE.

  If the Server is honest, its deviation follows exactly the PEC noise model $\mathcal{E}_{\mathrm{PEC}}$.
  As a result, each outcome $y_{i}$ for the computation rounds $i\in \mathrm{S}_{\mathsf{C}}$ is processed through $y_{i}=2Y_{i}-1$, where $Y_{i}$ is sampled from the Bernoulli distribution $\mathcal{B}(p)$ with probability
  \begin{equation}
    \begin{split}
      p = \frac{1 + \operatorname{Tr}\left[\left(\mathsf{P}_{i}\circ\mathcal{E}_{\mathrm{PEC}}\right)\left(\rho\right) O\right]}{2},
    \end{split}
  \end{equation}
  where $\mathsf{P}_{i}\in\{\mathsf{I},\mathsf{Z}\}^{|V|}$ is sampled with probability $|\vec{q}(\mathsf{P}_{i})|/\left\|\vec{q}\right\|_{1}$.
  This ensures that the produced empirical average
  
    \begin{equation}
      \begin{split}
        \frac{1}{N_{c}}\sum_{i\in \mathrm{S}_{\mathsf{C}}} \left\|\vec{q}\right\|_{1}\operatorname{sign}\left(\vec{q}(\mathsf{P}_{i})\right) \tilde{y}_{i}
      \end{split}
    \end{equation}
  
  is drawn from the same probability distribution as that used to define the ideal resource.
  Hence, whenever the ideal resource and the protocol both output $\mathsf{Abort}$, their outputs coincide, and whenever they both output the estimated value, their probability distributions also coincide.
  Consequently, the only distinguishing advantage is the probability that the two setups have different decision of $\mathsf{Acc}$ or $\mathsf{Abort}$.
  Using the final output $o$ (resp.
  $\tilde{o}$) from the SDPEC resource (resp. the VBPEC protocol), this probability can be described by
  \begin{equation}
    \begin{split}
       & \operatorname{Pr}\left[(o=\mathsf{Acc})\wedge(\tilde{o}=\mathsf{Abort})\right]          \\
       & \quad + \operatorname{Pr}\left[(o=\mathsf{Abort})\wedge(\tilde{o}=\mathsf{Acc})\right],
    \end{split}
  \end{equation}
  which is upper-bounded by
  \begin{equation}
    \begin{split}
      \operatorname{Pr}\left[\tilde{o}=\mathsf{Abort}\right]
      + \operatorname{Pr}\left[o=\mathsf{Abort}\right].
    \end{split}
  \end{equation}
  We upper-bound each of these two $\mathsf{Abort}$ probabilities, respectively.

  First, the ideal resource can return $\mathsf{Abort}$ whenever the estimate is further away than $\epsilon$ from $\operatorname{Tr}\left[\rho O\right]$.
  The probability of such an event happening is upper-bounded, using Hoeffding's bound as well, by
  \begin{equation}\label{eq:bound_honest_sdpec_reject}
    \begin{split}
      \operatorname{Pr}\left[\left|o-\operatorname{Tr}\left[\rho O\right]\right|\geq\epsilon\right]
       & \leq 2\exp\left(-\frac{\epsilon^{2}}{2\left\|\vec{q}\right\|_{1}^{2}}
      N_{c}\right).
    \end{split}
  \end{equation}

  Next, to analyze the $\mathsf{Abort}$ probability by the test rounds, we introduce
  \begin{equation}\label{eq:definition_Delta}
    \begin{split}
      \Delta
       & := \mathbb{E}[\tilde{\Delta}]
      = \mathbb{E}\left[\mathrm{D}_{\Lambda}\left(\tilde{p}_{\mathsf{T}},\vec{p}_{\mathrm{PEC}}\right)\right]. \\
    \end{split}
  \end{equation}
  The $\mathsf{Abort}$ event is given by $\{ \tilde{\Delta} > \epsilon_{t} \}$.
  While $\tilde{\Delta}$ is the unbiased estimator of $\Delta$, $\Delta$ itself is not necessarily zero in general, i.e., $\Delta\geq 0$.
  Here we introduce $\epsilon_{\Delta}$ such that $0\leq \epsilon_{\Delta}\leq \epsilon_{t}$, to evaluate $\Delta$ with $\epsilon_{t}$ and $\epsilon_{\Delta}$.
  If both $|\tilde{\Delta} - \Delta|\leq \epsilon_{\Delta}$ and $\Delta \le \epsilon_{t}-\epsilon_{\Delta}$, it holds that
  \begin{equation}
    \begin{split}
      \tilde{\Delta}
       & \leq |\tilde{\Delta} - \Delta| + \Delta
      \leq \epsilon_{\Delta} + \left( \epsilon_{t}-\epsilon_{\Delta} \right)
      = \epsilon_{t}.
    \end{split}
  \end{equation}
  Hence, by contraposition,
  \begin{equation}
    \begin{split}
      \{ \tilde{\Delta} > \epsilon_{t} \}
       & \subseteq \{
      | \tilde{\Delta} - \Delta |
      > \epsilon_{\Delta}
      \} \cup \{
      \Delta > \epsilon_{t}-\epsilon_{\Delta}
      \}.
    \end{split}
  \end{equation}
  Applying the union bound, we obtain
  \begin{equation}\label{eq:honest_reject_split}
    \begin{split}
      \operatorname{Pr}[\tilde{\Delta} > \epsilon_{t}]
       & \leq \operatorname{Pr}[|\tilde{\Delta} - \Delta| > \epsilon_{\Delta}]
      + \operatorname{Pr}\left[\Delta > \epsilon_{t}-\epsilon_{\Delta}\right].
    \end{split}
  \end{equation}
  Note that $\Delta$, $\epsilon_{t}$, and $\epsilon_{\Delta}$ are all fixed parameters rather than estimators.
  We therefore upper-bound the first term on the right-hand side of Eq.~\eqref{eq:honest_reject_split}, and derive the condition under which the second term vanishes.

  First, we bound the term $\operatorname{Pr}[|\tilde{\Delta}-\Delta|>\epsilon_{\Delta}]$ using McDiarmid's inequality by evaluating the bounded difference.
  Suppose that we modify the pair $(S_{j},\tilde{y}_{j})$ to $(S_{j}^{\prime},\tilde{y}_{j}^{\prime})$ for an arbitrary test round $j\in \mathrm{S}_{\mathsf{T}}$, which changes $\tilde{p}_{\mathsf{T}}$ and $\tilde{\Delta}$ into $\tilde{p}_{\mathsf{T}}^{\prime}$ and $\tilde{\Delta}^{\prime}$, respectively.
  Since $|\tilde{y}_{i}|=1$ for $\tilde{y}_{i}\in\{-1,1\}$, changing one test round changes each component of the estimator by at most $\left|\tilde{p}_{\mathsf{T}}(\mathsf{P}) - \tilde{p}_{\mathsf{T}}^{\prime}(\mathsf{P}) \right|\leq 2/ N_{t}$ for every $\mathsf{P}\in\Lambda$.
  By the reverse triangle inequality for $\mathrm{D}_{\Lambda}$ and Eq.~\eqref{eq:inequality_triangle_D_Lambda}, we have
  \begin{equation}
    \begin{split}
      \left|\tilde{\Delta}-\tilde{\Delta}^{\prime}\right|
       & = \left|\mathrm{D}_{\Lambda} \left(\tilde{p}_{\mathsf{T}},\vec p_{\mathrm{PEC}}\right) - \mathrm{D}_{\Lambda} \left(\tilde{p}_{\mathsf{T}}^{\prime},\vec p_{\mathrm{PEC}}\right) \right| \\
       & \leq \mathrm{D}_{\Lambda} \left(\tilde{p}_{\mathsf{T}}, \tilde{p}_{\mathsf{T}}^{\prime}\right)
      \leq 2\sum_{\mathsf{P}\in\Lambda} \left| \tilde{p}_{\mathsf{T}}(\mathsf{P}) - \tilde{p}_{\mathsf{T}}^{\prime}(\mathsf{P}) \right|
      \leq \frac{4|\Lambda|}{N_{t}}.
    \end{split}
  \end{equation}

  Therefore, viewing $\tilde{\Delta}$ as a function of the $N_{t}$ independent test-round variables $\{(S_{i},\tilde{y}_{i})\}_{i\in\mathrm{S}_{\mathsf{T}}}$, we apply McDiarmid's inequality (Lemma~\ref{lemma:ineq_mcdiarmid} in Appendix~\ref{appendix:concentration_inequalities}) by defining $c_{i}=4|\Lambda|/N_{t}$ for all $i\in \mathrm{S}_{\mathsf{T}}$.
  Noticing that the sum of squares of $\{c_{i}\}_{i\in S_{\mathsf{T}}}$ is
  \begin{equation}
    \begin{split}
      \sum_{i\in \mathrm{S}_{\mathsf{T}}} c_{i}^{2}
       & = N_{t}\left(\frac{4|\Lambda|}{N_{t}}\right)^{2}
      =\frac{16|\Lambda|^{2}}{N_{t}},
    \end{split}
  \end{equation}
  McDiarmid's inequality yields
  \begin{equation}\label{eq:mcdiarmid_first_term_correctness}
    \begin{split}
      \operatorname{Pr}[|\tilde{\Delta}-\Delta| >\epsilon_{\Delta}]
       & \leq 2\exp\left(-\frac{2\epsilon_{\Delta}^{2}}{\sum_{i\in \mathrm{S}_{\mathsf{T}}} c_{i}^{2}}\right) \\
       & = 2\exp\left(-\frac{\epsilon_{\Delta}^{2}}{8|\Lambda|^{2}}
      N_{t}\right).
    \end{split}
  \end{equation}

  For the second term of Eq.~\eqref{eq:honest_reject_split}, we show that $\Delta$ can be reduced below $\epsilon_{t}-\epsilon_{\Delta}$ with an efficiently scaling choice of $N_{t}$.
  Again, since $|\tilde{y}_{i}| = 1$ for $\tilde{y}_{i}\in\{-1,1\}$, $\mathrm{Var}[\tilde{p}_{\mathsf{T}}(\mathsf{P})]\leq 1 / N_{t}$.
  Under honest execution when the reference noise model coincides with the actual noise, each $\tilde{p}_{\mathsf{T}}(\mathsf{P})$ is an unbiased estimator of the noise level of $\mathsf{P}$ in the noise model $\mathcal{E}_{\mathrm{PEC}}$, i.e., $\mathbb{E}[\tilde{p}_{\mathsf{T}}(\mathsf{P})] = \vec{p}_{\mathrm{PEC}}(\mathsf{P})$.
  Using the Cauchy-Schwarz inequality $\mathbb{E}\left[\left|X\right|\right]^{2} \leq \mathbb{E}\left[X^{2}\right]$ for expectation values, we obtain
  \begin{equation}\label{eq:mean_abs_gap_P}
    \begin{split}
      \mathbb{E}\left[\left|\vec{p}_{\mathrm{PEC}}(\mathsf{P}) - \tilde{p}_{\mathsf{T}}(\mathsf{P})\right|\right]^{2}
       & \leq \mathbb{E}\left[\left(\vec{p}_{\mathrm{PEC}}(\mathsf{P}) - \tilde{p}_{\mathsf{T}}(\mathsf{P})\right)^{2}\right]                     \\
       & = \mathbb{E}\left[\left(\mathbb{E}\left[\tilde{p}_{\mathsf{T}}(\mathsf{P})\right] - \tilde{p}_{\mathsf{T}}(\mathsf{P})\right)^{2}\right] \\
       & = \operatorname{Var}\left[\tilde{p}_{\mathsf{T}}(\mathsf{P})\right]                                                                      \\
       & \leq \frac{1}{N_{t}}.
    \end{split}
  \end{equation}
  Using Eq.~\eqref{eq:inequality_triangle_D_Lambda}, we have
  \begin{equation}\label{eq:mean_honest_delta_correctness}
    \begin{split}
      \Delta
       & = \mathbb{E}\left[\mathrm{D}_{\Lambda}\left(\tilde{p}_{\mathsf{T}},\vec{p}_{\mathrm{PEC}}\right)\right]                                       \\
       & \leq 2\sum_{\mathsf{P}\in\Lambda} \mathbb{E}\left[\left|\tilde{p}_{\mathsf{T}}(\mathsf{P}) - \vec{p}_{\mathrm{PEC}}(\mathsf{P})\right|\right]
      \leq \frac{2|\Lambda|}{\sqrt{N_{t}}}.
    \end{split}
  \end{equation}

  Therefore, so that $\operatorname{Pr}\left[\Delta > \epsilon_{t}-\epsilon_{\Delta}\right]=0$ holds, it is sufficient that $\Delta \leq\epsilon_{t}-\epsilon_{\Delta}$ holds, which is then guaranteed by the condition
  \begin{equation}\label{eq:condition_second_term_zero_general}
    \begin{split}
      N_{t} \geq \frac {4|\Lambda|^{2}} {\left(\epsilon_{t}-\epsilon_{\Delta}\right)^{2}}.
    \end{split}
  \end{equation}
  Finally, choosing $\epsilon_{\Delta}=\epsilon_{t}/2$, Eq.~\eqref{eq:condition_second_term_zero_general} becomes
  \begin{equation}\label{eq:condition_honest_pec_shots}
    \begin{split}
      N_{t} \geq \frac {16|\Lambda|^{2}} {\epsilon_{t}^{2}}.
    \end{split}
  \end{equation}
  Under this condition, the second term in Eq.~\eqref{eq:honest_reject_split} vanishes, and Eq.~\eqref{eq:mcdiarmid_first_term_correctness} gives
  \begin{equation}\label{eq:bound_honest_vbpec_reject}
    \begin{split}
      \operatorname{Pr}[\tilde{\Delta} > \epsilon_{t}]
       & \leq 2\exp\left(- \frac{\epsilon_{t}^{2}}{32|\Lambda|^{2}}
      N_{t}\right).
    \end{split}
  \end{equation}

  Combining Eq.~\eqref{eq:bound_honest_sdpec_reject} and Eq.~\eqref{eq:bound_honest_vbpec_reject}, we can finally upper-bound the distinguishability for correctness proof by
  \begin{equation}\label{eq:bound_correctness}
    \begin{split}
      2\exp\left(-\frac{\epsilon^{2}}{2\left\|\vec{q}\right\|_{1}^{2}} N_{c}\right)
      + 2\exp\left(- \frac{\epsilon_{t}^{2}}{32|\Lambda|^{2}}N_{t}\right),
    \end{split}
  \end{equation}
  conditioned on Eq.~\eqref{eq:condition_honest_pec_shots}.
  We can thus conclude that, for $\epsilon$ fixed and taking $\epsilon_{t} = \epsilon / 2$, the distinguishing advantage in the correctness under the honest Server is a negligible function of both $N_{c}$ and $N_{t}$.

\end{proof}

\begin{proof}[\textbf{Proof of security}]
  The security proof uses the composability of UBQC.
  We construct a simulator, attached to the Server interface of the SDPEC resource, that emulates the Server-side transcript of Protocol~\ref{protocol:vbpec} so that the SDPEC resource produces a transcript that is indistinguishable from the concrete protocol.
  This simulator can be easily constructed from the simulator designed to prove the security of UBQC, as described below.

  \begin{enumerate}
    \item The Simulator sets $e=1$.
    \item The Simulator prepares EPR pairs for each qubit that the Server is supposed to receive in Protocol~\ref{protocol:vbpec}.
    \item The Simulator sends all half-EPR pairs to the Server and instructs the Server to perform measurements at random angles, and retrieves the alleged measurement outcomes.
    \item The Simulator forwards the second half of the EPR pairs, the chosen angles, and the received bits to the SDPEC resource.
    \item The Simulator samples at random indices within $\left[N_{c} + N_{t}\right]$ to define the sets $\mathrm{S}_{\mathsf{C}}$, and $\mathrm{S}_{\mathsf{T}}$.
    \item The Simulator decides the trap patterns $\{S_{i}\}_{i\in\mathrm{S}_{\mathsf{T}}}$ associated with each test round in $\mathrm{S}_{\mathsf{T}}$ and passes this to the SDPEC resource.
  \end{enumerate}

  Following the security proof of UBQC, the information passed per round, together with the type of each round, is sufficient for the SDPEC resource to generate per-round measurement results following the same probability distributions as the ones obtained when running Protocol~\ref{protocol:vbpec}.
  The SDPEC resource then computes the PEC-ed estimator $o$ from computation rounds and the test output $\tilde{\Delta}$ to check whether $\tilde{\Delta}\leq \epsilon_{t}$.
  If this is not satisfied, the resource sets $s = \mathsf{Abort}$.
  Using the computed value $s$, the ideal resource then performs steps 3, 4, and 5 in the SDPEC resource, which further check whether $s$ is within $\epsilon$ of $\operatorname{Tr}\left[\rho O\right]$.

  We remark that the perfect blindness of UBQC ensures that the value $s$ computed by using the quantum state $\sigma$ provided by the simulator, i.e. the half EPR pairs, and the classical instructions, follows the same distribution as $\tilde{o}_{\mathsf{C}}$ in Protocol~\ref{protocol:vbpec}.
  The difference arises only from the additional check $|s-\operatorname{Tr}[\rho O]|>\epsilon$ by the resource: the ideal resource may reject in cases where the real protocol accepts.

  This means that the distinguishing advantage is equal to the total variation distance between the probability distributions of $s$ and $o$, or equivalently between $o$ and $\tilde{o}$.
  Because, conditioned on $o, \tilde{o} \in \mathbb R$ or $o, \tilde{o} = \mathsf{Abort}$, the distributions of $o$ and $\tilde{o}$ are the same, the total variation distance reduces to the difference in the $\mathsf{Abort}$ probability:
  \begin{equation}\label{eq:distinguishing_advantage_vboe}
    \begin{split}
       & \left|\operatorname{Pr}\left[o = \mathsf{Abort}] - \operatorname{Pr}[\tilde{o} = \mathsf{Abort}\right]\right|                            \\
       & \quad = \operatorname{Pr}\left[\tilde{o} \neq \mathsf{Abort} \wedge |\tilde{o} -\operatorname{Tr}\left[\rho O\right]| > \epsilon \right] \\
       & \quad = \operatorname{Pr}\left[\mathsf{Acc} \wedge \mathsf{Bad} \right].
    \end{split}
  \end{equation}

  We here analyze the averaged computation and test statistics directly.
  Under QOTP and blindness, the Server's arbitrary attack can be represented as a convex mixture of global Pauli deviations across all rounds, while the Server remains ignorant of the random partition into computation and test rounds.
  It is therefore sufficient to condition on an arbitrary fixed component of this mixture.
  Namely, we condition on Pauli deviations $\left(\mathsf{P}_{1},\ldots,\mathsf{P}_{N}\right)$, where the sequence may be chosen adversarially and need not be identical across rounds.
  The remaining randomness comes from the Client's random partition, trap-pattern choices, PEC sampling, and measurement outcomes.

  Conditioned on such arbitrary sequence of  Pauli deviations $(\mathsf{P}_{1},\ldots,\mathsf{P}_{N})$, and for each realized random partition $[N]=\mathrm{S}_{\mathsf{C}}\sqcup\mathrm{S}_{\mathsf{T}}$, we formally define the empirical distributions of the deviations on the computation and test rounds as
  \begin{equation}\label{eq:define_p_c_p_t_security}
    \begin{split}
      \vec{p}_{\mathsf{C}}
      = \frac{1}{N_{c}}\sum_{i\in\mathrm{S}_{\mathsf{C}}}\mathbf{e}_{\mathsf{P}_{i}},
      \quad
      \vec{p}_{\mathsf{T}}
      = \frac{1}{N_{t}}\sum_{i\in\mathrm{S}_{\mathsf{T}}}\mathbf{e}_{\mathsf{P}_{i}}.
    \end{split}
  \end{equation}
  Here, $\mathbf{e}_{\mathsf{P}_{i}}$ denotes the unit vector supported on $\mathsf{P}_{i}$.
  Later on, we show that the gap between $\vec{p}_{\mathsf{C}}$ and $\vec{p}_{\mathsf{T}}$ becomes exponentially small as $N_{c}$ and $N_{t}$ scale up, thanks to randomly mixing the locations of test and computation rounds.

  Using $\vec{p}_{\mathsf{T}}$, we also introduce
  \begin{equation}\label{eq:Delta_T}
    \begin{split}
      \Delta_{\mathsf{T}}
       & = \mathbb{E}[\tilde{\Delta}\mid \vec{p}_{\mathsf{T}}]
      = \mathbb{E}[\mathrm{D}_{\Lambda}(\tilde{p}_{\mathsf{T}}, \vec{p}_{\mathrm{PEC}}) \mid \vec{p}_{\mathsf{T}}],
    \end{split}
  \end{equation}
  where the expectation is over the randomness of the test-round trap patterns and their decoded outputs, conditioned on the averaged empirical test-round deviation $\vec{p}_{\mathsf{T}}$.
  Similarly, we define $\mu_{\mathsf{C}} = \mathbb{E}\left[\tilde{o}_{\mathsf{C}}\mid \vec{p}_{\mathsf{C}}\right]$, where the expectation is over the PEC sampling and measurement outcomes in the computation rounds, conditioned on $\vec{p}_{\mathsf{C}}$.
  .

  For the bias of the output $\tilde{o}_{\mathsf{C}}$ processed by computation rounds, the triangle inequality yields
  \begin{equation}\label{eq:ineq_triangle_computation}
    \begin{split}
      \left|\tilde{o}_{\mathsf{C}} - \operatorname{Tr}[\rho O]\right|
       & \leq \left|\tilde{o}_{\mathsf{C}} - \mu_{\mathsf{C}}\right|
      + \left|\mu_{\mathsf{C}} - \operatorname{Tr}[\rho O]\right|.
    \end{split}
  \end{equation}
  For the second term on the right-hand side, the inequality in Lemma~\ref{lemma:link_bias_test} gives
    \begin{equation}\label{eq:bias_by_p_c_security}
      \begin{split}
        \left|\mu_{\mathsf{C}} - \operatorname{Tr}[\rho O]\right|
         & \leq \left\|\vec{q}\right\|_{1} \left(\Delta_{\mathsf{T}}
        + \mathrm{D}_{\Lambda}\left(\vec{p}_{\mathsf{C}},\vec{p}_{\mathsf{T}}\right)\right).
      \end{split}
    \end{equation}

  We next relate the acceptance condition to the computation bias.
  Let $\gamma_{c},\gamma_{t},\gamma_{s}>0$.
  Suppose that the following three fluctuation bounds hold:
  \begin{equation}\label{eq:conditions_gamma_security_corrected}
    \begin{split}
      \left|\tilde{o}_{\mathsf{C}}-\mu_{\mathsf{C}}\right|
       & \leq \gamma_{c}, \\
      \left|\tilde{\Delta}-\Delta_{\mathsf{T}}\right|
       & \leq \gamma_{t}, \\
      \mathrm{D}_{\Lambda}\left(\vec{p}_{\mathsf{C}},\vec{p}_{\mathsf{T}}\right)
       & \leq \gamma_{s}.
    \end{split}
  \end{equation}
  When the event $\mathsf{Acc}$ occurs, we have $\tilde{\Delta}\leq\epsilon_{t}$.
  Together with the second inequality in Eq.~\eqref{eq:conditions_gamma_security_corrected}, this implies
  \begin{equation}\label{eq:delta_t_bound_from_acceptance_security}
    \begin{split}
      \Delta_{\mathsf{T}}
       & \leq \tilde{\Delta}+|\tilde{\Delta}-\Delta_{\mathsf{T}}|
      \leq \epsilon_{t}+\gamma_{t}.
    \end{split}
  \end{equation}
  Combining Eqs.~\eqref{eq:ineq_triangle_computation}--\eqref{eq:delta_t_bound_from_acceptance_security}, we obtain
  \begin{equation}\label{eq:good_from_acceptance_security_corrected}
    \begin{split}
      \left|\tilde{o}_{\mathsf{C}}-\operatorname{Tr}[\rho O]\right|
       & \leq \gamma_{c}
      + \left\|\vec{q}\right\|_{1}\left(\epsilon_{t}+\gamma_{t}+\gamma_{s}\right).
    \end{split}
  \end{equation}
  Therefore, if $\gamma_{c} + \left\|\vec{q}\right\|_{1}\left(\epsilon_{t}+\gamma_{t}+\gamma_{s}\right)\leq \epsilon$, then $\mathsf{Acc}$ together with the three bounds in Eq.~\eqref{eq:conditions_gamma_security_corrected} implies $\mathsf{Good}$.
  By contraposition, we have
  \begin{equation}\label{eq:acc_bad_subset_three_fluctuations}
    \begin{split}
      \{\mathsf{Acc}\wedge\mathsf{Bad}\}
       & \subseteq \left\{\left|\tilde{o}_{\mathsf{C}}-\mu_{\mathsf{C}}\right|>\gamma_{c}\right\}                         \\
       & \quad \cup \{|\tilde{\Delta}-\Delta_{\mathsf{T}}|>\gamma_{t}\}                                                   \\
       & \quad \cup \left\{\mathrm{D}_{\Lambda}\left(\vec{p}_{\mathsf{C}},\vec{p}_{\mathsf{T}}\right)>\gamma_{s}\right\}.
    \end{split}
  \end{equation}
  Applying the union bound gives
  \begin{equation}\label{eq:union_bound_security_corrected}
    \begin{split}
      \operatorname{Pr}[\mathsf{Acc}\wedge\mathsf{Bad}]
       & \leq \operatorname{Pr}\left[\left|\tilde{o}_{\mathsf{C}}-\mu_{\mathsf{C}}\right|>\gamma_{c}\right]                           \\
       & \quad + \operatorname{Pr}[|\tilde{\Delta}-\Delta_{\mathsf{T}}|>\gamma_{t}]                                                   \\
       & \quad + \operatorname{Pr}\left[\mathrm{D}_{\Lambda}\left(\vec{p}_{\mathsf{C}},\vec{p}_{\mathsf{T}}\right)>\gamma_{s}\right].
    \end{split}
  \end{equation}

  We now bound the three terms on the right-hand side separately.
  First, conditioned on the fixed Pauli sequence and on $\mathrm{S}_{\mathsf{C}}$, each PEC-weighted computation-round output is bounded in $\left[-\left\|\vec{q}\right\|_{1},\left\|\vec{q}\right\|_{1}\right]$.
  Applying Hoeffding's inequality to the computation statistic yields
  \begin{equation}\label{eq:hoeffding_comp_security_corrected}
    \begin{split}
      \operatorname{Pr}\left[\left|\tilde{o}_{\mathsf{C}}-\mu_{\mathsf{C}}\right|>\gamma_{c}\right]
       & \leq 2\exp\left(-\frac{\gamma_{c}^{2}}{2\left\|\vec{q}\right\|_{1}^{2}}
      N_{c}\right).
    \end{split}
  \end{equation}
  The bound is uniform over the conditioned Pauli sequence and partition, and therefore also holds unconditionally.

  Second, conditioned on the fixed Pauli sequence and on $\mathrm{S}_{\mathsf{T}}$, the same bounded-difference argument as in the correctness proof applies to $\tilde{\Delta}$.
  Changing a single test-round pair changes $\tilde{\Delta}$ by at most $4|\Lambda|/N_{t}$.
  Hence, McDiarmid's inequality gives
  \begin{equation}\label{eq:mcdiarmid_test_security_corrected}
    \begin{split}
      \operatorname{Pr}\left[|\tilde{\Delta}-\Delta_{\mathsf{T}}|>\gamma_{t}\right]
       & \leq 2\exp\left(-\frac{\gamma_{t}^{2}}{8|\Lambda|^{2}}
      N_{t}\right).
    \end{split}
  \end{equation}

  Third, we bound the mismatch between the averaged deviations observed in the computation and test rounds for each deviation $\mathsf{P}\in\Lambda$.
  To begin with, we define the total averaged frequency $\bar{X}^{(\mathsf{P})}$ for each $\mathsf{P}\in\Lambda$ and each round $i\in[N]$ by
  \begin{equation}
    \begin{split}
      \bar{X}^{(\mathsf{P})}
       & := \frac{1}{N}\sum_{i=1}^{N}X_{i}^{(\mathsf{P})},\quad \text{where}~~
      X_{i}^{(\mathsf{P})}
      := \begin{cases}
           1 & \text{if} ~~ \mathsf{P}_{i}=\mathsf{P}, \\
           0 & \text{otherwise}.                       \\
         \end{cases}
    \end{split}
  \end{equation}
  Using $X_{i}^{(\mathsf{P})}$, each element of the empirical deviation distributions $\vec{p}_{\mathsf{C}}$ and $\vec{p}_{\mathsf{T}}$ can be expressed as
  \begin{equation}
    \begin{split}
      \vec{p}_{\mathsf{C}}(\mathsf{P})
      = \frac{1}{N_{c}} \sum_{i\in\mathrm{S}_{\mathsf{C}}}
      X_{i}^{(\mathsf{P})}, \quad \vec{p}_{\mathsf{T}}(\mathsf{P}) = \frac{1}{N_{t}} \sum_{i\in\mathrm{S}_{\mathsf{T}}} X_{i}^{(\mathsf{P})}.
    \end{split}
  \end{equation}
  Therefore, the gap between $\vec{p}_{\mathsf{C}}$ and $\vec{p}_{\mathsf{T}}$ for each $\mathsf{P}$ becomes
  \begin{equation}
    \begin{split}
      \vec{p}_{\mathsf{C}}(\mathsf{P}) - \vec{p}_{\mathsf{T}}(\mathsf{P})
       & = \frac{N}{N_{c}}\left(\bar{X}^{(\mathsf{P})}-\vec{p}_{\mathsf{T}}(\mathsf{P})\right).
    \end{split}
  \end{equation}
  This implies that the gap between $\vec{p}_{\mathsf{C}}$ and $\vec{p}_{\mathsf{T}}$ is dependent on how distant the test sample average is from the total averaged frequency.
  Since $\vec{p}_{\mathsf{T}}(\mathsf{P})$ is the average of $N_{t}$ samples drawn uniformly without replacement from the total $N$ empirical samples, applying Serfling's inequality to the deviation of $\vec{p}_{\mathsf{T}}(\mathsf{P})$ from $\bar{X}^{(\mathsf{P})}$ yields
  \begin{equation}\label{eq:serfling_single_pauli_security}
    \begin{split}
       & \operatorname{Pr}\left[\left|\vec{p}_{\mathsf{C}}(\mathsf{P})-\vec{p}_{\mathsf{T}}(\mathsf{P})\right|>\frac{\gamma_{s}}{2|\Lambda|}\right] \\
       & \quad= \operatorname{Pr}\left[\left|\vec{p}_{\mathsf{T}}(\mathsf{P}) - \bar{X}^{(\mathsf{P})}\right|>\frac{\gamma_{s}
      N_{c}}{2N|\Lambda|}\right]                                                                                                                    \\ &\quad\leq 2\exp\left(-\frac{N_{t} N_{c}^{2}}{N\left(N_{c}+1\right)}\frac{\gamma_{s}^{2}}{4|\Lambda|^{2}}\right) \\  & \quad\leq 2\exp\left(-\frac{1}{8} \frac{N_{c} N_{t}}{N}\frac{\gamma_{s}^{2}}{|\Lambda|^{2}}\right).
      \\
    \end{split}
  \end{equation}
  Since $\displaystyle \mathrm{D}_{\Lambda}\left(\vec{p}_{\mathsf{C}},\vec{p}_{\mathsf{T}}\right) \leq 2\sum_{\mathsf{P}\in\Lambda}\left|\vec{p}_{\mathsf{C}}(\mathsf{P})-\vec{p}_{\mathsf{T}}(\mathsf{P})\right|$, the union bound over $\mathsf{P}\in\Lambda$ gives
  \begin{equation}\label{eq:serfling_split_security_corrected}
    \begin{split}
      \operatorname{Pr}\left[\mathrm{D}_{\Lambda}\left(\vec{p}_{\mathsf{C}},\vec{p}_{\mathsf{T}}\right)>\gamma_{s}\right]
       & \leq \sum_{\mathsf{P}\in\Lambda}\operatorname{Pr}\left[\left|\vec{p}_{\mathsf{C}}(\mathsf{P})-\vec{p}_{\mathsf{T}}(\mathsf{P})\right|>\frac{\gamma_{s}}{2|\Lambda|}\right] \\
       & \leq 2|\Lambda| \exp\left(-\frac{1}{8} \frac{N_{c}
        N_{t}}{N}\frac{\gamma_{s}^{2}}{|\Lambda|^{2}}\right).
    \end{split}
  \end{equation}

  Combining Eqs.~\eqref{eq:union_bound_security_corrected}--\eqref{eq:serfling_split_security_corrected} and choosing parameters
  \begin{equation}\label{eq:gamma_choice_security_corrected}
    \begin{split}
      \gamma_{c}
       & = \frac{\epsilon}{4},
      \quad
      \epsilon_{t}
      = \gamma_{t}
      = \gamma_{s}
      = \frac{\epsilon}{4\left\|\vec{q}\right\|_{1}},
    \end{split}
  \end{equation}
  so that they satisfy $\gamma_{c} + \left\|\vec{q}\right\|_{1}\left(\epsilon_{t}+\gamma_{t}+\gamma_{s}\right)\leq \epsilon$, we finally obtain
  \begin{equation}\label{eq:final_bound_acc_bad_security_corrected}
    \begin{split}
      \operatorname{Pr}[\mathsf{Acc}\wedge\mathsf{Bad}]
       & \leq 2\exp\left(-\frac{\epsilon^{2}}{32\left\|\vec{q}\right\|_{1}^{2}}
      N_{c}\right)                                                              \\ &\quad + 2\exp\left(-\frac{\epsilon^{2}}{128\left\|\vec{q}\right\|_{1}^{2}|\Lambda|^{2}}N_{t}\right) \\ &\quad + 2|\Lambda|\exp\left(-\frac{\epsilon^{2}}{128\left\|\vec{q}\right\|_{1}^{2}|\Lambda|^{2}}\frac{N_{c}N_{t}}{N}\right).
    \end{split}
  \end{equation}
  For a polynomially bounded $|\Lambda|$, this upper bound is negligible in $N_{c}$ and $N_{t}$ whenever $N_{t}/|\Lambda|^{2}$ grows super-logarithmically.

\end{proof}

\section{Noise-robustness of VBPEC\label{sec:robustness}}

Having proved the security of VBPEC, we now analyze its noise robustness.
This section clarifies two distinct notions of robustness.
VBPEC retains passive structural robustness introduced in \cite{Leichtle2021Verifying} and exemplified for observable estimation tasks in~\cite{Yang2025Verifiable}.
For instance, when applied to VBOE, such structural robustness allows small deviations to be tolerated as long as no more than a constant fraction (eg.
less than 1/4) of test rounds  are failed.
In such case, the protocol still accepts and produces a correct estimate with high probability.
In the case of VBPEC, similar deviations would make PEC no longer exact, yet the protocol would still accept and produce an accurate estimate with high probability.
This is because the mismatch between the actual deviation and expected noise $D_{\mathrm{mis}}=\mathrm{D}_{\Lambda}(\vec{p},\vec{p}_{\mathrm{PEC}})$ remains sufficiently small.

However, this robustness is passive, since the protocol does not suppress the noise affecting the computation rounds.
VBPEC provides a stronger, active form of robustness in the noise-matched regime.
Here, we call an honest execution noise-matched if the actual noise affecting the computation rounds coincides with the reference noise model used for PEC.
When the Server's honest deviation follows the PEC noise model $\mathcal{E}_{\mathrm{PEC}}$, VBPEC can perfectly invert this noise in the computation rounds.
Thus, honest matched noise is transformed by PEC from a source of $\mathsf{Abort}$ into a sampling overhead that is controlled by $\left\|\vec{q}\right\|_{1}$.

\subsection{Robustness under honest mismatched-noise }

We analyze the robustness of VBPEC when the Server is honest, but the actual noise does not exactly match the noise model used for PEC.
We assume that each round is independently affected by the same actual stochastic Pauli channel with averaged deviation distribution $\vec{p} \in \mathbb{R}_{\geq 0}^{|\Omega_{V}|}$.
We quantify the honest noise mismatch by
\begin{equation}
  \begin{split}
    D_{\mathrm{mis}} := \left\|\vec{p} - \vec{p}_{\mathrm{PEC}}\right\|_{1} = \mathrm{D}_{\Lambda}(\vec{p}, \vec{p}_{\mathrm{PEC}}).
  \end{split}
\end{equation}
Here, $D_{\mathrm{mis}}$ measures the mismatch between the actual deviation and the noise model.

The detailed concentration analysis is given in Lemma~\ref{lemma:noise_robustness_honest_mismatch_vbpec} in Appendix~\ref{appendix:proof_noise-robustness}.
Setting $\epsilon_{t} = \epsilon / 4$ in Lemma~\ref{lemma:noise_robustness_honest_mismatch_vbpec}, one can make the probability $\operatorname{Pr}\left[\mathsf{Acc}\wedge\mathsf{Good}\right]$ exponentially close to one if
\begin{equation}
  \begin{split}
    D_{\mathrm{mis}} < \min\left\{\frac{\epsilon}{\left\|\vec{q}\right\|_{1}}, \frac{\epsilon}{4}\right\}, \quad N_{t} \geq \frac{16|\Lambda|^{2}}{(\epsilon/4 - D_{\mathrm{mis}})^{2}}.
  \end{split}
\end{equation}

Thus, in such regime, VBPEC would still accept the computation with high probability and, because security holds, produce a result that is within $\epsilon$ of the correct value as requested by the idea resource the protocol implements.
In short, VBPEC works on imperfectly characterized devices.

\subsection{Robustness under honest matched-noise}

We now evaluate the probability that VBPEC accepts the correct outcome when the Server's deviation follows exactly the PEC noise model $\mathcal{E}_{\mathrm{PEC}}$.
In this setup, the VBPEC estimator is unbiased for the ideal expectation value, and the acceptance of test rounds is affected only by finite-shot fluctuations.
Choosing $\epsilon_{t} = \epsilon / (4\left\|\vec{q}\right\|_{1})$ in Eq.~\eqref{eq:condition_honest_pec_shots} and Eq.~\eqref{eq:bound_honest_vbpec_reject}, we obtain
\begin{equation}
  \begin{split}
    \operatorname{Pr}[\mathsf{Acc}\wedge\mathsf{Good}]
     & = \operatorname{Pr}[\mathsf{Acc}] - \operatorname{Pr}[\mathsf{Acc}\wedge\mathsf{Bad}]                                            \\
     & = 1 - \operatorname{Pr}[\mathsf{Rej}] - \operatorname{Pr}[\mathsf{Acc}\wedge\mathsf{Bad}]                                        \\
     & \geq 1 - 2\exp\left(- \frac{\epsilon^{2}}{512\left\|\vec{q}\right\|_{1}^{2}|\Lambda|^{2}}N_{t}\right)                            \\
     & \quad\quad - 2\exp\left(- \frac{\epsilon^{2}}{32\left\|\vec{q}\right\|_{1}^{2}}N_{c}\right)                                      \\
     & \quad\quad - 2\exp\left(- \frac{\epsilon^{2}}{128\left\|\vec{q}\right\|_{1}^{2}|\Lambda|^{2}}N_{t}\right)                        \\
     & \quad\quad - 2|\Lambda|\exp\left(-\frac{\epsilon^{2}}{128\left\|\vec{q}\right\|_{1}^{2}|\Lambda|^{2}}\frac{N_{c}N_{t}}{N}\right)
  \end{split}
\end{equation}
conditioned on
\begin{equation}
  \begin{split}
    N_{t} \geq \frac {256|\Lambda|^{2}\left\|\vec{q}\right\|_{1}^{2}} {\epsilon^{2}},
  \end{split}
\end{equation}
which is obtained by assigning $\epsilon_{t} = \epsilon / (4\left\|\vec{q}\right\|_{1})$ in Eq.~\eqref{eq:condition_honest_pec_shots}.
This implies that, by taking a sufficient shot count $N = N_{c} + N_{t}$, the probability $\operatorname{Pr}[\mathsf{Acc}\wedge\mathsf{Good}]$ by VBPEC can be made exponentially close to one.

From Corollary~\ref{corollary:noise-robustness_honest_VBOE}, the VBOE protocol requires the condition $2p<\epsilon_{t} = \epsilon/4$, or equivalently $p<\epsilon/8$, for the non-identity probability $p$ of the actual noise channel to make $\operatorname{Pr}[\mathsf{Acc}\wedge\mathsf{Good}]$ exponentially close to one.
When $2p>\epsilon/4$, VBOE accepts the outcome with negligibly small probability, as shown in Lemma~\ref{lemma:probability_acceptance_honest_VBOE} in Appendix~\ref{appendix:proof_noise-robustness}.

To illustrate the impact of this robustness ability, one can use an extreme example where the noise only affects the output qubit but with very high probability.
That is $\mathsf{I}$ would be applied with probability $1/2 + 1/100$ whereas $\mathsf{Z}$ would be applied with probability $1/2 - 1/100$.
Clearly this noise can be well error mitigated using PEC but passive robustness such as the one exhibited by VBOE would be inapplicable as the number of failed test rounds under VBOE would exceed the 25\% maximum value after which any protocol would always reject (see~\cite{Kapourniotis2024Unifying} for the derivation of this 25\% bound).
Here, on the contrary, with an appropriate choice of parameters, the VBPEC would accept with high probability and provide the correct result.

\section{Discussion\label{sec:discussion}}

\begin{table*}[htbp]
  \centering
  \renewcommand{\arraystretch}{2.0}
  \begin{tabular}{ l|c|c|c }
    \hline\hline
                     & RVBQC~\cite{Leichtle2021Verifying}   & VBOE~\cite{Yang2025Verifiable}       & \textbf{VBPEC (this work)}           \\
    \hline
    Computation      & (randomised) discrete output         & \textbf{continuous-valued estimator} & \textbf{continuous-valued estimator} \\
    \hline
    Test             & count of failed deterministic rounds & count of failed deterministic rounds & \textbf{continuous-valued estimator} \\
    \hline\hline
    Noise-robustness & inherent robustness only             & inherent robustness only             & \makecell{inherent robustness,       \\ \textbf{active error suppression}} \\
    \hline\hline
  \end{tabular}
  \caption{
    Comparison of the round outputs and noise-robustness mechanisms in RVBQC, VBOE, and VBPEC.
    Unlike RVBQC and VBOE, whose test rounds yield deterministic pass/fail outcomes, VBPEC uses continuous-valued test estimators and therefore requires a concentration-based security proof that controls deviations from their mean values.
    Adding PEC allows the VBPEC protocol to incorporate active error suppression in addition to inherent robustness.
  }
  \label{tab:comparison_round_output_types}
\end{table*}

We have introduced Verifiable Blind Probabilistic Error Cancellation (VBPEC), the first protocol that brings QEM within the scope of composable cryptographic security.
Our main result, Theorem~\ref{theorem:vbpec}, shows that VBPEC constructs the ideal SDPEC resource with exponentially negligible error under polynomial-time execution.
It thereby provides an end-to-end correctness guarantee for error-suppressed delegated quantum computation, closing a longstanding gap between cryptographic verification and the practical QEM techniques needed for near-term and pre-fault-tolerant devices.

Moreover, VBPEC enables active error suppression using QEM within the cryptographic verification protocol.
This goes beyond its inherent noise robustness, which is limited by the acceptance threshold relative to the physical noise level.
It thereby makes the verification protocol substantially more practical by increasing the acceptance probability without introducing additional quantum-space overhead or compromising the underlying security guarantee.

\subsection{Advantages in QEM}

Our first contribution is to bridge QEM and cryptographic verification for the first time, identifying their technical compatibility and establishing mutually beneficial connections.
QEM techniques are usually designed under modelling, calibration, and hardware-specific assumptions, making their end-to-end execution difficult to certify.
This issue is particularly acute for PEC, whose accuracy depends directly on the consistency between the reference noise model and the actual device noise.
VBPEC addresses this problem by formalising PEC as an ideal cryptographic resource and proving that it can be constructed securely from a completely untrusted quantum server.

By exploiting the MBQC structure, VBPEC also provides several advantages for implementing PEC itself.
In MBQC, PEC corrections are realized as classical Pauli-frame updates, so cancellation does not require additional noisy quantum gates.
Additionally, although QEM methods can be designed for individual physical operations in MBQC~\cite{Hartung2024Real-time, Koh2026Readout}, VBPEC avoids the need to treat such components separately.
The QOTP randomization inherent in VBQC-type protocols effectively Pauli-twirls the server’s behavior, reducing arbitrary deviations, from the client’s perspective, to stochastic Pauli deviations.
Hence, as long as the client’s preparation noise is independent of the QOTP parameters, deviations arising from gates, measurements, or other server-side operations can be reliably mitigated by PEC within the MBQC framework.

\subsection{Advantages in verification protocols}

Our second contribution is to extend the post-processing of general trap patterns, revealing that their trap-output statistics encode the Server's effective deviations through a Walsh--Hadamard transform.
This enables the Client to benchmark the Server's harmful deviations, rather than merely certifying that the deviation level remains below a prescribed acceptance threshold.
In other words, while the general trap construction of~\cite{Kapourniotis2024Unifying} can be viewed as a family of stabilizer tests, our observation broadens its role into stabilizer benchmarking.
This perspective also clarifies why naively including PEC in the test rounds themselves is not a valid shortcut: as shown in Appendix~\ref{appendix:PEC_on_test}, doing so would undermine the security structure on which the verification proof relies.

VBPEC also contributes a new proof technique for trap-based verification by introducing statistical test criteria in place of deterministic pass/fail tests.
As shown in Table~\ref{tab:comparison_round_output_types}, both computation and test rounds produce statistical estimators rather than deterministic outputs.
Since individual test rounds in VBPEC no longer provide pass/fail certificates, the usual security proof strategy based on counting failed tests no longer applies.
Instead, our fluctuation-based analysis simultaneously controls deviations of the mitigated computation estimator and of the trap-based noise-model estimators.
It then translates the latter into a bound on the residual estimation bias.
This provides a versatile tool for analysing verification protocols whose correctness is formulated at the level of expectation values rather than deterministic classical outcomes.

\subsection{Advantages in practicality}

A central practical contribution of VBPEC is that it substantially improves the viability of verifiable delegated computation under realistic noisy conditions.
Conventional trap-based protocols tolerate physical noise only below a prescribed threshold.
Once the noise exceeds this threshold, they reject with high probability by design, irrespective of the number of samples collected.
Such threshold-based oversensitivity constitutes a major obstacle to experimental implementation on real hardware, in both off-chip and on-chip settings~\cite{Gustiani2025On, Polacchi2025Experimental, Drmota2024Verifiable}.
By contrast, once VBPEC verifies that the PEC noise model is sufficiently accurate, deviations consistent with that model can be securely mitigated rather than automatically treated as grounds for rejection.
Consequently, VBPEC can accept the mitigated estimator with high probability, at the standard additional sampling cost inherent to QEM.

This contribution is particularly relevant given the growing interest in verifiable quantum advantage based on observable estimation~\cite{Abanin2025Observation, Mi2025A}.
As claims of quantum advantage move toward increasingly complex and practically relevant tasks, cryptographic verification is expected to play a central role in establishing that the reported outcomes were produced correctly, even when the computation is delegated to an untrusted quantum device.
At the same time, realistic demonstrations on near-term hardware are likely to depend critically on QEM, making the reliable execution of QEM itself an essential component of any secure quantum-advantage experiment.
VBPEC addresses both requirements simultaneously.
By embedding PEC directly into a cryptographically secure verification protocol, VBPEC enables secure, reliable mitigation of hardware noise while substantially strengthening the protocol’s robustness to experimentally realistic noise levels.
VBPEC therefore provides a concrete route toward cryptographically secure demonstrations of quantum advantage before the advent of the fully fault-tolerant regime, while also opening the door to the secure integration of other device-level noise-mitigation techniques.

\subsection{Future work}

Although Theorem~\ref{theorem:vbpec} is formulated under the assumption that $|\Lambda|$ scales polynomially with $|V|$, conventional trap constructions~\cite{Fitzsimons2017Unconditionally, Leichtle2021Verifying, Kapourniotis2024Unifying, Yang2025Verifiable} immediately imply that one can efficiently verify whether the deviation is sufficiently close to the identity channel regardless of the shape of non-identity deviations.
We here conjecture that, for classes of noise models admitting a description in terms of only polynomially many parameters, it may be possible to efficiently verify whether the actual deviation is consistent with the given noise model.
Such a verification procedure would be particularly valuable for sparse Pauli--Lindblad models~\cite{VandenBerg2023Probabilistic}, which constitute a representative example of this class and are of significant practical interest for verifying the characterised noise model on current quantum hardware.

This work also suggests several directions towards a general theory of verifiable QEM.
The first direction is to extend the analysis to probabilistic error amplification (PEA)~\cite{Mari2021Extending, Kim2023Evidence}, which, as observed in Section~\ref{sec:vbpec}, can be implemented blindly in the same spirit as UBPEC.
The second direction is to make the protocol agnostic to the underlying noise model.
Rather than assuming that the Client is given an accurate reference model for PEC, one would like to sample, learn, and verify the relevant noise parameters on the fly during the protocol execution, as envisioned in~\cite{Xie2026Noise}.
This would strengthen both the practical security and the practical applicability of VBPEC by replacing model trust with statistical certification.

More broadly, these directions point towards a cryptographic framework for securely implementing a wider class of QEM methods~\cite{Czarnik2021Error, Yang2022Efficient, Yang2024Quantum, Liu2024Virtual} and other device-efficient quantum techniques~\cite{Peng2020Simulating, Yuan2021Quantum, Harada2025Density, Yang2025Resource}.
In particular, formalising more general QEM ansatzes, expressed for instance as linear combinations or polynomial transformations of density matrices, as secure ideal resources would provide a unified cryptographic understanding of hybrid quantum-classical procedures.

It is also important to adapt VBPEC to other computational models and hardware regimes.
Optimising VBPEC by incorporating techniques that reduce the amount of quantum communication while preserving composable security~\cite{Sater2026Composable} would be a key step towards more practical near-term implementations.
Combining verifiable QEM with secure fault-tolerant implementations~\cite{Kapourniotis2025Plugging} may further enable secure execution on early fault-tolerant quantum devices.
Together, these extensions would clarify how verifiable error suppression can be deployed across the transition from noisy intermediate-scale devices to early fault-tolerant architectures.

Finally, the test-round construction developed here can also be viewed as a form of verified noise benchmarking, and may therefore serve as a useful primitive for certifying learning-based tasks, including protocols based on classical shadows~\cite{Huang2020Predicting, Seif2023Shadow, Jnane2024Quantum}.
Such directions may ultimately help identify, from a cryptographic perspective, which quantum tasks can be made securely verifiable and how this relates to the boundary of classical simulability.

\begin{acknowledgments}
  B.Y.
  acknowledges the insightful and fruitful discussions with Alireza Seif from IBM Quantum and Dominik Leichtle from the University of Edinburgh.
  All authors received funding from the ANR research grants ANR-21-CE47-0014 (SecNISQ) and ANR-22-PNCQ-0002 (HQI).
\end{acknowledgments}

\appendix

\section{Concentration inequalities\label{appendix:concentration_inequalities}}

\begin{lemma}[Hoeffding's inequality]
  \label{lemma:ineq_hoeffding}
  Let $X_{1},\dots,X_{n}$ be $n\in\mathbb{N}$ independent random variables such that $a_{i} \leq X_{i} \leq b_{i}$ almost surely for each $i\in[n]$.
  Let $\displaystyle S_{n} = \sum_{i=1}^{n} X_{i}$.
  Then for any $t > 0$,
  \begin{equation}
    \begin{split}
      \operatorname{Pr}[S_n - \mathbb{E}[S_{n}] \leq -t]
       & \leq
      \exp\left(
      -\frac{2 t^{2}}{\sum_{i=1}^{n} (b_{i} - a_{i})^{2}}
      \right), \\
      \operatorname{Pr}[S_n - \mathbb{E}[S_{n}] \geq t]
       & \leq
      \exp\left(
      -\frac{2 t^{2}}{\sum_{i=1}^{n} (b_{i} - a_{i})^{2}}
      \right).
    \end{split}
  \end{equation}
\end{lemma}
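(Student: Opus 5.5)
The statement is the one-sided Hoeffding bound (Lemma~\ref{lemma:ineq_hoeffding}). I will prove it with the standard Chernoff--Cramér exponential moment method. It suffices to prove the upper-tail inequality $\operatorname{Pr}[S_n-\mathbb{E}[S_n]\geq t]\leq \exp(-2t^2/\sum_i(b_i-a_i)^2)$. The lower-tail inequality then follows by applying the same argument to the variables $-X_i$, which satisfy $-b_i\leq -X_i\leq -a_i$ and have the same ranges $b_i-a_i$.

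For any $\lambda>0$, Markov's inequality applied to the nonnegative variable $e^{\lambda(S_n-\mathbb{E}[S_n])}$ gives
\begin{equation}
  \operatorname{Pr}[S_n-\mathbb{E}[S_n]\geq t]\leq e^{-\lambda t}\,\mathbb{E}\left[e^{\lambda(S_n-\mathbb{E}[S_n])}\right]=e^{-\lambda t}\prod_{i=1}^{n}\mathbb{E}\left[e^{\lambda(X_i-\mathbb{E}[X_i])}\right].
\end{equation}
The factorization into a product uses the independence of the $X_i$.

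The key step, and the only nontrivial one, is Hoeffding's lemma: if $Y$ has mean zero and $a\leq Y\leq b$ almost surely, then $\mathbb{E}[e^{\lambda Y}]\leq e^{\lambda^2(b-a)^2/8}$. I will prove it in three moves.
\begin{enumerate}
  \item By convexity of $y\mapsto e^{\lambda y}$ on $[a,b]$, $e^{\lambda y}\leq \frac{b-y}{b-a}e^{\lambda a}+\frac{y-a}{b-a}e^{\lambda b}$. Taking expectations with $\mathbb{E}[Y]=0$ gives $\mathbb{E}[e^{\lambda Y}]\leq e^{L(h)}$, where $h=\lambda(b-a)$, $p=-a/(b-a)\in[0,1]$, and $L(h)=-hp+\log(1-p+pe^{h})$.
  \item Direct computation gives $L(0)=L'(0)=0$. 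Also $L''(h)=u(1-u)\leq 1/4$, where $u=pe^h/(1-p+pe^h)$.
  \item Taylor's theorem with remainder then yields $L(h)\leq h^2/8$.
\end{enumerate}
The case $a=b$ is trivial, so we may assume $a<b$. Apply the lemma to $Y_i=X_i-\mathbb{E}[X_i]$, whose range has the same length $b_i-a_i$.

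Combining the pieces gives
\begin{equation}
  \operatorname{Pr}[S_n-\mathbb{E}[S_n]\geq t]\leq \exp\left(-\lambda t+\frac{\lambda^2}{8}\sum_{i=1}^{n}(b_i-a_i)^2\right).
\end{equation}
Minimizing the exponent over $\lambda>0$ gives the choice $\lambda=4t/\sum_i(b_i-a_i)^2$, and substituting it yields the claimed bound $\exp(-2t^2/\sum_i(b_i-a_i)^2)$.
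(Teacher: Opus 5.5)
Your proof is correct. The paper states this lemma without proof, as a standard tool, and your argument is exactly the textbook proof it implicitly relies on: a Chernoff bound combined with Hoeffding's lemma, proved by the convexity reduction to $L(h)$, the bound $L''\le 1/4$, and a second-order Taylor expansion. The only loose end is cosmetic: when $\sum_i(b_i-a_i)^2=0$ your optimizing $\lambda$ is undefined, but then $S_n=\mathbb{E}[S_n]$ almost surely and the bound holds trivially under the convention $e^{-\infty}=0$.
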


\begin{lemma}[McDiarmid's inequality~\cite{McDiarmid1989On}]
  \label{lemma:ineq_mcdiarmid}
  Let $X_{1},\ldots,X_{n}$ be independent random variables, and let $f=f(X_{1},\ldots,X_{n})$ be a real-valued function of them.
  Assume that there exist constants $c_{1},\ldots,c_{n}\geq 0$ such that, for every $i\in [n]$, replacing only the $i$-th input changes the value of $f$ by at most $c_{i}$, namely,
  \begin{equation}
    \begin{split}
      \left|
      f(x_{1},\ldots,x_{i},\ldots,x_{n})
      -
      f(x_{1},\ldots,x_{i}^{\prime},\ldots,x_{n})
      \right|
      \leq c_{i}
    \end{split}
  \end{equation}
  for all possible values $x_{1},\ldots,x_{n},x_{i}^{\prime}$.
  Then, for any $\epsilon>0$, it holds that
  \begin{equation}
    \begin{split}
      \operatorname{Pr}\left[f-\mathbb{E}[f]\leq -\epsilon\right]
       & \leq\exp\left(-\frac{2\epsilon^{2}}{\sum_{i=1}^{n} c_{i}^{2}}\right), \\
      \operatorname{Pr}\left[f-\mathbb{E}[f]\geq \epsilon\right]
       & \leq\exp\left(-\frac{2\epsilon^{2}}{\sum_{i=1}^{n} c_{i}^{2}}\right).
    \end{split}
  \end{equation}
\end{lemma}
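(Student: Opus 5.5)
The statement to prove is McDiarmid's inequality. I would follow the standard martingale (Doob) approach combined with Hoeffding's lemma. Define the Doob martingale $Z_{k}=\mathbb{E}[f\mid X_{1},\ldots,X_{k}]$ for $k=0,\ldots,n$. Then $Z_{0}=\mathbb{E}[f]$ and $Z_{n}=f$, so $f-\mathbb{E}[f]=\sum_{k=1}^{n}D_{k}$ with martingale differences $D_{k}=Z_{k}-Z_{k-1}$. By symmetry (replacing $f$ by $-f$, which has the same bounded differences $c_{i}$), it suffices to prove the upper-tail bound.

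The key step is to show that, conditioned on $X_{1},\ldots,X_{k-1}$, the increment $D_{k}$ lies in an interval of length at most $c_{k}$. Using independence, write $Z_{k}=g_{k}(X_{1},\ldots,X_{k})$ with
\begin{equation*}
g_{k}(x_{1},\ldots,x_{k})=\mathbb{E}[f(x_{1},\ldots,x_{k},X_{k+1},\ldots,X_{n})].
\end{equation*}
Define $L_{k}=\inf_{x}g_{k}(X_{1},\ldots,X_{k-1},x)-Z_{k-1}$ and $U_{k}=\sup_{x}g_{k}(X_{1},\ldots,X_{k-1},x)-Z_{k-1}$. Then $L_{k}\le D_{k}\le U_{k}$. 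Moreover, the bounded-difference hypothesis, applied pointwise inside the expectation over $X_{k+1},\ldots,X_{n}$, gives $U_{k}-L_{k}\le c_{k}$. Independence is essential here: the conditional law of $X_{k+1},\ldots,X_{n}$ does not depend on the value substituted for $X_{k}$, so the two expectations can be compared term by term. Since $\mathbb{E}[D_{k}\mid X_{1},\ldots,X_{k-1}]=0$, Hoeffding's lemma gives
\begin{equation*}
\mathbb{E}\bigl[e^{\lambda D_{k}}\mid X_{1},\ldots,X_{k-1}\bigr]\le e^{\lambda^{2}c_{k}^{2}/8}
\end{equation*}
for every $\lambda>0$.

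Next I would iterate using the tower property, peeling off one increment at a time from $k=n$ down to $k=1$. This yields
\begin{equation*}
\mathbb{E}\bigl[e^{\lambda(f-\mathbb{E}[f])}\bigr]\le \exp\Bigl(\frac{\lambda^{2}}{8}\sum_{i=1}^{n}c_{i}^{2}\Bigr).
\end{equation*}
Markov's inequality then bounds $\operatorname{Pr}[f-\mathbb{E}[f]\ge\epsilon]$ by $\exp\bigl(-\lambda\epsilon+\lambda^{2}\sum_{i}c_{i}^{2}/8\bigr)$. Optimizing at $\lambda=4\epsilon/\sum_{i}c_{i}^{2}$ gives $\exp\bigl(-2\epsilon^{2}/\sum_{i}c_{i}^{2}\bigr)$, which is the claimed bound. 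The lower tail follows by applying the same argument to $-f$.

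I expect the main obstacle to be the interval-width step. One must show $U_{k}-L_{k}\le c_{k}$ rather than the naive bound $|D_{k}|\le c_{k}$, which would cost a factor $4$ in the exponent. This requires that the endpoints $L_{k},U_{k}$ be measurable with respect to $X_{1},\ldots,X_{k-1}$, so that Hoeffding's lemma applies conditionally. Hoeffding's lemma itself (the bound $e^{\lambda^{2}(b-a)^{2}/8}$ for a mean-zero variable supported in $[a,b]$) I would take as the standard ingredient underlying Lemma~\ref{lemma:ineq_hoeffding}. Its proof is by convexity of the exponential and a second-order bound on the log-moment generating function.
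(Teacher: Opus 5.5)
The paper gives no proof of this lemma: it is quoted from McDiarmid's 1989 survey and used as a black box. Your Doob-martingale argument (conditional Hoeffding's lemma on an interval of width $c_k$, tower-property iteration, then the Chernoff bound with $\lambda=4\epsilon/\sum_i c_i^2$) is the standard proof found in that reference, and it is correct.

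Two minor remarks. First, you do not need $L_k,U_k$ to be measurable. By independence and Fubini, you can fix $(x_1,\ldots,x_{k-1})$ and apply Hoeffding's lemma directly to the mean-zero variable $g_k(x_1,\ldots,x_{k-1},X_k)-\mathbb{E}[g_k(x_1,\ldots,x_{k-1},X_k)]$, whose range has width at most $c_k$. Second, the degenerate case $\sum_i c_i^2=0$, where your choice of $\lambda$ is undefined, is trivial because $f$ is then constant.
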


\begin{lemma}[Serfling's inequality for sampling without replacement]
  \label{lemma:ineq_serfling}
  Let $x_{1},\dots,x_{N}$ be a finite population of real numbers, and define $\displaystyle \mu := \frac{1}{N}\sum_{i=1}^{N} x_{i}$, where $x_{i}\in[a,b]$.
  Let $X_{1},\dots,X_{n}$ be drawn uniformly without replacement from $\{x_{1},\dots,x_{N}\}$, and let $\displaystyle \bar X_{n} = \frac{1}{n}\sum_{i=1}^{n} X_{i}$.
  Then for any $t>0$,
  \begin{equation}
    \begin{split}
      \operatorname{Pr}\left[\bar X_{n} - \mu \leq -t\right]
       & \le
      \exp\left(
      -\frac{2 n t^{2}}{\left(1-\frac{n-1}{N}\right)(b-a)^{2}}
      \right), \\
      \operatorname{Pr}\left[\bar X_{n} - \mu \geq t\right]
       & \le
      \exp\left(
      -\frac{2 n t^{2}}{\left(1-\frac{n-1}{N}\right)(b-a)^{2}}
      \right).
    \end{split}
  \end{equation}
\end{lemma}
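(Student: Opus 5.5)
This is Serfling's inequality (Lemma~\ref{lemma:ineq_serfling}), a classical result. I would prove it along Serfling's original route: a martingale that accounts for the finite-population correction, followed by an exponential Chernoff bound.

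Take the upper tail; the lower tail follows by applying the same argument to $-x_{i}$. By translation and scaling, assume $\mu=0$ and reduce to $[a,b]$ with range $b-a$. Let $S_{k}=\sum_{j\le k}X_{j}$. Conditioned on the first $k$ draws, the next draw is uniform over the remaining $N-k$ values, so
\[
  \mathbb{E}[X_{k+1}\mid X_{1},\ldots,X_{k}]=-S_{k}/(N-k).
\]
Define $Z_{k}=S_{k}/(N-k)$ for $k<N$. A direct computation gives
\[
  Z_{k+1}-Z_{k}=\frac{X_{k+1}+Z_{k}}{N-k-1}.
\]
Hence $\{Z_{k}\}$ is a martingale. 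Its increments lie in an interval of length $(b-a)/(N-k-1)$, because $X_{k+1}$ ranges over $[a,b]$ while $Z_{k}$ is fixed given the past.

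Next I would apply Hoeffding's lemma conditionally at each step: $\mathbb{E}[e^{\lambda(Z_{k+1}-Z_{k})}\mid\mathcal F_{k}]\le\exp(\lambda^{2}c_{k}^{2}/8)$ with $c_{k}=(b-a)/(N-k-1)$. Iterating gives
\[
  \mathbb{E}\bigl[e^{\lambda Z_{n}}\bigr]\le\exp\Bigl(\tfrac{\lambda^{2}(b-a)^{2}}{8}\sum_{k=0}^{n-1}(N-k-1)^{-2}\Bigr).
\]
This is Azuma's inequality for $Z_{n}$. The event $\bar X_{n}\ge t$ is the same as $Z_{n}\ge nt/(N-n)$. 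A Chernoff bound with the optimal $\lambda$ then gives
\[
  \exp\Bigl(-\frac{2n^{2}t^{2}}{(N-n)^{2}(b-a)^{2}\sum_{k=0}^{n-1}(N-k-1)^{-2}}\Bigr).
\]

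The main obstacle is the final step: bounding $(N-n)^{2}\sum_{k=0}^{n-1}(N-k-1)^{-2}$ by $n\bigl(1-\tfrac{n-1}{N}\bigr)$. The plan is to compare the sum with an integral, using
\[
  \sum_{m=N-n}^{N-1} m^{-2}\le\frac{1}{N-n}-\frac{1}{N}=\frac{n}{N(N-n)}.
\]
This reduces the product to $n(N-n)/N$, which is at most $n(1-(n-1)/N)$. If the integral comparison turns out to be off by a boundary term, I would use the telescoping bound $1/(m(m+1))$, or if needed Serfling's exact refinement, which handles the $n-1$ versus $n$ discrepancy. When $n=N$, $\bar X_{n}=\mu$ deterministically and the statement is trivial, so we may assume $n<N$ and all the denominators above are finite.
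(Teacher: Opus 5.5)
The paper gives no proof of this lemma; it quotes Serfling's inequality as a standard tool, so I am judging your argument on its own. Your martingale route is Serfling's original one and is correct up to the intermediate bound $\exp\bigl(-2n^{2}t^{2}/((N-n)^{2}(b-a)^{2}\Sigma_{n})\bigr)$ with $\Sigma_{n}=\sum_{m=N-n}^{N-1}m^{-2}$. That route uses $Z_{k}=S_{k}/(N-k)$, the conditional Hoeffding lemma and a Chernoff bound.

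The gap is the final estimate. Your claimed inequality $\Sigma_{n}\le\frac{1}{N-n}-\frac{1}{N}$ is false. Since $m^{-2}>\frac{1}{m(m+1)}=\frac{1}{m}-\frac{1}{m+1}$, telescoping gives the \emph{reverse} strict inequality. Your fallback ``telescoping bound $1/(m(m+1))$'' therefore points the wrong way as well. Already at $n=1$ the claim reads $(N-1)^{-2}\le\frac{1}{N(N-1)}$, which is false.

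The claim would also prove too much. Combined with your (correct) martingale bound, it yields the factor $1-n/N$ instead of $1-\frac{n-1}{N}$, and that stronger bound is false. Take $N=2$, population $\{0,1\}$, $n=1$, $t=1/2$: the probability is $1/2$, but the bound would give $e^{-1}$.

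The correct-direction comparison $m^{-2}\le\frac{1}{m-1}-\frac{1}{m}$ does not rescue you either. What you need is
\[
f(n):=(N-n)^{2}\Sigma_{n}\;\le\; g(n):=\frac{n(N-n+1)}{N},
\]
and this holds with \emph{equality} at $n=1$. So any termwise-lossy comparison fails there; the $1/(m(m-1))$ bound gives $f(1)\le\frac{N-1}{N-2}>1$.

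You can close the gap exactly by induction on $n$, for $1\le n\le N-1$:
\begin{itemize}
\item \textbf{Base case.} $f(1)=g(1)=1$.
\item \textbf{Recursions.} With $M=N-n$, one has $f(n+1)=\frac{(M-1)^{2}}{M^{2}}f(n)+1$ and $g(n+1)=\frac{(n+1)M}{N}$.
\item \textbf{Reduction.} Assume $f(n)\le g(n)$. Multiplying through by $M^{2}N$ and substituting $N=n+M$ reduces $f(n+1)\le g(n+1)$ to $n\bigl[(M-1)^{2}(M+1)+M^{2}\bigr]\le nM^{3}$.
\item \textbf{Conclusion.} Since $(M-1)^{2}(M+1)+M^{2}=M^{3}-M+1$, this is $n(1-M)\le 0$, which holds because $M\ge 1$.
\end{itemize}
With this lemma in place of the integral comparison, the exponent is at least $2nt^{2}/\bigl((1-\frac{n-1}{N})(b-a)^{2}\bigr)$ and your proof is complete. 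Your treatment of $n=N$ and of the lower tail via $-x_{i}$ is fine.
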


\section{Abstract cryptography (AC)\label{appendix:ac}}

Abstract cryptography is a cryptographic framework designed to be top-down and axiomatic to analyze the security of a protocol in an arbitrarily adversarial environment.
In contrast to conventional game-based security, which analyzes each specific adversarial scenario, the abstract cryptography framework provides universal composable security.
Composably secure protocols within the abstract cryptography framework will maintain their security when composed in parallel or in series, ensuring modular security for the entire combined protocol.

The abstract cryptography framework consists of abstract systems with well-distinguished and labelled interfaces for transmitting information to other systems.
Systems are classified into resources, converters, filters, and distinguishers.
The abstract cryptography framework aims to construct a new secure resource $\pi\mathcal{R}$ from an available resource $\mathcal{R}$ and a protocol $\pi$ by showing the security of $\pi$.
Here, the resource $\mathcal{R}$ is an abstract system with an index set of interfaces, $\mathcal{I}$, that mediate transcripts.
The protocol $\pi=\{\pi_{i}\}_{i\in\mathcal{I}}$ is a set of converters $\pi_{i}$ indexed by $\mathcal{I}$, where each converter is a two-interface system mediating between the resource and an external party.

A protocol $\pi$ is proved to be secure by showing the statistical indistinguishability between the constructed resource $\pi\mathcal{R}$ and the ideal resource $\mathcal{S}$, i.e. any distinguisher cannot distinguish with high probability the two resources $\pi\mathcal{R}$ and $\mathcal{S}$.
In concrete terms, the distinguisher is an abstract system that interacts with a resource and attempts to determine whether it is connected to a real or an ideal resource.
It may send inputs, receive outputs, and exploit any observable behavior in order to distinguish the two resources.
Ultimately, the distinguisher must output a single bit indicating its guess: for instance, outputting $1$ if the distinguisher believes it is interacting with the constructed resource $\pi\mathcal{R}$ and $0$ otherwise.
The formal definition of statistical indistinguishability between two resources can be stated as follows.

\begin{definition}[Statistical Indistinguishability of Resources]
  \label{definition:Statistical_Indistinguishability_of_Resources}
  Let $\epsilon>0$, and let $\mathcal{R}_{1}$ and $\mathcal{R}_{2}$ be two resources with the same input and output interfaces.
  The resources are $\epsilon$-statistically-indistinguishable if, for any unbounded distinguisher $\mathcal{D}$, the following holds:
  \begin{equation}\label{eq:distinguishing_advantage}
    \begin{split}
      \left|\operatorname{Pr}\left[\mathcal{D}\left(\mathcal{R}_{1}\right)=1\right] - \operatorname{Pr}\left[\mathcal{D}\left(\mathcal{R}_{2}\right)=1\right]\right| \leq \epsilon,
    \end{split}
  \end{equation}
  which is denoted by $\mathcal{R}_{1}\approx_{\epsilon}\mathcal{R}_{2}$, and $\epsilon$ is referred to as distinguishing advantage.
\end{definition}

Here, the distinguishing advantage $\epsilon$ quantifies how much better a distinguisher can perform than random guessing.
If two resources are completely indistinguishable, the success probability is $\displaystyle \frac{1}{2}$ (the same as random guessing), yielding $\epsilon=0$.
Otherwise, the distinguishing advantage is $\epsilon$, the distinguisher can succeed with probability $\displaystyle\frac{1}{2}+\epsilon$.

\begin{figure}[htbp]
  \centering
  \subfloat[correctness: $\pi\mathcal{R}\approx_{\epsilon}\mathcal{S}$\label{fig:ac_correctness}]{
    \includegraphics[width=\linewidth]{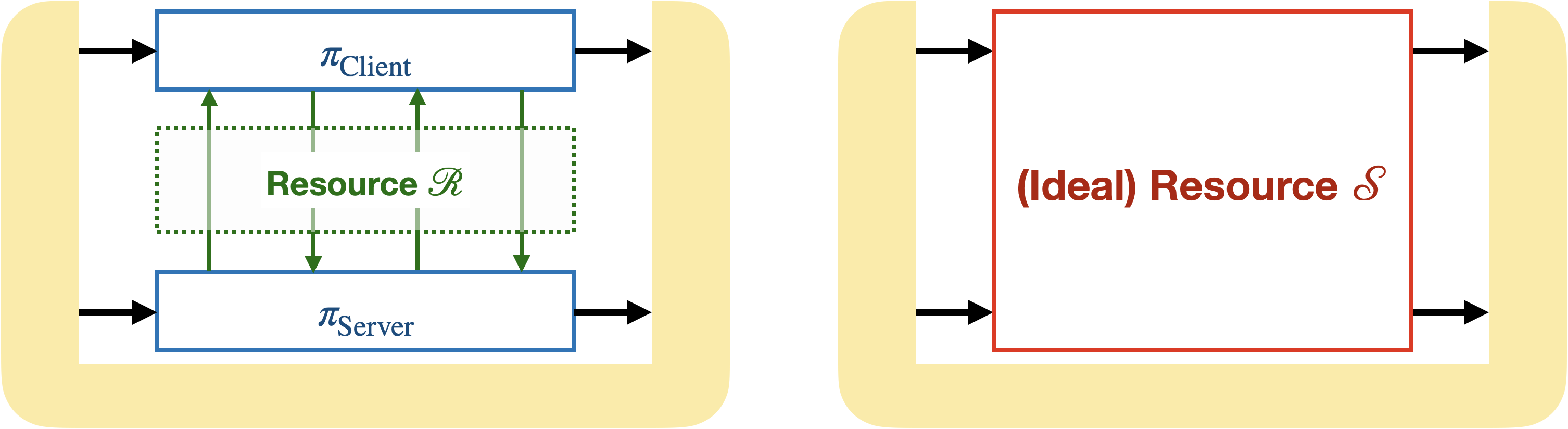}
  }
  \hfill
  \subfloat[security: $\pi_{\mathrm{Client}}\mathcal{R}\approx_{\epsilon}\mathcal{S}\sigma$\label{fig:ac_security}]{
    \includegraphics[width=\linewidth]{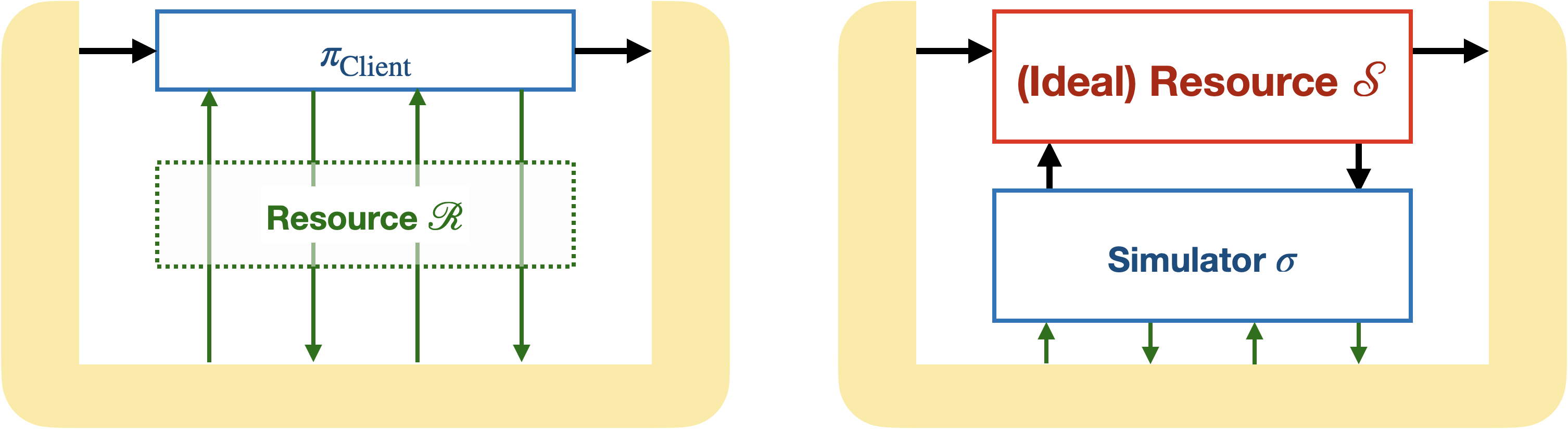}
  }
  \caption{The schematic illustrations of correctness and security are depicted in (a) and (b), respectively.
    On the basis of a secure resource $\mathcal{R}$ as an established channel between the Client and the Server, the protocol $\pi=(\pi_{\mathrm{Client}}, \pi_{\mathrm{Server}})$ constructs a new resource $\pi\mathcal{R}=\pi_{\mathrm{Client}}\mathcal{R}\pi_{\mathrm{Server}}$.
    The transcripts are written as arrows, and the yellow object is the distinguisher that manages the input and output transcripts between the resource and the protocol of interest.
  }
  \label{fig:abstract_cryptography}
\end{figure}

\begin{protocol*}[htbp]
  \caption{\raggedright Universal Blind Quantum Computation (UBQC) (Broadbent et al.~\cite{Broadbent2009Universal})}
  \label{protocol:ubqc}
  \begin{algorithmic}[0]
    \STATE \textbf{Public Information:}
    Public information $(\mathfrak{C}, G, f_{G})$.
    \STATE \textbf{Inputs from Client:}
    The target computation $\mathsf{C}\in \mathfrak{C}$ that produces $\rho$ by measurement angles $\{\phi_{v}\}_{v\in V}$ and allows to measure $\operatorname{Tr}\left[\rho O\right]$.
    \STATE \textbf{Protocol:}
    \begin{enumerate}

      \item The Client generates secret parameters:
            \begin{enumerate}
              \item ($\mathsf{X}$ randomization) The Client chooses a random bit $\mathrm{a}_{v}^{\mathrm{init}}\in\{0,1\}$ for $v\in \mathrm{I}$ and sets $\mathrm{a}_{v}^{\mathrm{init}}=0$ for $v\in V\setminus \mathrm{I}$.
                    The Client also computes $\displaystyle \mathrm{a}_{v}^{\mathrm{prop}} = \bigoplus_{j\in N_{G}(v)}\mathrm{a}_{j}^{\mathrm{init}} \in\{0,1\}$ for all $v\in V$.
              \item ($\mathsf{Z}$ randomization) The Client chooses a random bit $\mathrm{r}_{v}\in\{0,1\}$ for all $v\in V$.
              \item (randomization for blindness) The Client chooses a random $\theta_{v}\in\Theta$ for all $v \in V$.
            \end{enumerate}

      \item The Client prepares and sends to the Server all single qubits corresponding to $v\in V$.
            For $v\in \mathrm{I}$, the Client sequentially sends each qubit in $\displaystyle \left(\bigotimes_{v\in\mathrm{I}}\mathsf{Rz}_{v}(\theta_{v})\mathsf{X}_{v}^{\mathrm{a}_{v}^{\mathrm{init}}}\right)\left[\rho_{\mathrm{init}}\right]$.
            For $v\in V\setminus\mathrm{I}$, the Client sends $\displaystyle |+_{\theta_{v}}\rangle$.

      \item The Server applies the entangling operation $\displaystyle \mathsf{G}=\prod_{(u,v)\in E}\mathsf{CZ}_{u,v}$ to prepare the resource state according to the graph $G$.

      \item For each $v\in V$, the Client and Server interactively perform the MBQC process.
            Once the Client receives the measurement outcome $\mathrm{b}_{j}\in\{0,1\}$ for all $j\in S_{X,v}\cup S_{Z,v}$, where $S_{X,v} = f_{G}^{-1}\left(v\right), S_{Z,v} = \left\{j~|~v\in N_{G}\left(f_{G}\left(j\right)\right)\right\}$, the Client computes the adaptive angle update $\phi_{v}^{\prime}$.
            The Client then computes the measurement angle $\delta_{v}$:
            \begin{equation}\label{eq:delta_UBQC}
              \begin{split}
                s_{X,v} = \bigoplus_{j \in S_{X,v}} \mathrm{b}_{j} \oplus \mathrm{r}_{j}, &
                \quad s_{Z,v} = \bigoplus_{j \in S_{Z,v}} \mathrm{b}_{j} \oplus \mathrm{r}_{j}, \\
                \phi_{v}^{\prime} = (-1)^{s_{X,v}} \phi_{v}+s_{Z,v} \pi,                  &
                \quad \delta_{v} = \left(-1\right)^{\mathrm{a}_{v}^{\mathrm{init}}} \phi_{v}^{\prime} + \theta_{v}+\left(\mathrm{r}_{v} + \mathrm{a}_{v}^{\mathrm{prop}}\right) \pi.
              \end{split}
            \end{equation}
            Note that $s_{X,v} = s_{Z,v} = 0$ for $v\in \mathrm{I}$.
            The Client sends to the Server the angle $\delta_{v}$ and the Server returns to the Client a bit $\mathrm{b}_{v}\in\{0,1\}$ as a measurement result of qubit $v$ with basis $\{|+_{\delta_{v}}\rangle, |-_{\delta_{v}}\rangle\}$.

      \item The Client performs the remaining classical post-processing upon the measurement results of all qubits $v\in V$ and returns $\tilde{y}\in\{-1,1\}$.

    \end{enumerate}
  \end{algorithmic}
\end{protocol*}

When constructing a resource $\pi\mathcal{R}$ from a resource $\mathcal{R}$ and a protocol $\pi$, the security of $\pi$ is then characterised by the indistinguishability between $\pi\mathcal{R}$ and the ideal resource $\mathcal{S}$.
Here, we restrict to the two-party case with an honest ``Client'' and a potentially malicious ``Server''.
The following definition defines how well the protocol $\pi$ constructs $\mathcal{S}$ from $\mathcal{R}$.
\begin{definition}[Construction of Resources]
  \label{definition:Construction_of_resources}
  Let $\epsilon>0$.
  We say that a two-party protocol $\pi$, between an honest Client and a potentially malicious Server, $\epsilon$-statistically-constructs resource $\mathcal{S}$ from resource $\mathcal{R}$ if,
  \begin{itemize}[left=5pt]
    \item it is correct: $\pi\mathcal{R}\approx_{\epsilon}\mathcal{S}$, i.e. when the Server is honest, the client-side outputs between $\pi\mathcal{R}$ and $\mathcal{S}$ are $\epsilon$-statistically indistinguishable;
    \item it is secure against the malicious Server, i.e. there exists a simulator $\sigma$ such that $\pi_{\mathrm{Client}}\mathcal{R}\approx_{\epsilon}\mathcal{S}\sigma$, where $\pi_{\mathrm{Client}}$ is $\pi$'s Client side protocol.
  \end{itemize}
\end{definition}

Intuitively, correctness ensures that the protocol behaves as intended when all parties are honest, while security guarantees that malicious behavior can be emulated in the ideal world by a simulator, thereby preserving composable security.
The existence of such a simulator implies that the use of $\pi\mathcal{R}$ with a malicious Server is still well-indistinguishable from using the ideal resource $\mathcal{S}$, which is designed to be secure.
The schematic illustrations of correctness and security in Definition~\ref{definition:Construction_of_resources} are presented in Fig.~\ref{fig:abstract_cryptography}.

Using the definitions above, we can state the following general composition theorem~\cite{Maurer2011Abstract} that guarantees the additive accumulation of distinguishing advantage when composing two statistically secure protocols.
\begin{theorem}[General Composition of Resources~\cite{Maurer2011Abstract}]
  Let $\mathcal{R}$, $\mathcal{S}$ and $\mathcal{T}$ be resources, $\alpha, \beta$ and $\mathsf{id}$ be protocols, where protocol $\mathsf{id}$ does not modify the resource it is applied to.
  Let $\circ$ and $|$ denote the sequential and parallel composition of protocols and resources, respectively.
  Then the following implications hold:
  \begin{itemize}[left=5pt]
    \item Sequential composability: \\
          if $\alpha \mathcal{R} \approx_{\epsilon_{\alpha}} \mathcal{S}$ and $\beta\mathcal{S} \approx_{\epsilon_{\beta}} \mathcal{T}$, then $\left(\beta \circ\alpha\right) \mathcal{R} \approx_{\epsilon_{\alpha}+\epsilon_{\beta}}\mathcal{T}$.
    \item Context insensitivity: \\
          if $\alpha \mathcal{R} \approx_{\epsilon_{\alpha}} \mathcal{S}$, then $\left(\alpha \mid \mathrm{id}\right)\left(\mathcal{R} \mid \mathcal{T}\right) \approx_{\epsilon_{\alpha}} \left(\mathcal{S} \mid \mathcal{T}\right)$.
  \end{itemize}
  Combining these two properties yields the composability of protocols.
\end{theorem}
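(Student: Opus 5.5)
The plan is to derive both properties directly from Definition~\ref{definition:Statistical_Indistinguishability_of_Resources}. Two structural facts from the abstract-cryptography algebra carry the argument. First, composition of systems is associative, so $(\beta\circ\alpha)\mathcal{R}=\beta(\alpha\mathcal{R})$ and $(\alpha\mid\mathsf{id})(\mathcal{R}\mid\mathcal{T})=(\alpha\mathcal{R})\mid\mathcal{T}$. Second, the class of unbounded distinguishers is closed under pre-composition with any system: if $\mathcal{D}$ is a distinguisher and $\mathcal{K}$ is a converter or resource, then $\mathcal{D}\mathcal{K}$, obtained by plugging $\mathcal{K}$ into the appropriate interfaces of $\mathcal{D}$, is again a distinguisher. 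The distinguishing advantage is a pseudo-metric on resources, and the triangle inequality for it follows immediately from the triangle inequality for real numbers in Eq.~\eqref{eq:distinguishing_advantage}.

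For \emph{sequential composability}, I fix an arbitrary distinguisher $\mathcal{D}$ and insert the hybrid $\beta\mathcal{S}$:
\begin{equation*}
  \begin{split}
    & \left|\Pr[\mathcal{D}((\beta\circ\alpha)\mathcal{R})=1]-\Pr[\mathcal{D}(\mathcal{T})=1]\right| \\
    & \quad\leq \left|\Pr[\mathcal{D}\beta(\alpha\mathcal{R})=1]-\Pr[\mathcal{D}\beta(\mathcal{S})=1]\right| \\
    & \qquad + \left|\Pr[\mathcal{D}(\beta\mathcal{S})=1]-\Pr[\mathcal{D}(\mathcal{T})=1]\right|.
  \end{split}
\end{equation*}
The second term is at most $\epsilon_{\beta}$ by hypothesis. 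For the first term, I regard $\mathcal{D}' := \mathcal{D}\beta$ as a new distinguisher that interacts with either $\alpha\mathcal{R}$ or $\mathcal{S}$. By closure of the distinguisher class, the hypothesis $\alpha\mathcal{R}\approx_{\epsilon_{\alpha}}\mathcal{S}$ bounds this term by $\epsilon_{\alpha}$. Since $\mathcal{D}$ was arbitrary, this gives $(\beta\circ\alpha)\mathcal{R}\approx_{\epsilon_{\alpha}+\epsilon_{\beta}}\mathcal{T}$.

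For \emph{context insensitivity}, I take a distinguisher $\mathcal{D}$ for $(\alpha\mathcal{R})\mid\mathcal{T}$ versus $\mathcal{S}\mid\mathcal{T}$. I then define $\mathcal{D}''$ as $\mathcal{D}$ with a copy of $\mathcal{T}$ internally emulated and wired to the $\mathcal{T}$-interfaces; this requires no efficiency, since distinguishers are unbounded. Then $\Pr[\mathcal{D}((\alpha\mathcal{R})\mid\mathcal{T})=1]=\Pr[\mathcal{D}''(\alpha\mathcal{R})=1]$, and likewise with $\mathcal{S}$ in place of $\alpha\mathcal{R}$, so the advantage is at most $\epsilon_{\alpha}$. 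The statement that these two properties yield composability of protocols follows by applying them to both conditions of Definition~\ref{definition:Construction_of_resources}. In the security case, the simulators compose: if $\sigma_{\alpha}$ and $\sigma_{\beta}$ witness security of $\alpha$ and $\beta$, then the composed simulator $\sigma_{\alpha}\sigma_{\beta}$ witnesses security of $\beta\circ\alpha$ by the same hybrid argument with the intermediate resource $\beta_{\mathrm{Client}}\mathcal{S}\sigma_{\alpha}$.

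The main obstacle is not the inequalities but justifying the structural facts rigorously: that composition is associative and well-defined on labelled interfaces, and in particular that absorbing $\beta$ or $\mathcal{T}$ into $\mathcal{D}$ produces a legitimate distinguisher whose output statistics match those of the original experiment. I would handle this by appealing to the abstract system algebra of \cite{Maurer2011Abstract}, where resources and converters form a composition-closed class and distinguishers are merely systems with a one-bit output. In the security case I would additionally check that the interfaces of the simulators connect correctly, i.e.\ that $\sigma_{\alpha}$ attaches to the Server interface of $\mathcal{S}$ so that $\beta$ acts on the same interfaces in the real and hybrid worlds.
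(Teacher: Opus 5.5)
The paper does not prove this theorem. It imports it from \cite{Maurer2011Abstract}, and your argument is the standard proof from that framework, so it is correct. The two ingredients are the ones you name. First, the distinguishing advantage is a pseudo-metric: the triangle inequality holds for each fixed distinguisher, and one then takes the supremum. Second, the advantage is non-increasing under composition with any converter or resource, because such a system can be absorbed into an unbounded distinguisher. Sequential composability then follows from the hybrid through $\beta\mathcal{S}$. Context insensitivity follows from $(\alpha\mid\mathsf{id})(\mathcal{R}\mid\mathcal{T})=(\alpha\mathcal{R})\mid\mathcal{T}$ together with absorbing $\mathcal{T}$ into the distinguisher.

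One notational point in your simulator step needs tidying. The paper writes simulators on the right, as in $\mathcal{S}\sigma$. In that convention, your hybrid $\beta_{\mathrm{Client}}\mathcal{S}\sigma_{\alpha}$ leads to the ideal system $\mathcal{T}\sigma_{\beta}\sigma_{\alpha}$: $\sigma_{\beta}$ plugs into $\mathcal{T}$, and $\sigma_{\alpha}$ plugs into $\sigma_{\beta}$. The composite simulator should therefore be written $\sigma_{\beta}\sigma_{\alpha}$, not $\sigma_{\alpha}\sigma_{\beta}$.

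The two hybrid steps also tacitly use $\beta_{\mathrm{Client}}(\mathcal{S}\sigma_{\alpha})=(\beta_{\mathrm{Client}}\mathcal{S})\sigma_{\alpha}$. This identity holds because converters on disjoint interfaces commute, which is part of the same system algebra you already invoke.
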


\section{Blind delegated quantum computation\label{appendix:bdqc}}

\begin{protocol*}[htbp]
  \caption{\raggedright Universal Blind Observable Estimation (UBOE)}
  \label{protocol:uboe_whole}
  \begin{algorithmic}[0]
    \STATE \textbf{Public Information:}
    Public information $(\mathfrak{C}, G, f_{G}, N_{c})$.
    \STATE \textbf{Inputs from Client:}
    The target computation $\mathsf{C}\in \mathfrak{C}$ that estimates $\operatorname{Tr}\left[\rho O\right]$.
    \STATE \textbf{Protocol:}
    \begin{enumerate}
      \item The Client chooses the total number of rounds $N_{c}$.
      \item For every round indexed by $i \in [N_{c}]$, the Client delegates the target computation with the UBQC protocol (Protocol~\ref{protocol:ubqc}).
      \item Upon receiving and decoding the result of the computation round $i \in [N_{c}]$, the Client assigns the result to $\tilde{y}_{i}$.
            The Client then sets $\displaystyle \tilde{o} = \frac{1}{N_{c}}\sum_{i\in [N_{c}]} \tilde{y}_{i}$ and returns $\tilde{o}$ as the final result.
    \end{enumerate}
  \end{algorithmic}
\end{protocol*}

\begin{resource*}[htbp]
  \caption{\raggedright Blind Delegated Observable Estimation (BDOE)}
  \label{resource:bdoe}
  \begin{algorithmic}[0]
    \STATE \textbf{Public Information:}
    The computational class $\mathfrak{C}$, its associated measurement pattern $C$ that contains the graph $G=(V,E)$ and the measurement flow $f_{G}$, and the total number of rounds $N_{c}$.

    \STATE \textbf{Client's interface:}
    The target computation $\mathsf{C}\in \mathfrak{C}$ and its associated measurement angles $\{\phi_{v}\}_{v\in V}$ to produce the state $\rho$ to be measured by $O$.

    \STATE \textbf{Server's interface:}
    \begin{enumerate}
      \item The interface is filtered so that when $e=0$, the interface does not send any information nor take inputs.
      \item For $e=1$, the Resource receives a quantum state $\sigma$ and $F$, a list of instructions so that the resource produces $s \in \mathbb R$.
    \end{enumerate}

    \STATE \textbf{Processing by the Resource:}
    \begin{enumerate}
      \item If $e=0$, it sets $\displaystyle o = \frac{1}{N_{c}} \sum_{i = 1}^{N_{c}}y_{i}$ with $y_{i}=2Y_{i}-1$ where $Y_{i} \sim \mathcal{B}(p)$, sampled from a Bernoulli distribution $\mathcal{B}(p)$ with $p = \operatorname{Tr}\left[\rho |+_{O}\rangle\langle+_{O}|\right]$.
      \item If $e=1$, it computes $o$ using the transmitted state $\sigma$ and $F$.
      \item It forwards $o$ to the Client.
    \end{enumerate}
  \end{algorithmic}
\end{resource*}

Delegated quantum computation protocols~\cite{Broadbent2009Universal, Fitzsimons2017Unconditionally, Leichtle2021Verifying} allow clients with limited quantum capabilities, such as single-qubit state preparation and communication, to delegate tasks to a powerful server while retaining security guarantees such as blindness and verifiability.
The security of these protocols can be rigorously analyzed within the aforementioned abstract cryptography framework~\cite{Maurer2011Abstract}, which formalizes composable security in a modular way.
This modular framework enables composable security guarantees without requiring incremental and exhaustive proofs of the entire protocol whenever individual components are combined.
We here review the ``Universal Blind Quantum Computation (UBQC)'' protocol~\cite{Broadbent2009Universal} and introduce its variant for observable estimation, ``Universal Blind Observable Estimation(UBOE)''.

The UBQC protocol achieves perfect blindness, allowing the Client to delegate its quantum computation to the untrusted Server, provided the Client can prepare and send a sequence of single qubits via quantum communication.
The procedure of UBQC is based on the measurement-based quantum computation (MBQC) model grounded in the principle of gate teleportation~\cite{Gottesman1999Demonstrating, Raussendorf2001A, Knill2001A, Danos2007The, Briegel2009Measurement}.
In this model, the computation proceeds by first preparing a highly entangled resource state, typically a graph state, and then performing a sequence of adaptive single-qubit measurements in rotated bases.
The measurement outcomes determine subsequent measurement angles, enabling the realisation of arbitrary quantum operations.
More formally, the MBQC procedure is defined by the following measurement pattern.
\begin{definition}[Measurement Pattern]
  A pattern in the MBQC model is given by a graph $G = (V, E)$, input and output vertex sets $\mathrm{I}, \mathrm{O}\subseteq V$, a flow function $f_{G}$ which induces a partial order $\preceq_{G}$ of the qubits $V$, and a set of measurement angles $\left\{\phi_{v}\right\}_{v\in V}$ in the $\mathsf{X}$-$\mathsf{Y}$ plane of the Bloch sphere.
\end{definition}

Based on MBQC, the UBQC protocol is given as Protocol~\ref{protocol:ubqc}.
Based on UBQC, the UBOE protocol estimates the expectation value over a continuous domain by repeating this protocol $N_{c}$ times and averaging the outcomes in the Client's protocol.
The procedure of UBOE is given as Protocol~\ref{protocol:uboe_whole}.

The composable security of the UBOE protocol can be described by the words of the abstract cryptography framework in the same way as UBQC.
Here, UBOE is supposed to return potentially biased output while keeping the blindness of the computation up to the allowed leakage as public information $(\mathfrak{C}, G, f_{G}, N_{c})$.
This is formally defined as the ``Blind Delegated Observable Estimation (BDOE)'' resource as depicted in Resource~\ref{resource:bdoe}, which enables the server to influence the outcome by modelling a potential deviation, while leaking no information to the server beyond the prescribed nature of leakage.
Since UBQC, i.e. each round of UBOE, is shown to achieve perfect composable security~\cite{Dunjko2014Composable} and additional procedure of UBOE occurs only in the Client's classical processing, it follows trivially that the UBOE protocol also achieves perfect composable security.
That is, the UBOE protocol and the BDOE resource are perfectly indistinguishable.
Formally, the security of UBOE is stated as follows.

\begin{theorem}[Security of UBOE]\label{theorem:UBOE}
  The UBOE protocol perfectly constructs the BDOE resource, leaking only public information $\left(\mathfrak{C}, G, f_{G}, N_{c}\right)$.
\end{theorem}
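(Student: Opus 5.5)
The proof will establish the two conditions of Definition~\ref{definition:Construction_of_resources} with $\epsilon=0$: perfect correctness, where UBOE composed with honest Server behaviour reproduces the Client output distribution of BDOE, and perfect security, via a simulator $\sigma$ with $\pi_{\mathrm{Client}}\mathcal{R}\approx_{0}\mathcal{S}\sigma$. The key is that UBOE is the $N_{c}$-fold sequential composition of UBQC (Protocol~\ref{protocol:ubqc}), followed by a purely classical averaging map applied by the Client. We therefore take as given that UBQC perfectly constructs the single-round blind delegated computation resource, which outputs a bit $\tilde{y}\in\{-1,1\}$ sampled from the Born rule for $O$ on $\rho$, or a Server-influenced bit when $e=1$~\cite{Dunjko2014Composable}. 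We then lift this to $N_{c}$ rounds using the General Composition theorem above.

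\textbf{Correctness.} With an honest Server, each UBQC round outputs $\tilde{y}_{i}$ distributed as $2Y_{i}-1$ with $Y_{i}\sim\mathcal{B}(p)$ and $p=\operatorname{Tr}[\rho|+_{O}\rangle\langle+_{O}|]$. The rounds are independent because the Client draws fresh secret parameters $(\mathrm{a}^{\mathrm{init}},\mathrm{r},\theta)$ in each round. Hence $\tilde{o}=\frac{1}{N_{c}}\sum_{i}\tilde{y}_{i}$ has exactly the distribution that BDOE prescribes for $e=0$, and the distinguishing advantage is $0$.

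\textbf{Security.} We build $\sigma$ by running the UBQC simulator $\sigma_{\mathrm{UBQC}}$ independently in each round. That simulator sends the Server half-EPR pairs, sends uniformly random angles $\delta_{v}\in\Theta$, and collects the returned bits $\mathrm{b}_{v}$. Our $\sigma$ sets $e=1$ and forwards the collected quantum state $\sigma$ (the other EPR halves), together with instructions $F$, to BDOE. Here $F$ specifies the following: for each round, the resource applies the Client-side UBQC procedure to its copy of the data, namely it teleports in the correct one-time-padded states and decodes the outcome bits, exactly as in the single-round ideal resource; it then averages the decoded $\tilde{y}_{i}$.

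The remaining task is to show that the joint distribution of the Server's view and the Client's output $o$ coincides with the real one. We do this by a hybrid argument over rounds $j=0,\dots,N_{c}$. In hybrid $j$, the first $j$ rounds use the ideal single-round resource with $\sigma_{\mathrm{UBQC}}$ and the remaining rounds use real UBQC. Consecutive hybrids differ only in a single round. Everything else (the other rounds, the Server's arbitrary adaptive strategy across rounds, and the final averaging) can be absorbed into the distinguisher. This absorption is justified by context insensitivity and sequential composability, because the averaging is a Client-side converter acting only on classical outputs. Each step therefore costs $0$, and the total advantage is $0$.

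\textbf{Main obstacle.} The delicate point is that a malicious Server may entangle its behaviour across rounds and act adaptively, so the rounds are not independent calls. To handle this, I would treat the entire $N_{c}$-round interaction as a single sequential composition in which the Server's memory is part of the distinguisher's environment. The general composition theorem applies precisely in that setting, so correlations across rounds are covered without any i.i.d.\ assumption. I would also verify that the leakage is exactly $(\mathfrak{C},G,f_{G},N_{c})$. $N_{c}$ is revealed by the number of rounds, and $G,f_{G}$ are fixed by the UBQC leakage. Nothing else can leak, since the per-round transcripts are uniformly random angles independent of $\mathsf{C}$ by the perfect blindness of UBQC.
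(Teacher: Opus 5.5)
Your proposal is correct and follows the same route as the paper, which argues only that each round of UBOE is a UBQC instance with perfect composable security and that the extra averaging is purely Client-side classical post-processing. Your explicit correctness computation, per-round UBQC simulator, and zero-cost hybrid over rounds (with the Server's cross-round memory absorbed into the distinguisher) spell out that one-line composability argument faithfully.
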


\section{General trap patterns\label{appendix:general_trap_patterns}}

In the test rounds of VBPEC, we use general trap patterns introduced in~\cite{Kapourniotis2024Unifying}, which can be seen as stabilizer tests.
We review the construction of the general trap patterns and show that it can be seen as a Walsh--Hadamard transform of the (quasi-)probability distribution of the deviation.

First, let $\mathcal{S}_{G}$ be the set of stabilizer generators of the graph state associated with $G = (V, E)$ defined as $\mathcal{S}_{G} = \left\{\mathsf{S}_{v}\right\}_{v\in V}$, where
\begin{equation}
  \begin{split}
    \mathsf{S}_{v} = \mathsf{X}_{v}\otimes \left(\bigotimes_{j\in N_{G}(v)} \mathsf{Z}_{j}\right).
  \end{split}
\end{equation}
The stabilizer group of the graph state is then given by $\langle \mathcal{S}_{G}\rangle$.
For a subset $T\subseteq V$, the general trap pattern associated with $T$ verifies the stabilizer $\displaystyle \prod_{v\in T}\mathsf{S}_{v}$.

Since the trap qubits are measured with the $\{|+\rangle,|-\rangle\}$ basis at last, the traps without deviation should be $+1$ eigenstate of $\displaystyle \bigotimes_{v\in S} \mathsf{X}_{v}$.
By commuting this observable backwards through the graph-state entangling operation
\begin{equation}
  \begin{split}
    \mathsf{G} = \prod_{(i,j)\in E}\mathsf{CZ}_{i,j},
  \end{split}
\end{equation}
the corresponding stabilizer before applying $\mathsf{G}$ becomes
\begin{equation}\label{eq:stabilizer_tensor_product}
  \begin{split}
    \left(\bigotimes_{j\in T_{\mathrm{even}}} \mathsf{X}_{j}\right) \otimes \left(\bigotimes_{j\in T_{\mathrm{odd}}} \mathsf{Y}_{j}\right) \otimes \left(\bigotimes_{j\in N_{G}^{\mathrm{odd}}(T)} \mathsf{Z}_{j}\right),
  \end{split}
\end{equation}
where $T_{\mathrm{even}}$ (resp. $T_{\mathrm{odd}}$) are the qubits of even (resp. odd) degree within $T$, and $j \in N_{G}^{\mathrm{odd}}(T)$ means $j$ is in the odd neighbourhood of $T$.
Therefore, one of the tensor product states stabilized by Eq.~\eqref{eq:stabilizer_tensor_product} is given by
\begin{equation}\label{eq:stabilized_tensor_product}
  \begin{split}
    \left(\bigotimes_{j\in T_{\mathrm{even}}} |+\rangle\right) \otimes \left(\bigotimes_{j\in T_{\mathrm{odd}}} |+_{\frac{\pi}{2}}\rangle\right) \otimes \left(\bigotimes_{j\in N_{G}^{\mathrm{odd}}(T)} |0\rangle\right).
  \end{split}
\end{equation}
An illustrative example can be found in Fig.~\ref{fig:pattern_trap_general}.

\begin{figure}
  \centering
  \includegraphics[width=0.8\linewidth]{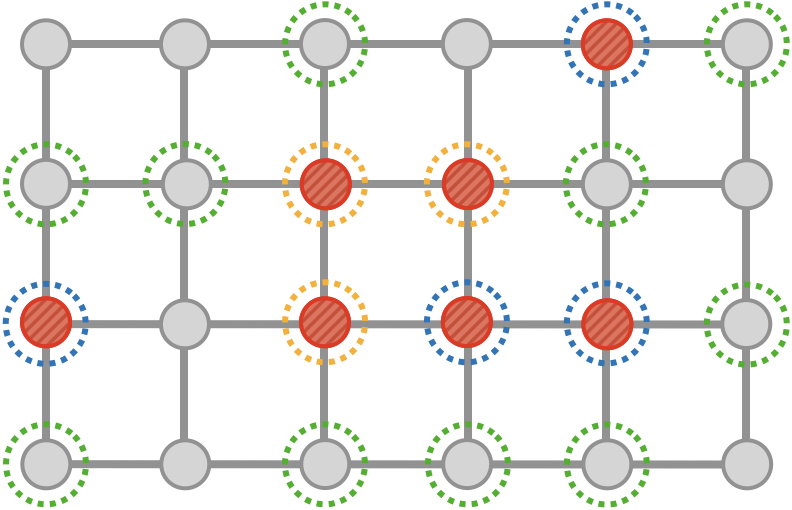}
  \caption{
    A schematic illustration of a specific general trap pattern.
    The red and grey vertices represent trap and dummy qubits, respectively.
    The yellow, blue, green circles describe the qubits in $T_{\mathrm{even}}$, $T_{\mathrm{odd}}$, and $N_{G}^{\mathrm{odd}}(T)$.
  }
  \label{fig:pattern_trap_general}
\end{figure}

Now consider a deviation $\mathsf{P}\in\{\mathsf{I},\mathsf{Z}\}^{|V|}$, and define its support as $S_{\mathsf{P}} = \{v\in V\mid \mathsf{P}(v) = \mathsf{Z}\}$.
Given the trap-qubit measurement outcomes $\mathrm{b}_{v}\in\{0,1\}$ for $v\in T$, the Client post-processes them by assigning the value $+1$ to the trap pattern $T\subseteq V$ when $\displaystyle \bigoplus_{v\in T}\mathrm{b}_{v}=0$, and the value $-1$ otherwise.
Since each $\mathsf{Z}$ deviation flips the corresponding $\mathsf{X}$-basis measurement outcome, this post-processing assigns the value $+1$ for a deviation $\mathsf{P}$ precisely when $|T\cap S_{\mathsf{P}}| = 0 \pmod{2}$, and assigns the value $-1$ otherwise.
This observation yields the following lemma.
\begin{lemma}[Lemma 8 of Kapourniotis et al.~\cite{Kapourniotis2024Unifying}]
  The general trap pattern $T\subseteq V$ chosen uniformly at random detects any non-identity deviation $\mathsf{P}\in\{\mathsf{I},\mathsf{Z}\}^{|V|}\setminus\{\mathsf{I}^{\otimes|V|}\}$ with probability $1/2$, i.e.
  \begin{equation}
    \begin{split}
      \underset{T\sim\mathcal{U}(\Omega_{V})}{\operatorname{Pr}}\left[|T\cap S_{\mathsf{P}}| = 1~(\mathrm{mod}~2)\right] = \frac{1}{2},
    \end{split}
  \end{equation}
  where $\Omega_{V}$ is the power set of $V$.
\end{lemma}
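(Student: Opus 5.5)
We need to show that for uniformly random $T\subseteq V$ and fixed nonempty $S_{\mathsf{P}}$, the parity $|T\cap S_{\mathsf{P}}| \bmod 2$ is odd with probability exactly $1/2$. The plan is a direct pairing (involution) argument on the power set $\Omega_{V}$. We will also note that it is the Fourier/Walsh--Hadamard orthogonality statement $\sum_{T\subseteq V}(-1)^{\langle S_{\mathsf{P}},T\rangle}=0$ for $S_{\mathsf{P}}\neq\emptyset$.

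First, we reduce the detection event to a parity condition, using the observation stated just before the lemma. With the QOTP bits $\mathrm{r}_{v}$ and the dummy bits $\mathrm{d}_{v}$ decoded as in Protocol~\ref{protocol:trap_general_single}, the undeviated trap yields $+1$ deterministically, since the prepared product state \eqref{eq:stabilized_tensor_product} is stabilized by \eqref{eq:stabilizer_tensor_product}. A $\mathsf{Z}$ deviation on qubit $v$ flips the $\mathsf{X}$-basis outcome $\mathrm{b}_{v}$. Only flips on $v\in T$ enter the decoded parity, so the output is $(-1)^{|T\cap S_{\mathsf{P}}|}$ and detection is exactly the event $|T\cap S_{\mathsf{P}}|\equiv 1 \pmod 2$. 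We will take this reduction as given from the preceding discussion and from~\cite{Kapourniotis2024Unifying}.

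Second, we count. Since $\mathsf{P}\neq\mathsf{I}^{\otimes|V|}$, we can fix some $u\in S_{\mathsf{P}}$. Define the involution $\tau(T)=T\,\triangle\,\{u\}$ on $\Omega_{V}$. It is a bijection without fixed points, and it changes $|T\cap S_{\mathsf{P}}|$ by exactly $\pm1$, so it swaps the sets of even-parity and odd-parity $T$. Hence the two classes have equal size $2^{|V|-1}$, and under the uniform distribution the odd class has probability $1/2$.

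Equivalently, $\sum_{T}(-1)^{|T\cap S_{\mathsf{P}}|}=\prod_{v\in V}\sum_{t_v\in\{0,1\}}(-1)^{t_v[v\in S_{\mathsf{P}}]}$. Each factor equals $2$ for $v\notin S_{\mathsf{P}}$ and $0$ for $v\in S_{\mathsf{P}}$, so the sum vanishes, which gives the same conclusion. The combinatorial step is routine. The only real care needed is in the first step: checking that the decoding in Protocol~\ref{protocol:trap_general_single} cancels $\mathrm{r}_{v}$, the $\tfrac{\pi}{2}$ offsets on $T_{\mathrm{odd}}$, and the $\mathrm{d}_{v}$ contributions on $N_G^{\mathrm{odd}}(T)$. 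We will cite this from the stabilizer computation above rather than redo it.
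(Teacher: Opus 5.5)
Your proof is correct and follows essentially the paper's route. The paper likewise reduces detection to the parity $|T\cap S_{\mathsf{P}}| \bmod 2$, using the fact that each $\mathsf{Z}$ deviation flips the corresponding $\mathsf{X}$-basis outcome, and then states the probability $1/2$ by citing Kapourniotis et al.\ without writing out the count. Your involution $T\mapsto T\triangle\{u\}$, or equivalently the vanishing character sum, supplies that counting step, and it is the same orthogonality fact the paper uses immediately afterward in the form $H_{|V|}^{2}=2^{|V|}\mathsf{I}$.
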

This property underlies the optimised test-round construction of~\cite{Kapourniotis2024Unifying}, where uniformly sampled general trap patterns are post-processed by averaging the obtained trap outputs $\{\tilde{y}_{i}\}_{i\in\mathrm{S}_{\mathsf{T}}}$, thereby estimating the total probability weight of non-identity deviations.

Based on the above construction, we further establish that general trap patterns can be used to extract finer information about the deviation by modifying the Client's post-processing.
The key observation is that $(-1)^{|T\cap S_{\mathsf{P}}|}$ is precisely the Walsh--Hadamard kernel between the trap pattern $T$ and the deviation support $S_{\mathsf{P}}$.
This implies that the Client can recover the deviation support $S_{\mathsf{P}}$ by applying the inverse Walsh--Hadamard transform to the outputs of trap patterns.

To make this precise, let $\vec{p}\in\mathbb{R}_{\geq0}^{|\Omega_{V}|}$ be the probability distribution of the Server's deviation over $\{\mathsf{I},\mathsf{Z}\}^{|V|}$.
For a fixed trap pattern $T\subseteq V$, the expected post-processed trap value is
\begin{equation}
  \begin{split}
    \hat{p}(T)
    := \mathbb{E}\left[\tilde{y}\mid T\right]
    = \sum_{\mathsf{P}\in\{\mathsf{I},\mathsf{Z}\}^{|V|}} (-1)^{\langle T,S_{\mathsf{P}}\rangle} \vec{p}(\mathsf{P}).
  \end{split}
\end{equation}
Therefore, the vector of expected trap outputs is exactly the Walsh--Hadamard transform of the deviation distribution:
\begin{equation}
  \begin{split}
    \hat{p}=H_{|V|}\vec{p}, \quad
    H_{|V|}(T,S_{\mathsf{P}})
    = (-1)^{\langle T, S_{\mathsf{P}}\rangle},
  \end{split}
\end{equation}
where $\langle T,S_{\mathsf{P}}\rangle:=|T\cap S_{\mathsf{P}}|\pmod{2}$.
Since $H_{|V|}^{2}=2^{|V|}\mathsf{I}$, the deviation distribution can be recovered from the expected trap-output vector as
\begin{equation}
  \begin{split}
    \vec{p}=\frac{1}{2^{|V|}}
    H_{|V|}\hat{p}.
  \end{split}
\end{equation}

To this end, the post-processing of general trap patterns introduced in~\cite{Kapourniotis2024Unifying} is the special case in which one extracts only the identity component of the deviation distribution.
This corresponds to applying the first row of the inverse Walsh--Hadamard transform.
Indeed, when $T$ is sampled uniformly at random, we have
\begin{equation}
  \begin{split}
    \mathbb{E}[\tilde{y}]
    = \frac{1}{2^{|V|}}\sum_{T\subseteq V}\hat{p}(T)
    = \vec{p}(\mathsf{I}^{\otimes |V|}).
  \end{split}
\end{equation}
Thus, if $T_{i}\subseteq V$ is sampled uniformly in each test round $i\in\mathrm{S}_{\mathsf{T}}$, the empirical average
\begin{equation}
  \begin{split}
    \tilde{p}_{\mathsf{T}}(\mathsf{I}^{\otimes |V|})
    := \frac{1}{N_{t}}\sum_{i\in\mathrm{S}_{\mathsf{T}}}\tilde{y}_{i}
  \end{split}
\end{equation}
is an unbiased estimator of the probability of the identity deviation.
Equivalently,
\begin{equation}
  \begin{split}
    1-\tilde{p}_{\mathsf{T}}(\mathsf{I}^{\otimes |V|})
    = \frac{1}{N_{t}}\sum_{i\in\mathrm{S}_{\mathsf{T}}}\left(1-\tilde{y}_{i}\right)
  \end{split}
\end{equation}
estimates the total probability weight of non-identity deviations.

VBPEC uses the same trap outputs, but applies a different classical post-processing.
The standard averaging procedure extracts only the identity component of the deviation distribution, whereas VBPEC uses the inverse Walsh--Hadamard structure to estimate selected non-identity components as well.
Specifically, for any $\mathsf{P}\in\{\mathsf{I},\mathsf{Z}\}^{|V|}$, define
\begin{equation}
  \begin{split}
    \tilde{p}_{\mathsf{T}}\left(\mathsf{P}\right)
    := \frac{1}{N_{t}}\sum_{i\in\mathrm{S}_{\mathsf{T}}} (-1)^{\langle T_{i},S_{\mathsf{P}}\rangle}\tilde{y}_{i},
  \end{split}
\end{equation}
where $T_{i}$ is the trap pattern used in the $i$-th test round.
Since $T_{i}$ is chosen uniformly at random from $T_{i}\subseteq V$, it holds that
\begin{equation}
  \begin{split}
    \mathbb{E}\left[ \tilde{p}_{\mathsf{T}}\left(\mathsf{P}\right) \right]
     & = \frac{1}{2^{|V|}}\sum_{T\subseteq V} (-1)^{\langle T,S_{\mathsf{P}}\rangle} \hat{p}(T)
    = \vec{p}\left(\mathsf{P}\right),
  \end{split}
\end{equation}
where the last equality is precisely the inverse Walsh--Hadamard relation.
Summarising above gives the following lemma.
\begin{lemma}[Unbiased estimation of Pauli deviation\label{lemma:Unbiased_estimation_of_Pauli_deviation}]
  Let $\vec{p}$ be the deviation distribution over $\Omega_{V}:=\{\mathsf{I},\mathsf{Z}\}^{|V|}$, and let $T_{i}\subseteq V$ be sampled uniformly at random in each test round $i\in\mathrm{S}_{\mathsf{T}}$.
  For any $\mathsf{P}\in\Omega_{V}$, $\tilde{p}_{\mathsf{T}}\left(\mathsf{P}\right)$ is an unbiased estimator of $\vec{p}(\mathsf{P})$.
\end{lemma}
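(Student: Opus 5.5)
The plan is to prove unbiasedness by linearity of expectation, reducing the claim to a single-round identity and then to the orthogonality of the Walsh--Hadamard characters. Since $\tilde{p}_{\mathsf{T}}(\mathsf{P})=\frac{1}{N_t}\sum_{i\in\mathrm{S}_{\mathsf{T}}}(-1)^{\langle T_i,S_{\mathsf{P}}\rangle}\tilde{y}_i$, it suffices to show that each summand has expectation $\vec{p}(\mathsf{P})$. For a general deviation, $\vec{p}$ should be read as the round's (or round-averaged) effective Pauli distribution; averaging then gives the empirical average $\vec{p}_{\mathsf{T}}$ used in the main text.

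For a single round I would first condition on the trap pattern $T_i=T$. By the QOTP twirling discussed in Section~\ref{sec:ubpec}, the Server's action in a test round is a correct execution followed by a stochastic Pauli channel. Its $\mathsf{X}$ components act trivially on the $\mathsf{X}$-basis trap measurements and on the dummy qubits in computational basis states, up to the Client's decoding by $\mathrm{r}_v$ and $\mathrm{d}_v$. Its $\mathsf{Z}$ component $\mathsf{P}'$, drawn with probability $\vec{p}(\mathsf{P}')$, flips exactly the outcomes $\mathrm{b}_v$ for $v\in S_{\mathsf{P}'}$.

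The undeviated trap deterministically gives $\tilde y=+1$, because the decoded parity is the eigenvalue of the stabilizer in Eq.~\eqref{eq:stabilizer_tensor_product}. Hence $\tilde y=(-1)^{\langle T,S_{\mathsf{P}'}\rangle}$, and therefore
\[
\mathbb{E}[\tilde y\mid T]=\hat p(T)=\sum_{\mathsf{P}'}(-1)^{\langle T,S_{\mathsf{P}'}\rangle}\vec{p}(\mathsf{P}').
\]
One subtlety must be checked here: the twirled distribution $\vec{p}$ must not depend on $T$. This holds because the trap pattern is hidden from the Server by the same blindness randomization, so the Server's deviation is independent of the Client's secret choice of $T$.

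Next I would average over $T$ drawn uniformly from $\Omega_V$:
\[
\mathbb{E}\bigl[(-1)^{\langle T,S_{\mathsf{P}}\rangle}\tilde y\bigr]=\sum_{\mathsf{P}'}\vec{p}(\mathsf{P}')\,\frac{1}{2^{|V|}}\sum_{T\subseteq V}(-1)^{\langle T,S_{\mathsf{P}}\triangle S_{\mathsf{P}'}\rangle}.
\]
The inner sum equals $2^{|V|}$ if $S_{\mathsf{P}}=S_{\mathsf{P}'}$ and $0$ otherwise. This follows from the identity $H_{|V|}^2=2^{|V|}\mathsf{I}$, or equivalently from the lemma of Kapourniotis et al.\ that a uniformly random $T$ has odd overlap with any fixed nonempty set with probability exactly $1/2$. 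The sum therefore collapses to $\vec{p}(\mathsf{P})$, and summing over rounds gives $\mathbb{E}[\tilde p_{\mathsf{T}}(\mathsf{P})]=\vec{p}(\mathsf{P})$.

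The main obstacle is the first step: rigorously justifying that an arbitrary Server deviation reduces to a $T$-independent stochastic $\{\mathsf{I},\mathsf{Z}\}$ distribution acting exactly as outcome flips, including on dummy and odd-neighbourhood qubits. I would invoke the Pauli twirl from the $\theta_v,\mathrm{r}_v,\mathrm{d}_v$ randomization, as in~\cite{Kapourniotis2024Unifying}. The character-sum computation is routine.
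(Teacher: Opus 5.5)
Your proposal is correct and follows essentially the same route as the paper. Conditioning on $T$ gives the Walsh--Hadamard transform $\hat{p}(T)=\mathbb{E}[\tilde{y}\mid T]$, and averaging the signed outputs over uniform $T$ inverts it via $H_{|V|}^{2}=2^{|V|}\mathsf{I}$; you write this inversion out as the explicit character sum and also flag the $T$-independence of the twirled distribution, which the paper leaves implicit.
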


Hence, by replacing the unsigned average with a Walsh--Hadamard signed average, the Client can estimate the probability of any chosen Pauli deviation, rather than only the identity component.
In VBPEC, this inverse Walsh--Hadamard post-processing is applied to the Pauli strings in the relevant support $\Lambda=\operatorname{supp}(\vec{p}_{\mathrm{PEC}})$, and the resulting estimates are compared with the PEC noise model.

\section{Upper-bounding computation bias by test bias in VBPEC\label{appendix:proof_vbpec}}

Here, we relate the expected bias of the computation to the expected test statistic that estimates the Server's deviation from the PEC noise model.

We first introduce and recall some notation that will be used repeatedly in the following lemmas and proofs.
Let $\Omega_{V} = \{0,1\}^{|V|}$ be the set of length-$|V|$ binary strings and denote the all-zero string by $0_{V}\in\Omega_{V}$.
We also denote by $\mathbf{e}_{0_{V}}\in\mathbb{R}^{|\Omega_{V}|}$ the unit vector at $0_{V}\in\mathbb{R}^{|\Omega_{V}|}$, and by $\mathbf{1}_{V}$ the vector where all elements take the value one.

Conditioned on a fixed sequence of Pauli deviations and a realized partition, let $\vec{p}_{\mathsf{C}}$ and $\vec{p}_{\mathsf{T}}$ be the empirical deviation distributions on the computation and test rounds defined in Eq.~\eqref{eq:define_p_c_p_t_security}, and let $\vec{q}\in\mathbb{R}^{|\Omega_{V}|}$ be the quasi-probability distribution for $\mathcal{E}_{\mathrm{PEC}}^{-1}$.
  We define the effective deviation of the mitigated computation rounds by PEC as $\vec{r}_{\mathsf{C}}:=\vec{q}\ast\vec{p}_{\mathsf{C}}\in\mathbb{R}^{|\Omega_{V}|}$, where $\ast$ denotes convolution over $\Omega_{V}$.

For $\vec{a}, \vec{b}\in\mathbb{R}^{|\Omega_{V}|}$ and $z\in\Omega_{V}$, this convolution is defined by
\begin{equation}
  \begin{split}
    (\vec{a}\ast\vec{b})(z) = \sum_{z^{\prime}\in\Omega_{V}} \vec{a}(z^{\prime})\vec{b}(z+z^{\prime})
  \end{split}
\end{equation}
with addition operation $+$ understood bitwise modulo $2$.

With the aforementioned notations, we bound the conditional expected estimation bias of the target computation by the conditional expected test statistic and the finite-sample mismatch between the computation and test rounds.
  Let $f_{\mathsf{C}}$ be a function performing the computational task $\mathsf{C}\in\mathfrak{C}$ through the MBQC process, taking a density matrix for the measurement pattern over graph $G=(V,E)$ and a projector $\Pi_{\mathsf{C}}^{(G)}$ over $G$, and returning an expectation value in $[-1,1]$.
  The conditional expected computation output $\mu_{\mathsf{C}}=\mathbb{E}[\tilde{o}_{\mathsf{C}}\mid\vec{p}_{\mathsf{C}}]$ can be expressed as
  \begin{equation}
    \begin{split}
      \mu_{\mathsf{C}}
       & = \sum_{\mathsf{P}\in\{\mathsf{I},\mathsf{Z}\}^{|V|}}\vec{r}_{\mathsf{C}}(S_{\mathsf{P}})f_{\mathsf{C}}\left(\mathsf{P}\rho_{\mathsf{C}}^{(G)}\mathsf{P}, \Pi_{\mathsf{C}}^{(G)}\right).
    \end{split}
  \end{equation}
  Through $f_{\mathsf{C}}$, we upper-bound the conditional expected estimation bias by the $1$-norm between the effective computation-round deviation $\vec{r}_{\mathsf{C}}$ after PEC and the noise-free distribution $\mathbf{e}_{0_{V}}$ as follows:

  \begin{equation}
    \begin{split}
       & \left|\mu_{\mathsf{C}} - \operatorname{Tr}[\rho O]\right|                                                                                                                                                                                                                        \\
       & = \Bigg|\sum_{\mathsf{P}\in\{\mathsf{I},\mathsf{Z}\}^{|V|}}\vec{r}_{\mathsf{C}}(S_{\mathsf{P}})f_{\mathsf{C}}\left(\mathsf{P}\rho_{\mathsf{C}}^{(G)}\mathsf{P}, \Pi_{\mathsf{C}}^{(G)}\right) - f_{\mathsf{C}}\left(\rho_{\mathsf{C}}^{(G)}, \Pi_{\mathsf{C}}^{(G)}\right)\Bigg| \\
       & = \Bigg|\sum_{\substack{\mathsf{P}\in\{\mathsf{I},\mathsf{Z}\}^{|V|}                                                                                                                                                                                                             \\ \mathsf{P}\neq \mathsf{I}^{\otimes|V|}}}\vec{r}_{\mathsf{C}}(S_{\mathsf{P}})f_{\mathsf{C}}\left(\mathsf{P}\rho_{\mathsf{C}}^{(G)}\mathsf{P}, \Pi_{\mathsf{C}}^{(G)}\right)\\
       & \quad\quad\quad\quad\quad\quad\quad\quad\quad\quad~+ \left(\vec{r}_{\mathsf{C}}(0_{V}) - 1\right) f_{\mathsf{C}}\left(\rho_{\mathsf{C}}^{(G)}, \Pi_{\mathsf{C}}^{(G)}\right)\Bigg|                                                                                               \\
       & \leq |\vec{r}_{\mathsf{C}}(0_{V}) - 1|  + \sum_{\substack{s\in\Omega_{V}                                                                                                                                                                                                         \\ s\neq 0_{V}}}|\vec{r}_{\mathsf{C}}(s)|\\
       & = \|\vec{r}_{\mathsf{C}} - \mathbf{e}_{0_{V}}\|_{1}.                                                                                                                                                                                                                             \\
    \end{split}
  \end{equation}

We next upper-bound the conditional expected computation bias by the sum of the conditional expected test statistic and the mismatch between the empirical computation- and test-round deviation distributions, as summarized in Lemma~\ref{lemma:link_bias_test}.
\begin{lemma}[Linking biases of test and computation\label{lemma:link_bias_test}]
  Using the notations above, it holds that
  
    \begin{equation}
      \begin{split}
        \left|\mu_{\mathsf{C}} - \operatorname{Tr}[\rho O]\right|
         & \leq \left\|\vec{q}\right\|_{1}
        \left(
        \Delta_{\mathsf{T}}
        + \mathrm{D}_{\Lambda}\left(\vec{p}_{\mathsf{C}},\vec{p}_{\mathsf{T}}\right)
        \right).
      \end{split}
    \end{equation}
  
\end{lemma}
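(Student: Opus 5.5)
The plan is to start from the bound just derived, $|\mu_{\mathsf{C}}-\operatorname{Tr}[\rho O]|\le \|\vec{r}_{\mathsf{C}}-\mathbf{e}_{0_V}\|_1$ with $\vec{r}_{\mathsf{C}}=\vec{q}\ast\vec{p}_{\mathsf{C}}$. From there I reduce everything to $\ell_1$ and $\mathrm{D}_\Lambda$ manipulations, using three ingredients: the defining inverse property of $\vec{q}$, Young's inequality for convolution on the group $\Omega_V=\mathbb{Z}_2^{|V|}$, and a Jensen step that links the deterministic mismatch to the expected test statistic $\Delta_{\mathsf{T}}$.

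First, I would use that composing $Z$-type stochastic Pauli channels corresponds to convolution of their coefficient vectors over $\Omega_V$, since $\mathsf{P}\mathsf{P}'$ has support $S_{\mathsf{P}}+S_{\mathsf{P}'}$. The identity $\mathcal{E}_{\mathrm{PEC}}^{-1}\circ\mathcal{E}_{\mathrm{PEC}}=\mathrm{id}$, together with the assumed invertibility in the Pauli transfer representation (where Pauli strings are linearly independent as maps), then gives $\vec{q}\ast\vec{p}_{\mathrm{PEC}}=\mathbf{e}_{0_V}$. Therefore, by bilinearity,
\begin{equation*}
\vec{r}_{\mathsf{C}}-\mathbf{e}_{0_V}=\vec{q}\ast(\vec{p}_{\mathsf{C}}-\vec{p}_{\mathrm{PEC}}).
\end{equation*}
Young's inequality $\|\vec{a}\ast\vec{b}\|_1\le\|\vec{a}\|_1\|\vec{b}\|_1$ holds on a finite abelian group by a direct double-sum argument, and it yields $\|\vec{r}_{\mathsf{C}}-\mathbf{e}_{0_V}\|_1\le\|\vec{q}\|_1\|\vec{p}_{\mathsf{C}}-\vec{p}_{\mathrm{PEC}}\|_1$. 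Because $\vec{p}_{\mathsf{C}}$ is an empirical probability distribution, Eq.~\eqref{eq:definition_D_Lambda_against_pec} rewrites this last norm as $\mathrm{D}_\Lambda(\vec{p}_{\mathsf{C}},\vec{p}_{\mathrm{PEC}})$.

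Second, I would note that $\mathrm{D}_\Lambda(\vec{a},\vec{b})=N(\vec{a}-\vec{b})$, where $N(\vec{x})=\sum_{\mathsf{P}\in\Lambda}|\vec{x}(\mathsf{P})|+|\sum_{\mathsf{P}\in\Lambda}\vec{x}(\mathsf{P})|$ is a seminorm. Hence $\mathrm{D}_\Lambda$ satisfies the triangle inequality, and
\begin{equation*}
\mathrm{D}_\Lambda(\vec{p}_{\mathsf{C}},\vec{p}_{\mathrm{PEC}})\le\mathrm{D}_\Lambda(\vec{p}_{\mathsf{C}},\vec{p}_{\mathsf{T}})+\mathrm{D}_\Lambda(\vec{p}_{\mathsf{T}},\vec{p}_{\mathrm{PEC}}).
\end{equation*}
It then remains to show $\mathrm{D}_\Lambda(\vec{p}_{\mathsf{T}},\vec{p}_{\mathrm{PEC}})\le\Delta_{\mathsf{T}}$. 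Since $\vec{x}\mapsto\mathrm{D}_\Lambda(\vec{x},\vec{p}_{\mathrm{PEC}})$ is convex, Jensen's inequality gives $\mathrm{D}_\Lambda(\mathbb{E}[\tilde{p}_{\mathsf{T}}\mid\vec{p}_{\mathsf{T}}],\vec{p}_{\mathrm{PEC}})\le\Delta_{\mathsf{T}}$. So the bound follows once $\mathbb{E}[\tilde{p}_{\mathsf{T}}(\mathsf{P})\mid\vec{p}_{\mathsf{T}}]=\vec{p}_{\mathsf{T}}(\mathsf{P})$ for every $\mathsf{P}\in\Lambda$.

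The main obstacle I expect is this conditional unbiasedness. Lemma~\ref{lemma:Unbiased_estimation_of_Pauli_deviation} is stated for an i.i.d. deviation distribution, whereas here the Pauli sequence is fixed and possibly adversarial, varying from round to round. I would argue round by round. In test round $i$ the trap pattern $S_i$ is uniform and independent of the fixed $\mathsf{P}_i$ (by blindness, the server cannot correlate its deviation with $S_i$). The decoded output is $\tilde{y}_i=(-1)^{\langle S_i,S_{\mathsf{P}_i}\rangle}$, so $\mathbb{E}[(-1)^{\langle\mathrm{set}(\mathsf{P}),S_i\rangle}\tilde{y}_i]$ equals the character sum $2^{-|V|}\sum_T(-1)^{\langle T,S_{\mathsf{P}}+S_{\mathsf{P}_i}\rangle}=\mathbf{e}_{\mathsf{P}_i}(\mathsf{P})$. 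Averaging over $i\in\mathrm{S}_{\mathsf{T}}$ gives exactly $\vec{p}_{\mathsf{T}}(\mathsf{P})$. Chaining the three inequalities then proves the lemma.
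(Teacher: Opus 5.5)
Your proposal is correct and follows the paper's proof essentially step for step. The paper uses the inverse identity $\vec{q}\ast\vec{p}_{\mathrm{PEC}}=\mathbf{e}_{0_V}$ with the $1$-norm convolution inequality to reach $\|\vec{q}\|_1\mathrm{D}_\Lambda(\vec{p}_{\mathsf{C}},\vec{p}_{\mathrm{PEC}})$, then the triangle inequality for $\mathrm{D}_\Lambda$ and Jensen's inequality to replace $\mathrm{D}_\Lambda(\vec{p}_{\mathsf{T}},\vec{p}_{\mathrm{PEC}})$ by $\Delta_{\mathsf{T}}$, exactly as you do; your round-by-round character-sum argument for $\mathbb{E}[\tilde{p}_{\mathsf{T}}\mid\vec{p}_{\mathsf{T}}]=\vec{p}_{\mathsf{T}}$ under a fixed, round-varying Pauli sequence makes explicit a step the paper uses without separate justification.
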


\begin{proof}
  Recalling that $\vec{r}_{\mathsf{C}} = \vec{q}\ast\vec{p}_{\mathsf{C}}$ and $\mathbf{e}_{0_{V}} = \vec{q}\ast\vec{p}_{\mathrm{PEC}}$, we obtain
  
    \begin{equation}
      \begin{split}
        \vec{r}_{\mathsf{C}}-\mathbf{e}_{0_{V}}
         & = \vec{q}\ast\left(\vec{p}_{\mathsf{C}} - \vec{p}_{\mathrm{PEC}}\right).
      \end{split}
    \end{equation}
  
  Using the convolution inequality for the $1$-norm, namely $\|a \ast b\|_{1} \leq \|a\|_{1} \|b\|_{1}$, we have
  
    \begin{equation}\label{eq:r-e_v0}
      \begin{split}
        \left|\mu_{\mathsf{C}} - \operatorname{Tr}[\rho O]\right|
         & \leq \left\|\vec{r}_{\mathsf{C}}-\mathbf{e}_{0_{V}}\right\|_{1}                                  \\
         & \leq \left\|\vec{q}\right\|_{1} \left\|\vec{p}_{\mathsf{C}} - \vec{p}_{\mathrm{PEC}}\right\|_{1} \\
         & = \left\|\vec{q}\right\|_{1}
        \mathrm{D}_{\Lambda}\left(\vec{p}_{\mathsf{C}},\vec{p}_{\mathrm{PEC}}\right),
      \end{split}
    \end{equation}
  
  where the equality follows because $\vec{p}_{\mathsf{C}}$ and $\vec{p}_{\mathrm{PEC}}$ are probability distributions and $\vec{p}_{\mathrm{PEC}}$ is supported on $\Lambda$.

  We next relate the computation-round deviation distribution to the test statistic.
    Using the triangle inequality for $\mathrm{D}_{\Lambda}$ and Jensen's inequality $|\mathbb{E}[X]| \leq \mathbb{E}[|X|]$, we obtain
    \begin{equation}\label{eq:p_c_p_t_triangle_security}
      \begin{split}
        \mathrm{D}_{\Lambda}\left(\vec{p}_{\mathsf{C}},\vec{p}_{\mathrm{PEC}}\right)
         & \leq \mathrm{D}_{\Lambda}\left(\vec{p}_{\mathsf{T}},\vec{p}_{\mathrm{PEC}}\right)
        + \mathrm{D}_{\Lambda}\left(\vec{p}_{\mathsf{C}},\vec{p}_{\mathsf{T}}\right)                                              \\
         & = \mathrm{D}_{\Lambda}\left(\mathbb{E}[\tilde{p}_{\mathsf{T}} \mid \vec{p}_{\mathsf{T}}],\vec{p}_{\mathrm{PEC}}\right)
        + \mathrm{D}_{\Lambda}\left(\vec{p}_{\mathsf{C}},\vec{p}_{\mathsf{T}}\right)                                              \\
         & \leq \mathbb{E}[\mathrm{D}_{\Lambda}(\tilde{p}_{\mathsf{T}}, \vec{p}_{\mathrm{PEC}}) \mid \vec{p}_{\mathsf{T}}]
        + \mathrm{D}_{\Lambda}\left(\vec{p}_{\mathsf{C}},\vec{p}_{\mathsf{T}}\right)                                              \\
         & = \Delta_{\mathsf{T}}
        + \mathrm{D}_{\Lambda}\left(\vec{p}_{\mathsf{C}},\vec{p}_{\mathsf{T}}\right),
      \end{split}
    \end{equation}
    where the last equality comes from Eq.~\eqref{eq:Delta_T}, i.e., $\Delta_{\mathsf{T}} = \mathbb{E}[\mathrm{D}_{\Lambda}(\tilde{p}_{\mathsf{T}}, \vec{p}_{\mathrm{PEC}}) \mid \vec{p}_{\mathsf{T}}]$.
  
\end{proof}

\section{Proof of noise-robustness\label{appendix:proof_noise-robustness}}

\subsection{Noise-robustness of VBOE under honest noise}

We now analyze the noise robustness of VBOE in the honest-noise setting.
Since VBOE is a special case of VBPEC that omits PEC, its test statistic can also be described as the identity-reference special case of $\tilde{\Delta}$ in Eq.~\eqref{eq:vbpec_process_classical}, by replacing $\vec{p}_{\mathrm{PEC}}$ with $\vec{p}_{\mathrm{ref,VBOE}}:=\mathbf{e}_{0_{V}}$ and setting $\Lambda_{\mathrm{VBOE}}=\{\mathsf{I}^{\otimes|V|}\}$.
The VBOE test statistic $\tilde{\Delta}_{\mathrm{VBOE}}$ then become
\begin{equation}
  \begin{split}
    \tilde{\Delta}_{\mathrm{VBOE}}
     & := \sum_{\mathsf{P}\in\Lambda_{\mathrm{VBOE}}} \left|\mathbf{e}_{0_{V}}(\mathsf{P}) - \tilde{p}_{\mathsf{T}}(\mathsf{P})\right|
    \\
     & \quad + \left|1 - \sum_{\mathsf{P}\in\Lambda_{\mathrm{VBOE}}} \tilde{p}_{\mathsf{T}}(\mathsf{P})\right|                           \\
     & = \left(1-\tilde{p}_{\mathsf{T}}(\mathsf{I}^{\otimes|V|})\right) + \left(1-\tilde{p}_{\mathsf{T}}(\mathsf{I}^{\otimes|V|})\right) \\
     & = 2\left(1-\tilde{p}_{\mathsf{T}}(\mathsf{I}^{\otimes|V|})\right).
  \end{split}
\end{equation}

For comparison with matched-noise VBPEC, suppose that the Server is honest and that each round is independently affected by the actual noise channel $\mathcal{E}_{\mathrm{PEC}}$, with deviation distribution $\vec{p}$ and total non-identity probability $p:=1-\vec{p}(\mathsf{I}^{\otimes|V|})$.
For each test round, the output satisfies $\tilde{y}_{i}\in\{-1,1\}$.
Since a uniformly random general trap pattern detects any fixed non-identity deviation with probability $1/2$, we have $\operatorname{Pr}[\tilde{y}_{i}=-1]=p/2$ and $\operatorname{Pr}[\tilde{y}_{i}=1]=1-p/2$.
It follows that $\mathbb{E}[\tilde{y}_{i}]=1-p$ and hence $\mathbb{E}[\tilde{p}_{\mathsf{T}}(\mathsf{I}^{\otimes|V|})]=1-p$.
Therefore, the expected VBOE test statistic is
\begin{equation}
  \begin{split}
    \Delta_{\mathrm{VBOE}}
    := \mathbb{E}[\tilde{\Delta}_{\mathrm{VBOE}}]
     & = 2\left(1 - \mathbb{E}[\tilde{p}_{\mathsf{T}}(\mathsf{I}^{\otimes|V|})]\right) \\
     & = 2(1-(1-p))                                                                    \\
     & = 2p.
  \end{split}
\end{equation}

Since $\mathsf{Acc} \iff \tilde{\Delta}_{\mathrm{VBOE}} \leq \epsilon_{t}$, the robustness of VBOE is governed by the comparison between the threshold $\epsilon_{t}$ and the honest mean value $2p$.
The following lemma makes this precise.

\begin{lemma}[Acceptance probability of VBOE under honest noise]
  \label{lemma:probability_acceptance_honest_VBOE}
  Assume that the Server is honest and that each round is independently affected by $\mathcal{E}_{\mathrm{PEC}}$ with non-identity probability $p$.
  If $2p < \epsilon_{t}$, the protocol rejects the computation with negligibly small probability.
  If $2p > \epsilon_{t}$, the protocol accepts the computation with negligibly small probability.
\end{lemma}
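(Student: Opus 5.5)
The plan is to reduce both claims to a one-sided concentration bound for $\tilde{\Delta}_{\mathrm{VBOE}}$ around its mean $\Delta_{\mathrm{VBOE}}=2p$, which was computed just above. In the honest setting each test round is independently affected by the same Pauli channel, and the trap pattern $T_i$ is drawn independently and uniformly. Hence the outputs $\tilde{y}_i\in\{-1,1\}$ for $i\in\mathrm{S}_{\mathsf{T}}$ are i.i.d. with $\mathbb{E}[\tilde{y}_i]=1-p$, using Lemma 8 of Kapourniotis et al. that any non-identity $\mathsf{Z}$-deviation is detected with probability $1/2$. The statistic is then an affine function of a sum of bounded independent variables,
\begin{equation}
  \tilde{\Delta}_{\mathrm{VBOE}} = 2\left(1-\frac{1}{N_{t}}\sum_{i\in\mathrm{S}_{\mathsf{T}}}\tilde{y}_{i}\right) = \frac{2}{N_{t}}\sum_{i\in\mathrm{S}_{\mathsf{T}}}(1-\tilde{y}_{i}),
\end{equation}
where each summand $2(1-\tilde{y}_i)/N_t$ lies in $[0,4/N_t]$. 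Hoeffding's inequality (Lemma~\ref{lemma:ineq_hoeffding}) therefore gives, for any $t>0$, both $\operatorname{Pr}[\tilde{\Delta}_{\mathrm{VBOE}}-2p\geq t]$ and $\operatorname{Pr}[\tilde{\Delta}_{\mathrm{VBOE}}-2p\leq -t]$ at most $\exp(-t^{2}N_{t}/8)$.

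For the first claim, assume $2p<\epsilon_t$ and set $t=\epsilon_t-2p>0$. Rejection is the event $\tilde{\Delta}_{\mathrm{VBOE}}>\epsilon_t$, which implies $\tilde{\Delta}_{\mathrm{VBOE}}-2p> t$. Its probability is thus bounded by $\exp(-(\epsilon_t-2p)^2N_t/8)$, which is negligible in $N_t$.

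For the second claim, assume $2p>\epsilon_t$ and set $t=2p-\epsilon_t>0$. Acceptance is the event $\tilde{\Delta}_{\mathrm{VBOE}}\leq\epsilon_t$, which implies $\tilde{\Delta}_{\mathrm{VBOE}}-2p\leq -t$. The lower-tail bound then gives probability at most $\exp(-(2p-\epsilon_t)^2N_t/8)$.

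The only step needing care is justifying independence and the exact mean. Because the Server is honest, the rounds are independent and identically distributed, so we do not need the adversarial Serfling argument used in the security proof. The mean follows by averaging $(-1)^{\langle T,S_{\mathsf{P}}\rangle}$ over a uniform $T$: this average is $1$ for $\mathsf{P}=\mathsf{I}^{\otimes|V|}$ and $0$ otherwise, which is exactly the content of Lemma~\ref{lemma:Unbiased_estimation_of_Pauli_deviation} restricted to the identity component. We only need $\epsilon_t$ and $p$ to be fixed constants independent of $N_t$, so that the gap $|2p-\epsilon_t|$ stays bounded away from zero. The boundary case $2p=\epsilon_t$ is excluded by the statement.
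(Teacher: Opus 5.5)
Your proposal is correct and follows essentially the same route as the paper. Both apply Hoeffding's inequality to the i.i.d. test outputs with mean $1-p$ and obtain the identical bounds $\exp(-(\epsilon_t-2p)^2N_t/8)$ and $\exp(-(2p-\epsilon_t)^2N_t/8)$; the only cosmetic difference is that you concentrate $\tilde{\Delta}_{\mathrm{VBOE}}$ directly, with summands in $[0,4/N_t]$, whereas the paper concentrates $\tilde{p}_{\mathsf{T}}(\mathsf{I}^{\otimes|V|})$ at deviation $|\epsilon_t-2p|/2$.
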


\begin{proof}

  As noted above, $\mathsf{Acc} \iff \tilde{\Delta}_{\mathrm{VBOE}} \leq \epsilon_{t}$ and hence $\mathsf{Rej} \iff \tilde{\Delta}_{\mathrm{VBOE}} > \epsilon_{t}$.
  Moreover, $\tilde{p}_{\mathsf{T}}(\mathsf{I}^{\otimes|V|})=N_{t}^{-1}\sum_{i\in\mathrm{S}_{\mathsf{T}}}\tilde{y}_{i}$ is the average of independent random variables in $[-1,1]$, with expectation $1-p$.
  If $2p < \epsilon_{t}$, Hoeffding's inequality gives
  \begin{equation}
    \begin{split}
      \operatorname{Pr}[\mathsf{Acc}]
       & = 1 - \operatorname{Pr}[\mathsf{Rej}]                                                                                  \\
       & = 1 - \operatorname{Pr}\left[\tilde{p}_{\mathsf{T}}(\mathsf{I}^{\otimes|V|})-(1-p) < -\frac{\epsilon_{t}-2p}{2}\right] \\
       & \geq 1 - \exp\left( - \frac{\left(\epsilon_{t} - 2p\right)^{2}}{8}
      N_{t}\right).
    \end{split}
  \end{equation}
  If $2p > \epsilon_{t}$, Hoeffding's inequality gives
  \begin{equation}
    \begin{split}
      \operatorname{Pr}[\mathsf{Acc}]
       & = \operatorname{Pr}\left[\tilde{p}_{\mathsf{T}}(\mathsf{I}^{\otimes|V|})-(1-p) \geq \frac{2p-\epsilon_{t}}{2}\right] \\
       & \leq \exp\left( - \frac{\left(2p - \epsilon_{t}\right)^{2}}{8}
      N_{t}\right).
    \end{split}
  \end{equation}

\end{proof}

We next combine the above acceptance analysis with the estimation accuracy of the computation rounds.
Let $\Delta_{\mathsf{C}} := \left|\operatorname{Tr}[\mathcal{E}_{\mathrm{PEC}}(\rho)O] - \operatorname{Tr}[\rho O]\right|$ be the honest noisy bias of the target observable estimation.
Since we assume $\|O\|_{\infty} = 1$, $\Delta_{\mathsf{C}} \leq 2p$.
Then, the following lemma holds.

\begin{lemma}[Correctness of VBOE under honest noise]
  \label{lemma:correctness_honest_VBOE}
  Assume that the Server is honest and that each round is independently affected by $\mathcal{E}_{\mathrm{PEC}}$ with non-identity probability $p$.
  If $\Delta_{\mathsf{C}} < \epsilon$, then $\{\mathsf{Bad}\}$ occurs with exponentially small probability.
  If $\Delta_{\mathsf{C}} > \epsilon$, then $\{\mathsf{Good}\}$ occurs with exponentially small probability.
\end{lemma}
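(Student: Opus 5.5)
The plan is a Hoeffding argument centred on the honest noisy mean. In VBOE under honest noise there is no PEC, so each computation-round output $\tilde{y}_{i}\in\{-1,1\}$ is an independent sample with mean $\mu:=\operatorname{Tr}[\mathcal{E}_{\mathrm{PEC}}(\rho)O]$. Then $\tilde{o}_{\mathsf{C}}=N_{c}^{-1}\sum_{i\in\mathrm{S}_{\mathsf{C}}}\tilde{y}_{i}$ concentrates around $\mu$, and $|\mu-\operatorname{Tr}[\rho O]|=\Delta_{\mathsf{C}}$ by definition. Both claims follow from comparing $\epsilon$ with $\Delta_{\mathsf{C}}$ and charging the gap to the Hoeffding fluctuation of Eq.~\eqref{eq:hoeffding_observable_estimation}. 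The independence of rounds and the identical mean hold because each round is independently affected by the same channel $\mathcal{E}_{\mathrm{PEC}}$, so the samples are i.i.d.

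For the first claim, assume $\Delta_{\mathsf{C}}<\epsilon$. The triangle inequality gives
\[
|\tilde{o}_{\mathsf{C}}-\operatorname{Tr}[\rho O]|\leq|\tilde{o}_{\mathsf{C}}-\mu|+\Delta_{\mathsf{C}},
\]
so $\mathsf{Bad}$ forces $|\tilde{o}_{\mathsf{C}}-\mu|>\epsilon-\Delta_{\mathsf{C}}>0$. Lemma~\ref{lemma:ineq_hoeffding} with $b_{i}-a_{i}=2$ bounds this probability by
\[
2\exp\left(-\frac{(\epsilon-\Delta_{\mathsf{C}})^{2}}{2}N_{c}\right).
\]

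For the second claim, assume $\Delta_{\mathsf{C}}>\epsilon$. The reverse triangle inequality gives
\[
|\tilde{o}_{\mathsf{C}}-\operatorname{Tr}[\rho O]|\geq\Delta_{\mathsf{C}}-|\tilde{o}_{\mathsf{C}}-\mu|,
\]
so $\mathsf{Good}$ forces $|\tilde{o}_{\mathsf{C}}-\mu|\geq\Delta_{\mathsf{C}}-\epsilon>0$. This has probability at most
\[
2\exp\left(-\frac{(\Delta_{\mathsf{C}}-\epsilon)^{2}}{2}N_{c}\right).
\]
Both bounds are exponentially small in $N_{c}$ for a fixed gap $|\Delta_{\mathsf{C}}-\epsilon|$.

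The only step needing care is the first one: justifying that, in the hybrid where each round is replaced by the BDOE resource (Theorem~\ref{theorem:UBOE}), the computation outputs really are i.i.d.\ Bernoulli variables with mean $\mu$. That in turn requires the random placement of test rounds to be independent of the computation outcomes. Once that is settled, the rest is a direct application of the concentration inequality.
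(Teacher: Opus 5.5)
Your proposal is correct and matches the paper's proof: Hoeffding's inequality centred on the noisy mean $\operatorname{Tr}[\mathcal{E}_{\mathrm{PEC}}(\rho)O]$, combined with the triangle inequality in the first case and the reverse triangle inequality in the second, gives the same bounds $2\exp\left(-(\epsilon-\Delta_{\mathsf{C}})^{2}N_{c}/2\right)$ and $2\exp\left(-(\Delta_{\mathsf{C}}-\epsilon)^{2}N_{c}/2\right)$. Your non-strict inequality $|\tilde{o}_{\mathsf{C}}-\mu|\geq\Delta_{\mathsf{C}}-\epsilon$ in the second case is actually the precise one (the paper writes a strict inequality there), and Hoeffding's inequality covers it just as well.
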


\begin{proof}
  Under honest execution, the computation-round outputs are independent random variables in $\{-1, 1\}$ with mean $\operatorname{Tr}[ \mathcal{E}_{\mathrm{PEC}}( \rho ) O ]$.
  Hence, if $\Delta_{\mathsf{C}} < \epsilon$, Hoeffding's inequality implies
  \begin{equation}
    \begin{split}
      \operatorname{Pr}[\mathsf{Good}]
       & = 1 - \operatorname{Pr}[\mathsf{Bad}]                                                                                                                                   \\
       & = 1 - \operatorname{Pr}\left[\left|\tilde{o}_{\mathsf{C}} - \operatorname{Tr}[\rho O ]\right| > \epsilon\right]                                                         \\
       & \geq 1 - \operatorname{Pr}\left[\left|\tilde{o}_{\mathsf{C}} - \operatorname{Tr}[ \mathcal{E}_{\mathrm{PEC}}( \rho ) O ]\right| > \epsilon - \Delta_{\mathsf{C}}\right] \\
       & \geq 1 - 2 \exp\left(- \frac{\left(\epsilon - \Delta_{\mathsf{C}}\right)^{2}}{2}
      N_{c}\right).
      \\
    \end{split}
  \end{equation}
  If $\Delta_{\mathsf{C}} > \epsilon$, so that $\mathsf{Good}$ holds, it is necessary that $\left|\tilde{o}_{\mathsf{C}} - \operatorname{Tr}[ \mathcal{E}_{\mathrm{PEC}}( \rho ) O ]\right| > \Delta_{\mathsf{C}} - \epsilon$.
  Using Hoeffding's inequality, we obtain
  \begin{equation}
    \begin{split}
      \operatorname{Pr}[\mathsf{Good}]
       & \leq \operatorname{Pr}\left[\left|\tilde{o}_{\mathsf{C}} - \operatorname{Tr}[ \mathcal{E}_{\mathrm{PEC}}( \rho ) O ]\right| > \Delta_{\mathsf{C}} - \epsilon\right] \\
       & \leq 2 \exp\left(- \frac{\left(\Delta_{\mathsf{C}} - \epsilon\right)^{2}}{2}
      N_{c}\right).
    \end{split}
  \end{equation}
  Note that since $\Delta_{\mathsf{C}} \leq 2p$, the regime $\Delta_{\mathsf{C}} > \epsilon$ can occur only if $2p > \epsilon$.
\end{proof}

From Lemma~\ref{lemma:probability_acceptance_honest_VBOE} and Lemma~\ref{lemma:correctness_honest_VBOE}, we obtain the following corollary on the probability that VBOE accepts the correct output.

\begin{corollary}[Noise-robustness of VBOE under honest noise]
  \label{corollary:noise-robustness_honest_VBOE}
  Assume that the Server is honest and that each round is independently affected by $\mathcal{E}_{\mathrm{PEC}}$ with non-identity probability $p$.
  If $2p < \epsilon_{t}$ and $\Delta_{\mathsf{C}} < \epsilon$, the protocol rejects the computation with exponentially small probability.
  If $2p > \epsilon_{t}$ or $\Delta_{\mathsf{C}} > \epsilon$, the protocol accepts the correct output with exponentially small probability.
\end{corollary}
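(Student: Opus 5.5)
The corollary follows by combining Lemma~\ref{lemma:probability_acceptance_honest_VBOE} (acceptance) with Lemma~\ref{lemma:correctness_honest_VBOE} (accuracy) through a union bound. Under honest execution every round is independently affected by $\mathcal{E}_{\mathrm{PEC}}$, so the test rounds and the computation rounds are independent collections of random variables. Moreover, the events $\mathsf{Acc}$ and $\mathsf{Good}$ depend on disjoint sets of rounds. The protocol returns the correct output exactly on $\mathsf{Acc}\wedge\mathsf{Good}$, so I would study $\operatorname{Pr}[\mathsf{Acc}\wedge\mathsf{Good}]$ in each of the two regimes.

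For the first claim, assume $2p<\epsilon_{t}$ and $\Delta_{\mathsf{C}}<\epsilon$. The failure event is $\neg(\mathsf{Acc}\wedge\mathsf{Good})=\mathsf{Rej}\vee\mathsf{Bad}$, and the union bound gives
\[
\operatorname{Pr}[\mathsf{Rej}\vee\mathsf{Bad}]\leq \exp\!\left(-\tfrac{(\epsilon_{t}-2p)^{2}}{8}N_{t}\right)+2\exp\!\left(-\tfrac{(\epsilon-\Delta_{\mathsf{C}})^{2}}{2}N_{c}\right).
\]
The first term comes from Lemma~\ref{lemma:probability_acceptance_honest_VBOE} and the second from Lemma~\ref{lemma:correctness_honest_VBOE}. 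Both are exponentially small in $N_{t}$ and $N_{c}$, respectively. In particular, the protocol rejects, or returns an incorrect value, only with exponentially small probability.

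For the second claim, I would use $\operatorname{Pr}[\mathsf{Acc}\wedge\mathsf{Good}]\leq\min\{\operatorname{Pr}[\mathsf{Acc}],\operatorname{Pr}[\mathsf{Good}]\}$. If $2p>\epsilon_{t}$, the second part of Lemma~\ref{lemma:probability_acceptance_honest_VBOE} bounds $\operatorname{Pr}[\mathsf{Acc}]$ by $\exp(-(2p-\epsilon_{t})^{2}N_{t}/8)$. If $\Delta_{\mathsf{C}}>\epsilon$, the second part of Lemma~\ref{lemma:correctness_honest_VBOE} bounds $\operatorname{Pr}[\mathsf{Good}]$ by $2\exp(-(\Delta_{\mathsf{C}}-\epsilon)^{2}N_{c}/2)$. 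In either case the probability of accepting a correct output is exponentially small.

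The argument needs no new technical work. The only point needing care is the wording of the first claim. Strictly speaking, Lemma~\ref{lemma:probability_acceptance_honest_VBOE} alone controls rejection. For the statement to carry its intended meaning, namely that a correct output is accepted with overwhelming probability, it should be read as bounding $\mathsf{Rej}\vee\mathsf{Bad}$, which is why the union bound over both lemmas is the essential step. The independence of test and computation rounds is not actually required for this union bound. It would only matter if one wanted the product form $\operatorname{Pr}[\mathsf{Acc}]\operatorname{Pr}[\mathsf{Good}]$.
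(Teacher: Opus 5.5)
Your proposal is correct and follows the paper's proof essentially verbatim. In the first regime it uses the union bound $\operatorname{Pr}[\mathsf{Rej}\vee\mathsf{Bad}]\leq\operatorname{Pr}[\mathsf{Rej}]+\operatorname{Pr}[\mathsf{Bad}]$ with the two lemmas, and in the second it uses the inclusions $\mathsf{Acc}\wedge\mathsf{Good}\subseteq\mathsf{Acc}$ and $\mathsf{Acc}\wedge\mathsf{Good}\subseteq\mathsf{Good}$. Your reading of the first claim as a lower bound on $\operatorname{Pr}[\mathsf{Acc}\wedge\mathsf{Good}]$ is exactly what the paper proves, and your remark that independence of test and computation rounds is not needed is also right.
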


\begin{proof}

  If $2p < \epsilon_{t}$ and $\Delta_{\mathsf{C}} < \epsilon$, applying the union bound yields
  \begin{equation}
    \begin{split}
      \operatorname{Pr}[ \mathsf{Acc} \wedge \mathsf{Good} ]
       & = 1 - \operatorname{Pr}[ \mathsf{Rej} \vee \mathsf{Bad} ]                      \\
       & \geq 1 - \operatorname{Pr}[ \mathsf{Rej} ] - \operatorname{Pr}[ \mathsf{Bad} ] \\
       & \geq1 - \exp\left(- \frac{\left(\epsilon_{t} - 2p\right)^{2}}{8}
      N_{t}\right)                                                                      \\ &\quad\quad- 2 \exp\left(- \frac{\left(\epsilon - \Delta_{\mathsf{C}}\right)^{2}}{2}N_{c}\right).
    \end{split}
  \end{equation}

  Otherwise, if $2p > \epsilon_{t}$, it follows directly from Lemma~\ref{lemma:probability_acceptance_honest_VBOE} that $\operatorname{Pr}[\mathsf{Acc}]$ is negligibly small.
  If $\Delta_{\mathsf{C}} > \epsilon$, it follows directly from Lemma~\ref{lemma:correctness_honest_VBOE} that $\operatorname{Pr}[\mathsf{Good}]$ is negligibly small.
  Since $\mathsf{Acc} \wedge \mathsf{Good} \subseteq \mathsf{Acc}$ and $\mathsf{Acc} \wedge \mathsf{Good} \subseteq \mathsf{Good}$, for both cases, $\operatorname{Pr}[\mathsf{Acc}\wedge\mathsf{Good}]$ is negligibly small.

\end{proof}

\subsection{Noise-robustness of VBPEC under honest noise mismatch}

We now analyze the noise-robustness of VBPEC when the Server is honest, but the actual noise does not exactly match the noise model used for PEC.
We assume that each round is independently affected by the same stochastic Pauli channel with deviation distribution $\vec{p}\in\mathbb{R}_{\geq 0}^{|\Omega_V|}$.

\begin{lemma}[Expected test output under honest noise mismatch]
  \label{lemma:expected_test_output_under_honest_noise_mismatch_vbpec}
  Define $D_{\mathrm{mis}}$ and $D_{\mathrm{stat}}$ as
  \begin{equation}
    \begin{split}
      D_{\mathrm{mis}} & := \mathrm{D}_{\Lambda}\left(\vec{p}, \vec{p}_{\mathrm{PEC}}\right) = \left\| \vec{p}-\vec{p}_{\mathrm{PEC}} \right\|_{1}, \\ D_{\mathrm{stat}} & := \frac{2|\Lambda|}{\sqrt{N_{t}}}.
    \end{split}
  \end{equation}
  Under the honest noise-mismatch setting, the expected test statistic satisfies
  \begin{equation}
    \begin{split}
      \Delta
      := \mathbb{E}[\tilde{\Delta}]
      \leq D_{\mathrm{mis}} + D_{\mathrm{stat}}.
    \end{split}
  \end{equation}
\end{lemma}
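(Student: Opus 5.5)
The plan is to split $\tilde{\Delta}$ with the triangle inequality for $\mathrm{D}_{\Lambda}$, separating the model mismatch from the finite-sample fluctuation. Since $\tilde{\Delta} = \mathrm{D}_{\Lambda}(\tilde{p}_{\mathsf{T}},\vec{p}_{\mathrm{PEC}})$ and $\mathrm{D}_{\Lambda}$ is a seminorm of the difference, inserting the true honest distribution $\vec{p}$ gives the pointwise inequality
\begin{equation*}
  \tilde{\Delta} \leq \mathrm{D}_{\Lambda}\left(\vec{p},\vec{p}_{\mathrm{PEC}}\right) + \mathrm{D}_{\Lambda}\left(\tilde{p}_{\mathsf{T}},\vec{p}\right).
\end{equation*}
This is the same triangle inequality used in the proof of Lemma~\ref{lemma:link_bias_test}. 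The first term is deterministic and equals $D_{\mathrm{mis}}$ by Eq.~\eqref{eq:definition_D_Lambda_against_pec}. Taking expectations then reduces the claim to showing $\mathbb{E}[\mathrm{D}_{\Lambda}(\tilde{p}_{\mathsf{T}},\vec{p})]\leq 2|\Lambda|/\sqrt{N_{t}}$.

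For the fluctuation term, I would first apply Eq.~\eqref{eq:inequality_triangle_D_Lambda} to get
\begin{equation*}
  \mathrm{D}_{\Lambda}(\tilde{p}_{\mathsf{T}},\vec{p})\leq 2\sum_{\mathsf{P}\in\Lambda}|\tilde{p}_{\mathsf{T}}(\mathsf{P})-\vec{p}(\mathsf{P})|.
\end{equation*}
Next, I would take expectations. Under the honest model, each round is independently affected by the same channel $\vec{p}$ and the trap patterns are uniform. Lemma~\ref{lemma:Unbiased_estimation_of_Pauli_deviation} then gives $\mathbb{E}[\tilde{p}_{\mathsf{T}}(\mathsf{P})]=\vec{p}(\mathsf{P})$. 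The estimator $\tilde{p}_{\mathsf{T}}(\mathsf{P})$ is an average of $N_{t}$ i.i.d. terms $(-1)^{\langle \mathrm{set}(\mathsf{P}),S_{i}\rangle}\tilde{y}_{i}\in\{-1,1\}$, so its variance is at most $1/N_{t}$. Repeating the Cauchy--Schwarz step of Eq.~\eqref{eq:mean_abs_gap_P} gives $\mathbb{E}|\tilde{p}_{\mathsf{T}}(\mathsf{P})-\vec{p}(\mathsf{P})|\leq 1/\sqrt{N_{t}}$. Summing over the $|\Lambda|$ terms with the factor $2$ yields $D_{\mathrm{stat}}=2|\Lambda|/\sqrt{N_{t}}$, and adding $D_{\mathrm{mis}}$ completes the bound.

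The only step needing care is the unbiasedness and variance claim. It requires that the test rounds are i.i.d. and that the honest deviation acting on a test round has the same distribution $\vec{p}$ regardless of the trap pattern. This follows from the honest setting together with the QOTP twirling argument behind Lemma~\ref{lemma:Unbiased_estimation_of_Pauli_deviation}. I would state it explicitly, noting that the trap choice $S_i$ is independent of the noise realization, so the conditional mean is the Walsh--Hadamard coefficient averaged to $\vec{p}(\mathsf{P})$. Everything else mirrors the matched-noise correctness computation, Eq.~\eqref{eq:mean_honest_delta_correctness}, with $\vec{p}_{\mathrm{PEC}}$ replaced by $\vec{p}$ plus the mismatch term.
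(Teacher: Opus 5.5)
Your proposal is correct and follows the paper's proof step for step. The pointwise triangle inequality for $\mathrm{D}_{\Lambda}$ isolates $D_{\mathrm{mis}}$. Then Eq.~\eqref{eq:inequality_triangle_D_Lambda}, unbiasedness from Lemma~\ref{lemma:Unbiased_estimation_of_Pauli_deviation}, the $1/N_{t}$ variance bound and Cauchy--Schwarz per component yield $D_{\mathrm{stat}}$. Your explicit note that the variance bound relies on the test rounds being i.i.d. in the honest setting, with the trap choice independent of the noise, is a useful clarification that the paper leaves implicit.
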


\begin{proof}
  First, by the triangle inequality,
  \begin{equation}
    \begin{split}
      \tilde{\Delta}
      = \mathrm{D}_{\Lambda}\left(\tilde{p}_{\mathsf{T}}, \vec{p}_{\mathrm{PEC}}\right)
       & \leq \mathrm{D}_{\Lambda}\left(\tilde{p}_{\mathsf{T}}, \vec{p}\right) + \mathrm{D}_{\Lambda}\left(\vec{p}, \vec{p}_{\mathrm{PEC}}\right) \\
       & = \mathrm{D}_{\Lambda}\left(\tilde{p}_{\mathsf{T}}, \vec{p}\right) + D_{\mathrm{mis}}.
    \end{split}
  \end{equation}
  Taking the expectation of both sides and using Eq.~\eqref{eq:inequality_triangle_D_Lambda}, we have
  \begin{equation}
    \begin{split}
      \mathbb{E}[\tilde{\Delta}]
       & \leq D_{\mathrm{mis}} + 2\sum_{\mathsf{P}\in\Lambda} \mathbb{E}\left[\left|\vec{p}(\mathsf{P}) - \tilde{p}_{\mathsf{T}}(\mathsf{P})\right|\right].
    \end{split}
  \end{equation}
  Recall that $\mathbb{E}[\tilde{p}_{\mathsf{T}}(\mathsf{P})] = \vec{p}(\mathsf{P})$ and $\operatorname{Var}[\tilde{p}_{\mathsf{T}}(\mathsf{P})] \leq 1 / N_{t}$ as discussed in the correctness proof.
  Similarly to Eq.~\eqref{eq:mean_abs_gap_P} in the correctness proof, we observe that
  \begin{equation}
    \begin{split}
      \mathbb{E}\left[\left|\vec{p}(\mathsf{P}) - \tilde{p}_{\mathsf{T}}(\mathsf{P})\right|\right]
      \leq \sqrt{\operatorname{Var}[\tilde{p}_{\mathsf{T}}(\mathsf{P})]}
      \leq \frac{1}{\sqrt{N_{t}}},
    \end{split}
  \end{equation}
  which further implies that
  \begin{equation}
    \begin{split}
      2\sum_{\mathsf{P}\in\Lambda}\mathbb{E}\left[\left|\vec{p}(\mathsf{P}) - \tilde{p}_{\mathsf{T}}(\mathsf{P})\right|\right]
      \leq \frac{2|\Lambda|}{\sqrt{N_{t}}}
      = D_{\mathrm{stat}}.
    \end{split}
  \end{equation}
  Therefore, we obtain $\mathbb{E}[\tilde{\Delta}] \leq D_{\mathrm{mis}} + D_{\mathrm{stat}}$.
\end{proof}

Lemma~\ref{lemma:expected_test_output_under_honest_noise_mismatch_vbpec} shows that the honest noise mismatch contributes a deterministic shift $D_{\mathrm{mis}}$ to the expected test statistic, while the finite number of test rounds contributes the statistical term $D_{\mathrm{stat}}$.
Therefore, the acceptance threshold is governed by the gap between $\epsilon_{t}$ and $D_{\mathrm{mis}} + D_{\mathrm{stat}}$.
On the other hand, the accuracy of the PEC-processed computation rounds is governed by the bias $\left\|\vec{q}\right\|_{1} D_{\mathrm{mis}}$.

\begin{lemma}[Noise-robustness of VBPEC under honest noise mismatch]
  \label{lemma:noise_robustness_honest_mismatch_vbpec}
  The probability $\operatorname{Pr}\left[\mathsf{Acc}\wedge\mathsf{Good}\right]$ is exponentially close to one, provided that
  \begin{equation}
    \begin{split}
      D_{\mathrm{mis}} < \min\left\{\frac{\epsilon}{\left\|\vec{q}\right\|_{1}}, \epsilon_{t}\right\}, \quad N_{t} \geq \frac{16|\Lambda|^{2}}{(\epsilon_{t}-D_{\mathrm{mis}})^{2}}.
    \end{split}
  \end{equation}
\end{lemma}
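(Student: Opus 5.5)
We bound $\operatorname{Pr}[\mathsf{Acc}\wedge\mathsf{Good}]$ from below by $1-\operatorname{Pr}[\mathsf{Rej}]-\operatorname{Pr}[\mathsf{Bad}]$, using the union bound as in Corollary~\ref{corollary:noise-robustness_honest_VBOE}. We then show that each failure probability is exponentially small under the stated conditions. The two terms are independent of each other in the honest setting: test rounds and computation rounds are driven by independent samples of the same channel $\vec{p}$. So we treat them separately, reusing the correctness-proof machinery for the test part and plain Hoeffding for the computation part.

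\textbf{Rejection.} We follow the correctness proof. Split $\{\tilde{\Delta}>\epsilon_{t}\}\subseteq\{|\tilde{\Delta}-\Delta|>\epsilon_{\Delta}\}\cup\{\Delta>\epsilon_{t}-\epsilon_{\Delta}\}$ for some $0<\epsilon_{\Delta}<\epsilon_{t}-D_{\mathrm{mis}}$.

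The second event is deterministic. By Lemma~\ref{lemma:expected_test_output_under_honest_noise_mismatch_vbpec}, $\Delta\leq D_{\mathrm{mis}}+2|\Lambda|/\sqrt{N_{t}}$. Choosing $\epsilon_{\Delta}=(\epsilon_{t}-D_{\mathrm{mis}})/2$, the hypothesis $N_{t}\geq 16|\Lambda|^{2}/(\epsilon_{t}-D_{\mathrm{mis}})^{2}$ gives $2|\Lambda|/\sqrt{N_{t}}\leq(\epsilon_{t}-D_{\mathrm{mis}})/2$. Hence $\Delta\leq\epsilon_{t}-\epsilon_{\Delta}$, and this event has probability zero.

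For the first event, the bounded-difference constant $4|\Lambda|/N_{t}$ from the correctness proof does not depend on the underlying channel. McDiarmid's inequality (Lemma~\ref{lemma:ineq_mcdiarmid}) therefore gives $\operatorname{Pr}[\mathsf{Rej}]\leq 2\exp\bigl(-(\epsilon_{t}-D_{\mathrm{mis}})^{2}N_{t}/(32|\Lambda|^{2})\bigr)$. This uses $D_{\mathrm{mis}}<\epsilon_{t}$.

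\textbf{Badness.} Each honest computation round gives a PEC-weighted output in $[-\|\vec{q}\|_{1},\|\vec{q}\|_{1}]$. These outputs are i.i.d., with mean $\mu=\sum_{\mathsf{P}}(\vec{q}\ast\vec{p})(S_{\mathsf{P}})f_{\mathsf{C}}(\mathsf{P}\rho_{\mathsf{C}}^{(G)}\mathsf{P},\Pi_{\mathsf{C}}^{(G)})$. The argument of Appendix~\ref{appendix:proof_vbpec} (the first part of Lemma~\ref{lemma:link_bias_test}, with $\vec{p}_{\mathsf{C}}$ replaced by the true $\vec{p}$) then gives $|\mu-\operatorname{Tr}[\rho O]|\leq\|\vec{q}\ast(\vec{p}-\vec{p}_{\mathrm{PEC}})\|_{1}\leq\|\vec{q}\|_{1}D_{\mathrm{mis}}$. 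By the hypothesis $D_{\mathrm{mis}}<\epsilon/\|\vec{q}\|_{1}$, this bias is strictly smaller than $\epsilon$. Applying Hoeffding (Lemma~\ref{lemma:ineq_hoeffding}) with gap $\epsilon-\|\vec{q}\|_{1}D_{\mathrm{mis}}$, as in Lemma~\ref{lemma:correctness_honest_VBOE}, gives $\operatorname{Pr}[\mathsf{Bad}]\leq 2\exp\bigl(-(\epsilon-\|\vec{q}\|_{1}D_{\mathrm{mis}})^{2}N_{c}/(2\|\vec{q}\|_{1}^{2})\bigr)$.

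Summing the two bounds proves the lemma. The one step needing care is the second event in the rejection split. The bound of Lemma~\ref{lemma:expected_test_output_under_honest_noise_mismatch_vbpec} must leave a strictly positive margin below $\epsilon_{t}$, and that margin is exactly what the $N_{t}$ condition buys. Beyond that, we must make sure that in the honest mismatched setting the computation mean is governed by the true $\vec{p}$ rather than by an empirical $\vec{p}_{\mathsf{C}}$. Since rounds are i.i.d. here, the deterministic convolution bound suffices and no Serfling-type step is needed.
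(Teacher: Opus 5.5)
Your proposal is correct and follows the paper's proof essentially step for step. It uses the same union bound, the same bias bound $\|\vec{q}\|_{1}D_{\mathrm{mis}}$ from the convolution argument followed by Hoeffding for $\mathsf{Bad}$, and Lemma~\ref{lemma:expected_test_output_under_honest_noise_mismatch_vbpec} plus McDiarmid for $\mathsf{Rej}$. The only difference is cosmetic: you fix the margin at $\epsilon_{\Delta}=(\epsilon_{t}-D_{\mathrm{mis}})/2$, which gives the constant $32$. The paper instead shifts directly by $D_{\mathrm{mis}}+D_{\mathrm{stat}}$ and keeps the margin $\epsilon_{t}-D_{\mathrm{mis}}-D_{\mathrm{stat}}\geq(\epsilon_{t}-D_{\mathrm{mis}})/2$, so its exponent is at least as good.
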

\begin{proof}
  We prove an explicit lower bound on $\operatorname{Pr}\left[\mathsf{Acc}\wedge\mathsf{Good}\right]$.
  By the union bound,
  \begin{equation}
    \begin{split}
      \operatorname{Pr}\left[\mathsf{Acc}\wedge\mathsf{Good}\right]
       & = 1-\operatorname{Pr}\left[\mathsf{Rej}\vee\mathsf{Bad}\right]                                \\
       & \geq 1-\operatorname{Pr}\left[\mathsf{Rej}\right]-\operatorname{Pr}\left[\mathsf{Bad}\right].
    \end{split}
  \end{equation}

  We first bound $\operatorname{Pr}\left[\mathsf{Bad}\right]$.
  To begin with, we see that the triangle inequality yields
  \begin{equation}
    \begin{split}
      \left|\tilde{o}_{\mathsf{C}} - \mathbb{E}\left[\tilde{o}_{\mathsf{C}}\right]\right|
      \geq \left|\tilde{o}_{\mathsf{C}} - \operatorname{Tr}\left[\rho O\right]\right| - \left|\mathbb{E}\left[\tilde{o}_{\mathsf{C}}\right] - \operatorname{Tr}\left[\rho O\right]\right|.
    \end{split}
  \end{equation}
  By Lemma~\ref{lemma:link_bias_test}, the bias of the PEC-processed computation output is bounded as
  \begin{equation}
    \begin{split}
      \left|\mathbb{E}\left[\tilde{o}_{\mathsf{C}}\right] - \operatorname{Tr}\left[\rho O\right]\right|
       & \leq \left\|\vec{q}\right\|_{1}\left\|\vec{p} - \vec{p}_{\mathrm{PEC}}\right\|_{1} \\
       & = \epsilon - \left(\epsilon - \left\|\vec{q}\right\|_{1}D_{\mathrm{mis}}\right)
    \end{split}
  \end{equation}
  This implies that when $\left|\tilde{o}_{\mathsf{C}} - \operatorname{Tr}\left[\rho O\right]\right| > \epsilon$ occurs, it holds that
  \begin{equation}
    \begin{split}
      \left|\tilde{o}_{\mathsf{C}}-\mathbb{E}\left[\tilde{o}_{\mathsf{C}}\right]\right|
      > \epsilon-\left\|\vec{q}\right\|_{1}D_{\mathrm{mis}}
      > 0,
    \end{split}
  \end{equation}
  along with the condition $D_{\mathrm{mis}} < \epsilon/\left\|\vec{q}\right\|_{1}$.
  Since each summand in $\tilde{o}_{\mathsf{C}}$ is bounded in $\left[-\left\|\vec{q}\right\|_{1},\left\|\vec{q}\right\|_{1}\right]$, by Hoeffding's inequality, we obtain
  \begin{equation}\label{eq:bad_honest_mismatch_vbpec}
    \begin{split}
      \operatorname{Pr}\left[\mathsf{Bad}\right]
      \leq 2\exp\left(- \frac{\left(\epsilon-\left\|\vec{q}\right\|_{1}
        D_{\mathrm{mis}}\right)^{2}}{2\left\|\vec{q}\right\|_{1}^{2}}N_{c}\right).
    \end{split}
  \end{equation}

  We next bound $\operatorname{Pr}\left[\mathsf{Rej}\right]$.
  By Lemma~\ref{lemma:expected_test_output_under_honest_noise_mismatch_vbpec}, we have $\Delta \leq D_{\mathrm{mis}}+D_{\mathrm{stat}}$.
  This implies that
  \begin{equation}
    \begin{split}
      \operatorname{Pr}\left[\mathsf{Rej}\right]
       & = \operatorname{Pr}\left[\tilde{\Delta}>\epsilon_{t}\right]
      \leq \operatorname{Pr}\left[\tilde{\Delta}-\Delta>\epsilon_{t}-D_{\mathrm{mis}}-D_{\mathrm{stat}}\right].
    \end{split}
  \end{equation}
  Since we have assumed $D_{\mathrm{mis}}<\epsilon_{t}$, if further $D_{\mathrm{stat}} \leq (\epsilon_{t}-D_{\mathrm{mis}}) / 2$, we obtain $\epsilon_{t}-D_{\mathrm{mis}}-D_{\mathrm{stat}}\geq 0$.
  This condition $D_{\mathrm{stat}} \leq (\epsilon_{t}-D_{\mathrm{mis}}) / 2$ is then equivalent to
  \begin{equation}
    \begin{split}
      N_{t} \geq \frac{16|\Lambda|^{2}}{(\epsilon_{t}-D_{\mathrm{mis}})^{2}}.
    \end{split}
  \end{equation}

  As in the correctness proof, changing a single test-round pair changes $\tilde{\Delta}$ by at most $4|\Lambda|/N_{t}$.
  Thus, applying McDiarmid's inequality with $c_{i}=4|\Lambda|/N_{t}$ gives
  \begin{equation}\label{eq:rej_honest_mismatch_vbpec}
    \begin{split}
      \operatorname{Pr}\left[\mathsf{Rej}\right]
      \leq 2\exp\left(- \frac{\left(\epsilon_{t}-D_{\mathrm{mis}}-D_{\mathrm{stat}}\right)^{2}}{8|\Lambda|^{2}}
      N_{t}\right).
    \end{split}
  \end{equation}
  Combining Eq.~\eqref{eq:bad_honest_mismatch_vbpec} and Eq.~\eqref{eq:rej_honest_mismatch_vbpec}, we obtain
  \begin{equation}
    \begin{split}
      \operatorname{Pr}\left[\mathsf{Acc}\wedge\mathsf{Good}\right]
       & \geq 1 - 2\exp\left(- \frac{\left(\epsilon-\left\|\vec{q}\right\|_{1}
      D_{\mathrm{mis}}\right)^{2}}{2\left\|\vec{q}\right\|_{1}^{2}}N_{c}\right) \\ &\quad\quad - 2\exp\left(- \frac{\left(\epsilon_{t}-D_{\mathrm{mis}}-D_{\mathrm{stat}}\right)^{2}}{8|\Lambda|^{2}}N_{t}\right).
    \end{split}
  \end{equation}
  Hence, $\operatorname{Pr}\left[\mathsf{Acc}\wedge\mathsf{Good}\right]$ is exponentially close to one.
\end{proof}

\section{Security issues with applying PEC to test rounds\label{appendix:PEC_on_test}}

\begin{figure*}
  \centering
  \subfloat[]{
    \includegraphics[width=0.32\linewidth]{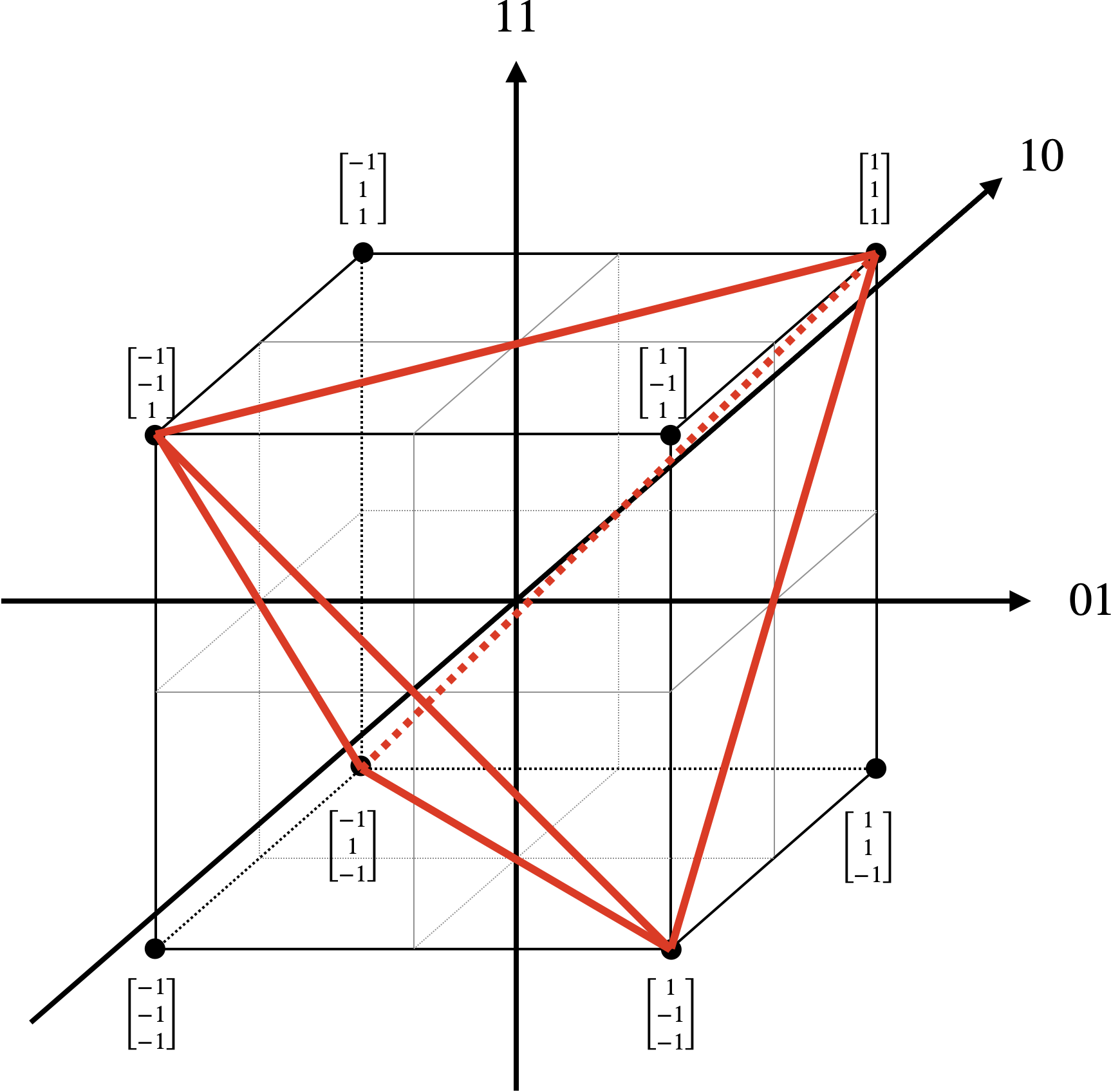}
  }
  \hfill
  \subfloat[]{
    \includegraphics[width=0.32\linewidth]{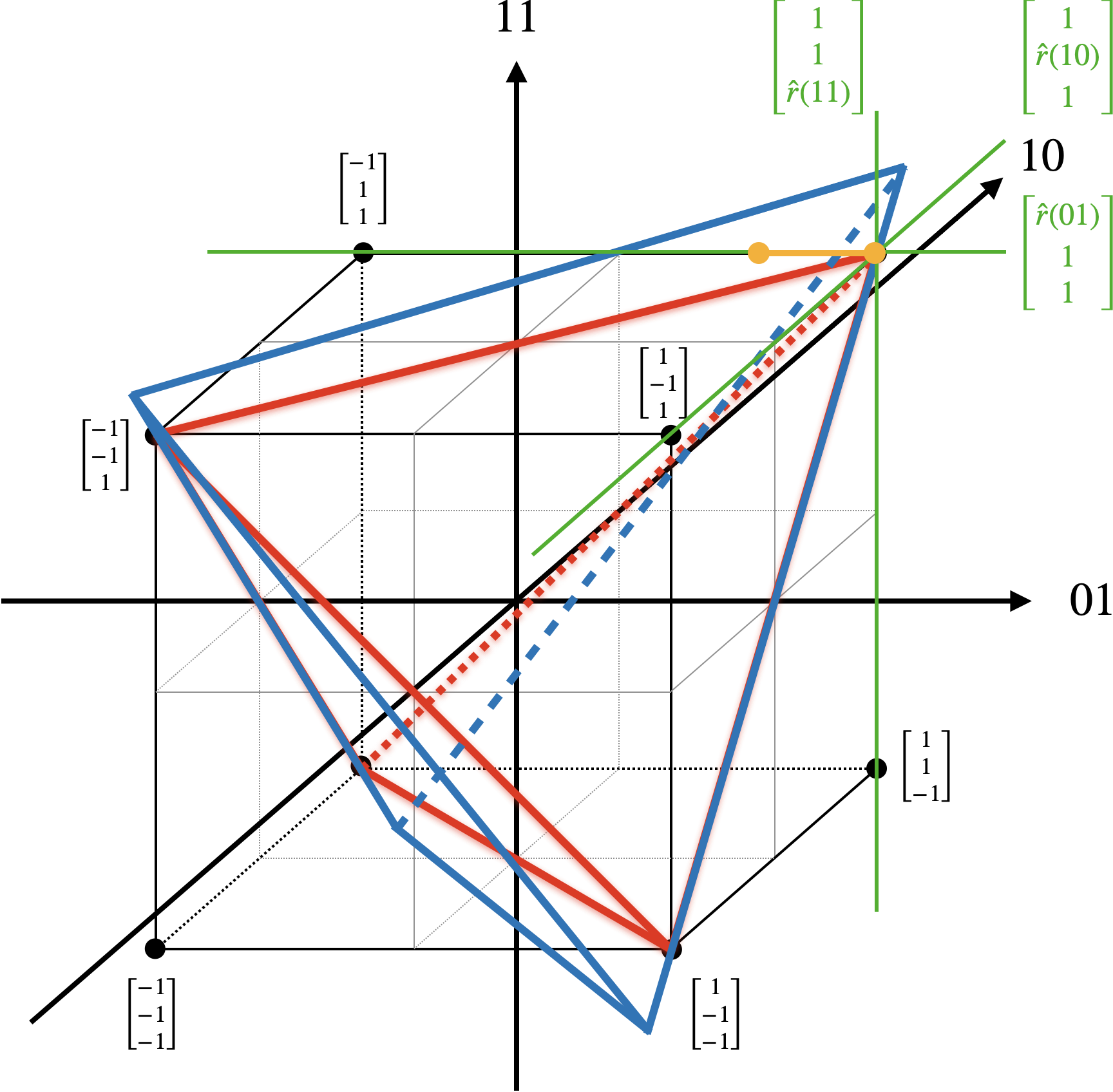}
  }
  \hfill
  \subfloat[]{
    \includegraphics[width=0.32\linewidth]{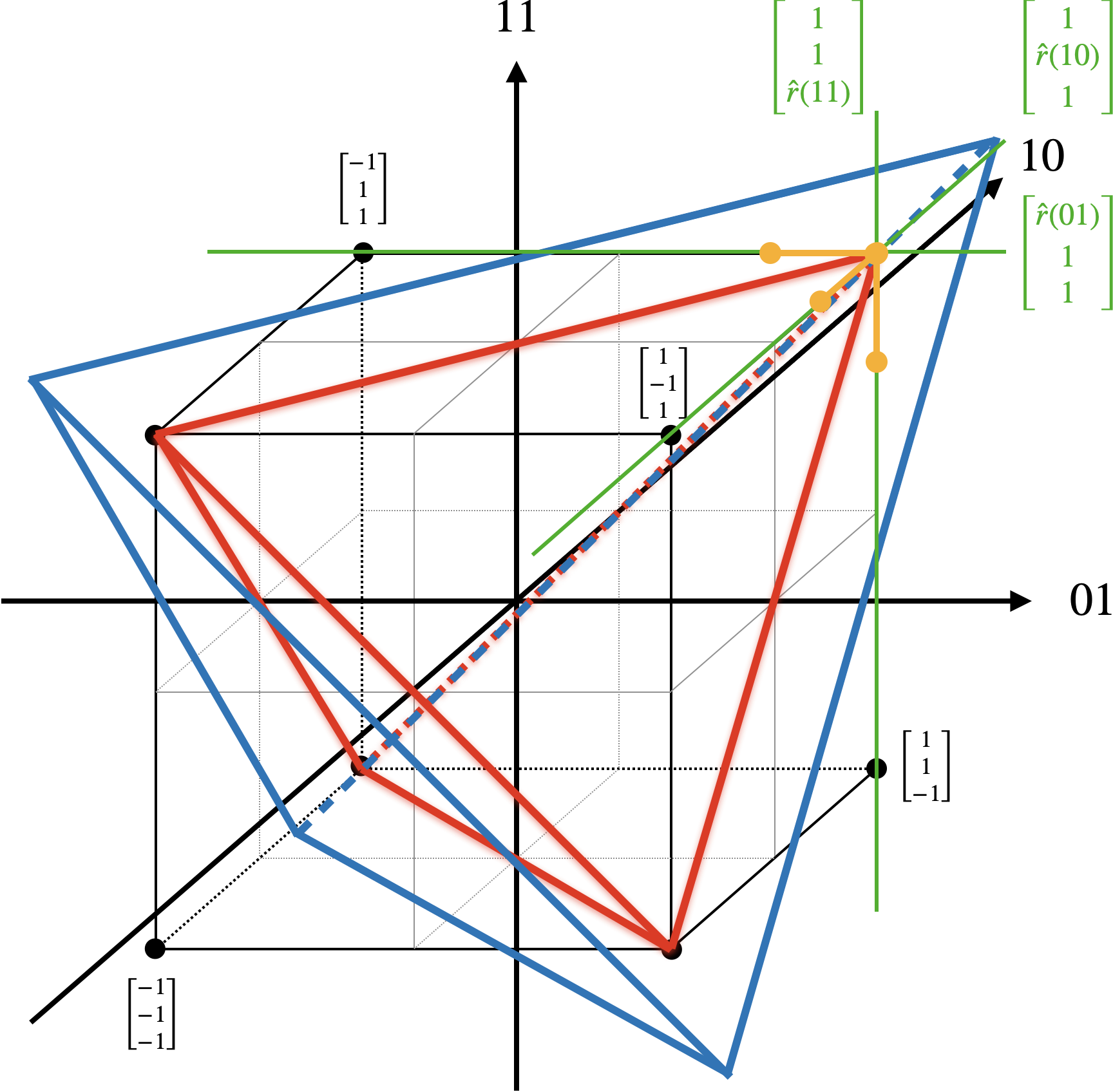}
  }
  \caption{
    Geometric action of PEC on two-qubit trap-response vectors in the Walsh--Hadamard representation.
    Since the component associated with the trivial pattern $00$ is fixed to one, we plot only the three non-trivial components $(01,10,11)$.
    The red tetrahedron is the set of $\hat{p}=H_{2}\vec{p}$ generated by all physical probability distributions $\vec{p}$, while each blue tetrahedron is the corresponding set of post-PEC vectors $\hat{r}=\hat{q}\odot\hat{p}$.
    The point $(1,1,1)$ represents the ideal identity response without any deviations detected.
    The green coordinate lines contain vectors for which exactly one non-trivial trap pattern differs from the ideal response; their yellow portions indicate intersections with the admissible post-PEC region and hence deviations that the Server can physically realize.
    (a) Before PEC, the physical region intersects these single-pattern lines only at $(1,1,1)$.
    (b) For PEC acting on the first qubit only, $\hat{q}=(1,1,1/(1-2p_{1}),1/(1-2p_{1}))^{T}$ stretches the $10$ and $11$ axes; the blue region has a non-trivial intersection only with the line associated with the trap pattern $01$.
    (c) For global two-qubit dephasing, $\hat{q}=(1,1/(1-p),1/(1-p),1/(1-p))^{T}$ stretches all three axes equally; the blue region intersects all three single-pattern lines, allowing the deviation to be localized to any non-trivial trap pattern.
  }
  \label{fig:polytopes_in_cube}
\end{figure*}

We remark in this section that applying PEC to the test rounds would undermine the protocol's verifiability.
One might consider the following natural alternative: instead of assuming that the relevant noise model admits an efficient description, the Client could apply PEC post-processing directly to the observed test statistics and compare the resulting estimate with the ideal identity response.
However, the inverse noise model used in PEC can distort the trap-response statistics in a way that a malicious Server can exploit to conceal its deviation.
We give explicit examples showing that, after PEC post-processing, the deviation from the ideal trap-response statistics can be concentrated on a single trap pattern among exponentially many possible trap patterns.
Consequently, detecting such an attack would require testing exponentially many trap patterns.

We first illustrate this phenomenon in a two-qubit example where PEC acts nontrivially only on one qubit.
In this local setting, as long as the non-trivial Pauli components in the noise model are localized on a constant-size subsystem, the Client can still efficiently identify the responding trap pattern by inspecting all trap patterns on that subsystem.
We then show that, once non-trivial deviations may occur on both qubits, the detectable effect can be localized to any trap pattern of the two-qubit system, so that all trap patterns must be checked.
Finally, we generalize this construction to a $|V|$-qubit system, where the number of trap patterns to be inspected grows as $2^{|V|}$.
This shows that applying PEC to the test rounds would destroy the efficiency of trap-based verification, particularly when the malicious Server knows the noise model.

\subsection{Applying PEC locally to the first qubit in a two-qubit system}

Let us first consider a two-qubit system and take the noise model $\mathcal{E}_{\mathrm{PEC}} = (1-p_{1})\mathsf{I}\mathsf{I} + p_{1}\mathsf{Z}\mathsf{I}$ as an illustrative example.
The probability distributions of $\mathcal{E}_{\mathrm{PEC}}$ and $\mathcal{E}_{\mathrm{PEC}}^{-1}$ can be expressed in vector form, respectively, as
\begin{equation}
  \begin{split}
    \vec{p}_{\mathrm{PEC}}
    = \left[\begin{array}{c}
                1 - p_{1} \\
                0         \\
                p_{1}     \\
                0         \\
              \end{array}\right]
    \begin{array}{l}
      \cdots~\mathsf{I}\mathsf{I} \\
      \cdots~\mathsf{I}\mathsf{Z} \\
      \cdots~\mathsf{Z}\mathsf{I} \\
      \cdots~\mathsf{Z}\mathsf{Z} \\
    \end{array}~, \quad
    \vec{q}
    = \left[\begin{array}{c}
                \frac{1-p_{1}}{1-2p_{1}} \\
                0                        \\
                \frac{-p_{1}}{1-2p_{1}}  \\
                0                        \\
              \end{array}\right].
  \end{split}
\end{equation}
Denoting the Server's actual deviation and its associated probability distribution by $\mathcal{E}$ and $\vec{p}$, the probability distribution of the effective deviation after PEC processing, $\mathcal{E}_{\mathrm{PEC}}^{-1}\circ\mathcal{E}$, corresponds to the convolution $\vec{r} = \vec{q}\ast\vec{p}$.

Let $H_{2}$ be the $4\times 4$ non-normalized Walsh--Hadamard transformation
\begin{equation}
  \begin{split}
    H_{2} :=
    \left[\begin{array}{rrrr}
              1 & 1  & 1  & 1  \\
              1 & -1 & 1  & -1 \\
              1 & 1  & -1 & -1 \\
              1 & -1 & -1 & 1
            \end{array}\right].
  \end{split}
\end{equation}
As discussed in Appendix~\ref{appendix:general_trap_patterns}, the sensitivity of each trap pattern to each Pauli deviation in $\{\mathsf{I},\mathsf{Z}\}^{|V|}$ can be understood through the Walsh--Hadamard transformation of the corresponding (quasi-)probability distribution.
Thus, defining $\hat{p}$ and $\hat{q}$ as the Walsh--Hadamard transforms of $\vec{p}$ and $\vec{q}$, we have
\begin{equation}
  \begin{split}
    \hat{p}
    = H_{2}\vec{p}
    = \left[\begin{array}{c}
                1           \\
                \hat{p}(01) \\
                \hat{p}(10) \\
                \hat{p}(11) \\
              \end{array}\right]
    \begin{array}{l}
      \cdots~00 \\
      \cdots~01 \\
      \cdots~10 \\
      \cdots~11 \\
    \end{array}~, \quad
    \hat{q}
    = H_{2}\vec{q}
    = \left[\begin{array}{c}
                1                  \\
                1                  \\
                \frac{1}{1-2p_{1}} \\
                \frac{1}{1-2p_{1}} \\
              \end{array}\right].
  \end{split}
\end{equation}
The first element, corresponding to the pattern $00$, is always $1$.
This makes it useful to visualize $\hat{p}$ by placing $(\hat{p}(01), \hat{p}(10), \hat{p}(11))$ in a three-dimensional coordinate system.
Since $\|\vec{p}\|_{1}=1$ and $\vec{p}\geq 0$, the set of all possible values of $\hat{p}$ forms a tetrahedron in this three-dimensional space, shown as the red region in Fig.~\ref{fig:polytopes_in_cube}(a).

Using the property of the Walsh--Hadamard transformation and convolution, we also see that

  \begin{equation}
    \begin{split}
      \hat{r}
      = H_{2}(\vec{q} \ast \vec{p})
      = (H_{2}\vec{q}) \odot (H_{2}\vec{p})
      = \hat{q}\odot\hat{p}
      = \left[\begin{array}{c}
                  1                            \\
                  \hat{p}(01)                  \\
                  \frac{\hat{p}(10)}{1-2p_{1}} \\
                  \frac{\hat{p}(11)}{1-2p_{1}} \\
                \end{array}\right],
    \end{split}
  \end{equation}

where $\odot$ denotes the Hadamard product, i.e., entry-wise multiplication.
This representation shows that $\hat{q}$ acts as an entry-wise scaling factor on $\hat{p}$.
Since $\hat{r}(00)=1$, $(\hat{r}(01), \hat{r}(10), \hat{r}(11))$ can also be represented in a three-dimensional coordinate system.
The set of admissible values is obtained from the tetrahedron for $\hat{p}$ by scaling each coordinate axis by the corresponding component of $\hat{q}$.
For the noise model $\mathcal{E}_{\mathrm{PEC}}$, the corresponding $\hat{q}$ stretches $\hat{p}(10)$ and $\hat{p}(11)$ along their respective coordinate axes by a factor of $1/(1-2p_{1})$.
This gives the blue tetrahedron shown in Fig.~\ref{fig:polytopes_in_cube}(b).

For $\hat{p}$ in the red tetrahedron, $(\hat{p}(01), \hat{p}(10), \hat{p}(11)) = (1, 1, 1)$ if and only if $\vec{p} = [1,0,0,0]^{T}$.
For any valid probability distribution $\vec{p}\neq [1,0,0,0]^{T}$, at least two elements among $\hat{p}(01)$, $\hat{p}(10)$, and $\hat{p}(11)$ become smaller than $1$ and are therefore sensitive to $\vec{p}$.

In fact, in this example, the Server can construct a deviation $\vec{p}$ such that only the trap pattern $01$ is sensitive, i.e., $\hat{r} = [1, \hat{r}(01), 1, 1]^{T}$.
Solving this $\hat{r}$ for $\vec{p}$ and $\hat{p}$, we obtain
\begin{equation}
  \begin{split}
    \hat{p}
    =\left[\begin{array}{c}
               1           \\
               \hat{r}(01) \\
               1 - 2p_{1}  \\
               1 - 2p_{1}
             \end{array}\right], \quad
    \vec{p}
    =\frac{1}{4}\left[\begin{array}{c}
                          3 + \hat{r}(01)- 4p_{1}  \\
                          1 - \hat{r}(01)          \\
                          \hat{r}(01) - 1 + 4p_{1} \\
                          1 - \hat{r}(01)
                        \end{array}\right].
  \end{split}
\end{equation}
Here, $\hat{p}$ corresponds to a non-trivial valid probability distribution when $\vec{p}(\mathsf{I}\mathsf{Z}) = \hat{r}(01)$ ranges over $1-4p_{1} \leq \vec{p}(\mathsf{I}\mathsf{Z}) \leq 1$.
This corresponds to the yellow line segment in Fig.~\ref{fig:polytopes_in_cube}(b).
Thus, the Server can hide its deviation so that only the trap pattern $01$ is sensitive to it.

We also verify that the Server cannot construct a non-trivial probability distribution such that only the trap pattern $11$ is sensitive, i.e., $\hat{r} = [1, 1, 1, \hat{r}(11)]^{T}$.
Solving this $\hat{r}$ for $\vec{p}$ and $\hat{p}$, we obtain
\begin{equation}
  \begin{split}
    \hat{p}
     & = \left[\begin{array}{c}
                   1          \\
                   1          \\
                   1 - 2p_{1} \\
                   (1 - 2p_{1}) \hat{r}(11)
                 \end{array}\right],                          \\
    \vec{p}
     & = \frac{1}{4}\left[\begin{array}{c}
                              3 - 2p_{1} + (1 - 2p_{1}) \hat{r}(11) \\
                              (1 - 2p_{1}) (1 - \hat{r}(11))        \\
                              1 + 2p_{1} - (1 - 2p_{1}) \hat{r}(11) \\
                              - (1 - 2p_{1}) (1 - \hat{r}(11))
                            \end{array}\right].
  \end{split}
\end{equation}
Here, we notice that $\vec{p}(\mathsf{I}\mathsf{Z}) = -\vec{p}(\mathsf{Z}\mathsf{Z})$.
Thus, $\hat{r}(11)$ can only take $\hat{r}(11) = 1$ for $\vec{p}$ to be a valid probability distribution, which is equivalent to the honest deviation $\vec{p}_{\mathrm{PEC}}$.
This implies that the Server cannot hide its deviation solely in $\hat{r}(11)$.
Since the trap patterns $10$ and $11$ are symmetric, the same argument shows that there is no non-trivial probability distribution that attacks only $10$.

Visually, the sets for which exactly one trap is triggered are indicated by the green line segments.
The red tetrahedron shares only the point $(1,1,1)$ with these segments, whereas the blue tetrahedron intersects the line $(\hat{r}(01), 1,1)$ over the range $1-4p_{1} \leq \hat{r}(01) \leq 1$.
On the other hand, the lines $(1, \hat{r}(10), 1)$ and $(1,1, \hat{r}(11))$ meet the blue tetrahedron only at $(1,1,1)$.
Hence, $\hat{r}(01)=1$ automatically implies $\hat{r}(10)=\hat{r}(11)=1$.

Therefore, the Client would need to examine the trap patterns on this local subsystem to identify the one that responds to the deviation.
This example already shows that PEC post-processing can localize the detectable effect of a deviation.
However, since the non-trivial noise is confined to a known local subsystem, this by itself does not yet imply an exponential obstruction: all relevant local trap patterns can still be inspected directly.

\subsection{Applying PEC for global dephasing in a two-qubit system}

Next, we consider the global two-qubit dephasing noise model
\begin{equation}
  \begin{split}
    \mathcal{E}_{\mathrm{PEC}}
     & = \left(1-\frac{3p}{4}\right)\mathsf{I}\mathsf{I}
    + \frac{p}{4}\left(
    \mathsf{Z}\mathsf{I}+\mathsf{I}\mathsf{Z}+\mathsf{Z}\mathsf{Z}
    \right).
  \end{split}
\end{equation}
Here, $0<p<1$, where the upper bound ensures that the inverse map exists.
The probability distributions of $\mathcal{E}_{\mathrm{PEC}}$ and $\mathcal{E}_{\mathrm{PEC}}^{-1}$ can be expressed, respectively, as
\begin{equation}
  \begin{split}
    \vec{p}_{\mathrm{PEC}}
     & = \left[\begin{array}{c}
                   1-3p/4 \\
                   p/4    \\
                   p/4    \\
                   p/4    \\
                 \end{array}\right]
    \begin{array}{l}
      \cdots~\mathsf{I}\mathsf{I} \\
      \cdots~\mathsf{I}\mathsf{Z} \\
      \cdots~\mathsf{Z}\mathsf{I} \\
      \cdots~\mathsf{Z}\mathsf{Z} \\
    \end{array}~,   \\
    \vec{q}
     & = \frac{1}{4(1-p)}
    \left[\begin{array}{c}
              4-p \\
              -p  \\
              -p  \\
              -p  \\
            \end{array}\right].
  \end{split}
\end{equation}
Then, the Walsh--Hadamard transforms of the noise model $\vec{p}_{\mathrm{PEC}}$ and its inverse $\vec{q}$ become
\begin{equation}
  \begin{split}
    \hat{p}_{\mathrm{PEC}}
     & = H_{2}\vec{p}_{\mathrm{PEC}}
    = \left[\begin{array}{c}
                1   \\
                1-p \\
                1-p \\
                1-p \\
              \end{array}\right]
    \begin{array}{l}
      \cdots~00 \\
      \cdots~01 \\
      \cdots~10 \\
      \cdots~11 \\
    \end{array}~,                  \\
    \hat{q}
     & = H_{2}\vec{q}
    = \left[\begin{array}{c}
                1             \\
                \frac{1}{1-p} \\
                \frac{1}{1-p} \\
                \frac{1}{1-p} \\
              \end{array}\right].
  \end{split}
\end{equation}

Using the property of the Walsh--Hadamard transformation and convolution, we also see that
\begin{equation}
  \begin{split}
    \hat{r}
    = \hat{q}\odot\hat{p}
    = \left[\begin{array}{c}
                1                       \\
                \frac{\hat{p}(01)}{1-p} \\
                \frac{\hat{p}(10)}{1-p} \\
                \frac{\hat{p}(11)}{1-p} \\
              \end{array}\right],
  \end{split}
\end{equation}
Thus, $\hat{q}$ acts as an entry-wise scaling factor on $\hat{p}$ along all axes.
This gives the blue tetrahedron shown in Fig.~\ref{fig:polytopes_in_cube}(c).

In this case, the Server can localize its deviation along any axis.
For example, solving $\hat{r} = [1, \hat{r}(01), 1, 1]^{T}$ for $\vec{p}$ and $\hat{p}$, we obtain
\begin{equation}
  \begin{split}
    \hat{p}
     & = \hat{p}_{\mathrm{PEC}}
    + \left[\begin{array}{c}
                0                      \\
                (1-p)(\hat{r}(01) - 1) \\
                0                      \\
                0
              \end{array}\right], \\
    \vec{p}
     & = \vec{p}_{\mathrm{PEC}}
    + \frac{(1-p)(\hat{r}(01) - 1)}{4}
    \left[\begin{array}{r}
              1  \\
              -1 \\
              1  \\
              -1 \\
            \end{array}\right].
  \end{split}
\end{equation}
Here, $\vec{p}\geq 0$ is satisfied, for example, throughout the non-trivial interval
\begin{equation}
  \begin{split}
    1-\frac{p}{1-p} \leq \hat{r}(01) \leq 1.
  \end{split}
\end{equation}
By symmetry of the global dephasing model, the Server can likewise localize its deviation to $\hat{r}(10)$ if
\begin{equation}
  \begin{split}
    1-\frac{p}{1-p} \leq \hat{r}(10) \leq 1,
  \end{split}
\end{equation}
and, by solving $\hat{r} = [1, 1, 1, \hat{r}(11)]^{T}$, to $\hat{r}(11)$ if
\begin{equation}
  \begin{split}
    1-\frac{p}{1-p} \leq \hat{r}(11) \leq 1.
  \end{split}
\end{equation}
These conditions correspond to the yellow line segments within the blue tetrahedron in Fig.~\ref{fig:polytopes_in_cube}(c).
Consequently, the Client cannot certify the absence of Server's deviation by checking only a fixed local subset of trap patterns.
Instead, in this two-qubit example, all non-trivial trap patterns need to be inspected.

\subsection{Applying PEC for global dephasing in a \texorpdfstring{$|V|$}{|V|}-qubit system}

We now generalize the global dephasing model above to a $|V|$-qubit system.
Let $d:=2^{|V|}$, and identify each $u\in\{0,1\}^{|V|}$ with the Pauli operation
\begin{equation}
  \begin{split}
    \mathsf{Z}_{u}:=\bigotimes_{i=1}^{|V|}\mathsf{Z}_{u_{i}}.
  \end{split}
\end{equation}
Since $\mathsf{X}$ components are harmless in MBQC, grouping the Pauli components of a global dephasing channel by their effective $\mathsf{Z}$ components gives the following global dephasing model:

  \begin{equation}
    \begin{split}
      \mathcal{E}_{\mathrm{PEC}}
       & = (1-p)\mathsf{I}^{\otimes |V|}
      + \frac{p}{d}\sum_{u\in\{0,1\}^{|V|}}\mathsf{Z}_{u}          \\
       & = \left(1-\frac{(d-1)p}{d}\right)\mathsf{I}^{\otimes |V|}
      + \frac{p}{d}\sum_{u\in\{0,1\}^{|V|}\setminus\{0_{V}\}}\mathsf{Z}_{u},
    \end{split}
  \end{equation}

where $0<p<1$.
Its probability distribution is
\begin{equation}
  \begin{split}
    \vec{p}_{\mathrm{PEC}}(u)
     & = \begin{cases}
           \displaystyle 1-\frac{(d-1)p}{d} & \text{if}~~u=0_{V}, \\
           \displaystyle \frac{p}{d}        & \text{otherwise}.
         \end{cases}
  \end{split}
\end{equation}
For the non-normalized Walsh--Hadamard transformation $H_{|V|}$ with entries $(H_{|V|})_{v,u}=(-1)^{\langle\mathrm{set}(v), \mathrm{set}(u)\rangle}$, its transform is
\begin{equation}
  \begin{split}
    \hat{p}_{\mathrm{PEC}}(v)
     & = \begin{cases}
           1   & \text{if}~~v=0_{V}, \\
           1-p & \text{otherwise}.
         \end{cases}
  \end{split}
\end{equation}
Accordingly, the Walsh--Hadamard transform of the quasi-probability distribution $\vec{q}$ for the inverse channel is
\begin{equation}
  \begin{split}
    \hat{q}(v)
     & = \begin{cases}
           1                     & \text{if}~~v=0_{V}, \\
           \displaystyle 1/(1-p) & \text{otherwise},
         \end{cases}
  \end{split}
\end{equation}
and applying the inverse Walsh--Hadamard transformation gives
\begin{equation}
  \begin{split}
    \vec{q}(u)
     & = \begin{cases}
           \displaystyle\frac{d-p}{d(1-p)} & \text{if}~~u=0_{V}, \\[10pt]
           \displaystyle-\frac{p}{d(1-p)}  & \text{otherwise}.
         \end{cases}
  \end{split}
\end{equation}

Choosing an arbitrary non-trivial trap pattern $w\in\{0,1\}^{|V|}\setminus\{0_{V}\}$, suppose that the Server chooses its deviation to be sensitive only at $w$, namely
\begin{equation}
  \begin{split}
    \hat{r}(v)
     & = \begin{cases}
           1-\eta & \text{if}~~v=w,   \\
           1      & \text{otherwise}.
         \end{cases}
  \end{split}
\end{equation}
Since $\hat{r}=\hat{q}\odot\hat{p}$ and $\hat{q}\odot\hat{p}_{\mathrm{PEC}}=\mathbf{1}_{V}$, the required Walsh--Hadamard transform of the Server's deviation is
\begin{equation}
  \begin{split}
    \hat{p}
     & = \hat{p}_{\mathrm{PEC}}-(1-p)\eta\mathbf{e}_{w}.
  \end{split}
\end{equation}
Applying the inverse Walsh--Hadamard transformation, we obtain
\begin{equation}
  \begin{split}
    \vec{p}(u)
     & = \vec{p}_{\mathrm{PEC}}(u)
    -\frac{(1-p)\eta}{d}(-1)^{u\cdot w}.
  \end{split}
\end{equation}
For $|V|\geq 2$, every non-zero $w$ has a non-zero $u$ satisfying $u\cdot w=0$.
The corresponding component of $\vec{p}$ is $[p-(1-p)\eta]/d$, while all other non-negativity constraints are weaker.
Therefore, $\vec{p}$ is a valid non-trivial probability distribution whenever
\begin{equation}\label{eq:condition_insensitivity_V}
  \begin{split}
    0<\eta\leq\min\left\{1,\frac{p}{1-p}\right\}.
  \end{split}
\end{equation}
Hence, for the global $|V|$-qubit dephasing model, the Server can localize its deviation to an arbitrary non-trivial trap pattern.
Detecting such a deviation in the worst case requires checking all $2^{|V|}-1$ non-trivial trap patterns.
This incurs exponential overhead and explains why applying PEC directly to the test rounds is incompatible with an efficient trap-based security guarantee.


\bibliography{main}


\end{document}